\documentclass{scrartcl}

\usepackage{microtype}

\usepackage[sort,numbers]{natbib}
\bibliographystyle{abbrvnat}

\usepackage{fullpage}
\usepackage[pagebackref]{hyperref}
\usepackage{amsthm}

\usepackage{amsmath}
\usepackage{amssymb}
\usepackage{comment}
\usepackage{amsfonts}

\usepackage{tikz}
\usetikzlibrary{positioning,backgrounds,patterns,calc,shapes}
\tikzstyle{vertex}=[circle,draw=black,minimum size=8pt,inner sep=1pt]
\tikzstyle{vertex2}=[circle,draw=black,minimum size=15pt,inner sep=2pt]
\tikzstyle{edge}=[]
\tikzstyle{ypath}=[ultra thick]
\tikzstyle{dottedEdge}=[dotted,thick]
\tikzstyle{small-vertex}=[circle,draw=black,minimum size=6pt,inner sep=0pt,fill=white]
\tikzstyle{thinedges}=[draw=gray!30]
 \tikzstyle{boxes}=[draw,thick, rounded corners=3mm,text width=2.7cm,align=center,text opacity=1,fill opacity=1,fill=white]
\tikzstyle{unk}=[fill=gray!25!white]

\pgfdeclarelayer{bg}
\pgfsetlayers{bg,main} 

\usepackage{bbding}
\usepackage{placeins}

{\bfseries}{\normalfont}
\theoremstyle{plain}
\newtheorem{theorem}{Theorem}{\bfseries}{\normalfont}
\newtheorem{obs}{Observation}{\bfseries}{\normalfont}
\newtheorem{cor}{Corollary}{\bfseries}{\normalfont}
\newtheorem{prop}{Proposition}{\bfseries}{\normalfont}
\newtheorem{lem}{Lemma}{\bfseries}{\normalfont}

\theoremstyle{definition}
{\bfseries}{\normalfont}

\DeclareMathOperator{\tw}{tw}

\definecolor{ourblue}{RGB}{135,206,250}
\definecolor{ourgreen}{RGB}{0,100,0}
\definecolor{ourred}{RGB}{176,23,31}
\definecolor{lipicsyellow}{rgb}{0.99,0.78,0.07}

\newcommand{\decprob}[3]{%
  \begin{center}%
    \begin{minipage}{0.9\linewidth}%
      \textsc{#1}\\
      \textbf{Input:} #2\\
      \textbf{Question:} #3
    \end{minipage}%
  \end{center}%
}

\usepackage{microtype,ellipsis}

\usepackage{units}

\usepackage[numbers,sort]{natbib}
\setlength{\bibsep}{0.0pt}

\makeatletter
\def\NAT@spacechar{~}
\makeatother

\usepackage{paralist}

\usepackage{enumitem}
\usepackage[capitalize]{cleveref}

\newlist{propertylist}{enumerate}{10}
\setlist[propertylist]{label=\roman*),leftmargin=*}
\crefname{propertylisti}{Property}{Properties}

\newcommand{\MCC}{\textsc{Multicolored Clique}\xspace}

\newcommand{\CVC}{\textsc{Cubic Vertex Cover}\xspace}
\newcommand{\VC}{\textsc{Vertex Cover}\xspace}
\newcommand{\FVS}{\textsc{Feedback Vertex Set}\xspace}
\newcommand{\CKC}{\textsc{Collapsed $k$-Core}\xspace}
\newcommand{\DVD}{\textsc{$r$-Degenerate Vertex Deletion}\xspace}

\newcommand{\coNP}{\ensuremath{\textsf{coNP}}\xspace}
\newcommand{\NP}{\ensuremath{\textsf{NP}}\xspace}
\newcommand{\pNP}{\ensuremath{\textsf{para-NP}}\xspace}

\newcommand{\FPT}{\ensuremath{\textsf{FPT}}\xspace}
\newcommand{\W}[1]{\ensuremath{\textsf{W[#1]}}\xspace}

\crefname{rrule}{Rule}{Rules}

\DeclareRobustCommand{\NoKernelAssume}{$\NP\subseteq \text{\coNP/poly}$}

\graphicspath{{images/}}

\crefname{lem}{Lemma}{Lemmata}
\Crefname{lem}{Lem}{Lems}
\crefname{proposition}{Proposition}{Proposition}
\Crefname{proposition}{Prop}{Props}
\crefname{theorem}{Theorem}{Theorems}
\Crefname{theorem}{Thm}{Thms}
\crefname{cor}{Corollary}{Corollaries}
\Crefname{cor}{Cor}{Cors}
\crefname{observation}{Observation}{Obervations}
\Crefname{observation}{Obs}{Obs}
\crefname{definition}{Definition}{Definitions}
\Crefname{definition}{Def}{Defs}
\crefname{figure}{Figure}{Figures}
\Crefname{figure}{Fig}{Figs}
\crefname{section}{Section}{Sections}
\Crefname{section}{Sec}{Secs}

\usepackage{etoolbox}
\usepackage[linesnumbered, ruled]{algorithm2e}

\usepackage{authblk}

\title{A Parameterized Complexity View on Collapsing~$k$-Cores}

\author[1,2,3]{Junjie Luo\thanks{Supported by
CAS-DAAD Joint Fellowship Program for Doctoral Students of UCAS.}}

\affil[1]{\small Institut f\"ur Softwaretechnik und Theoretische Informatik, TU Berlin, {Berlin, Germany}, 
 \texttt{junjie.luo@campus.tu-berlin.de, h.molter@tu-berlin.de}}
\affil[2]{\small Academy of Mathematics and Systems Science, Chinese Academy of Sciences, Beijing, China}
\affil[3]{\small School of Mathematical Sciences, University of Chinese Academy of Sciences, Beijing, China}

\author[1]{Hendrik~Molter\thanks{Supported by the DFG, project MATE (NI 369/17).}}

\author[4]{Ond\v rej~Such\'{y}\thanks{Supported by grant 17-20065S of the Czech Science Foundation.}}
\affil[4]{\small Department of Theoretical Computer Science, Faculty of Information Technology, Czech Technical University in Prague, Prague, Czech~Republic, \texttt{ondrej.suchy@fit.cvut.cz}}

\date{}

\begin{document}

\maketitle

\begin{abstract}
We study the \NP-hard graph problem \CKC\ where, given an undirected graph~$G$ and integers~$b$,~$x$, and~$k$, we are asked to remove~$b$ vertices such that the $k$-core of remaining graph, that is, the (uniquely determined) largest induced subgraph with minimum degree $k$, has size at most~$x$. \CKC\ was introduced by Zhang et al.~[AAAI~2017] and it is motivated by the study of engagement behavior of users in a social network and measuring the resilience of a network against user drop outs. \CKC\ is a generalization of \DVD\ (which is known to be \NP-hard for all $r\ge0$) where, given an undirected graph~$G$ and integers~$b$ and~$r$, we are asked to remove~$b$ vertices such that the remaining graph is $r$-degenerate, that is, every its subgraph has minimum degree at most $r$. 

We investigate the parameterized complexity of \CKC\ with respect to the parameters $b$, $x$, and $k$, and several structural parameters of the input graph. We reveal a dichotomy in the computational complexity of \CKC\ for $k\le 2$ and $k\ge 3$. For the latter case it is known that for all $x\ge 0$ \CKC\ is \W{P}-hard when parameterized by $b$. We show that \CKC\ is \W{1}-hard when parameterized by $b$ and in \FPT\ when parameterized by $(b+x)$ if $k\le 2$. Furthermore, we show that \CKC\ is in \FPT\ when parameterized by the treewidth of the input graph and presumably does not admit a polynomial kernel when parameterized by the vertex cover number of the input graph.

\bigskip

\noindent \textbf{Keywords:} $r$-Degenerate Vertex Deletion, Feedback Vertex Set, Fixed-Parameter Tractability, Kernelization Lower Bounds, Graph Algorithms, Social Network Analysis
\end{abstract}

\section{Introduction}
In recent years, modelling user engagement in social networks has received substantial interest~\cite{zhang2017finding,bhawalkar2015preventing,chitnis2013preventing,wang2016recommending,garcia2013social,wu2013arrival}. A popular assumption is that a user engages in a social network platform if she has at least a certain number of contacts, say $k$, on the platform. Further, she is inclined to abandon the social network if she has less than $k$ contacts~\cite{zhang2017finding,bhawalkar2015preventing,chitnis2013preventing,malliaros2013stay,garcia2013social,chitnis2016parameterized}. In compliance with this assumption, a suitable graph-theoretic model for the ``stable'' part of a social network is the so-called \emph{$k$-core} of the social network graph, that is, the largest induced subgraph with minimum degree $k$~\cite{Sei83}.\footnote{Note that the $k$-core of a graph is uniquely determined.} 

Now, given a stable social network, that is, a graph with minimum degree $k$, the departure of a user decreases the degree of her neighbors in the graph by one which then might be smaller than $k$ for some of them. Following our assumption these users now will abandon the network, too. This causes a cascading effect of users dropping out (collapse) of the network until a new stable state is reached.

From an adversarial perspective a natural question is how to maximally destabilize a competing social network platform by compelling $b$ users to abandon the network. This problem was introduced as \CKC\ by Zhang et al.~\cite{zhang2017finding} and the decision version is formally defined as follows.
\decprob{\CKC}{An undirected graph $G=(V,E)$, and integers $b$, $x$, and
$k$.}{Is there a set $S\subseteq V$ with $|S|\le b$ such that the $k$-core of
$G-S$ has size at most $x$?}
In the mentioned motivation, one would aim to minimize $x$ for a given $b$ and $k$. Alternatively, we can also interpret this problem as a measure for resilience agains user drop outs of a social network by determining the smallest $b$ for a given $k$ and $x$.

\subparagraph{Related Work.} In 2017 Zhang et al.~\cite{zhang2017finding} showed that \CKC\ is \NP-hard for any $k\ge 1$ and gave a greedy algorithm to compute suboptimal solutions for the problem. However, for $x=0$ and any fixed $k$, solving \CKC\ is equivalent to finding $b$ vertices such that after removing said vertices, the remaining graph is $(k-1)$-degenerate\footnote{A graph $G$ is $r$-degenerate if every subgraph of $G$ has a vertex with degree at most $r$~\cite{diestel2016graphentheory}.}.
This problem is known as \DVD\ and it is defined as follows.
\decprob{\DVD}
{An undirected graph $G=(V,E)$, and integers $b$ and $r$.}%
{Is there a set $S\subseteq V$ with $|S|\le b$ such that $G-S$ is $r$-degenerate?}
It is easy to see that \CKC\ is a generalization of \DVD. In 2010 Mathieson~\cite{mathieson2010parameterized} showed that \DVD\ is \NP-complete and \W{P}-complete when parameterized by the budget $b$ for all~$r\ge2$ even if the input graph is already $(r+1)$-degenerate and has maximum degree $2r+1$. In the mid-90s Abrahamson et al.~\cite{AbrahamsonDF95} already claimed \W{P}-completeness for \DVD\ with $r=2$ when parameterized by $b$ under the name \textsc{Degree 3 Subgraph Annihilator}. For $r=1$, the problem  is equivalent to \textsc{Feedback Vertex Set} and for~$r=0$  it is equivalent to \textsc{Vertex Cover}, both of which are known to be \NP-complete and fixed-parameter tractable when parameterized by the solution size~\cite{GJ79,cygan2015parameterized}. The aforementioned results concerning \DVD\ in fact imply the hardness results shown by Zhang et al.~\cite{zhang2017finding} for \CKC.

\subparagraph{Our Contribution.} We complete the parameterized complexity landscape of \CKC\ with respect to the parameters $b$, $k$, and $x$. Specifically, we correct errors in the literature~\cite{mathieson2010parameterized,AbrahamsonDF95} concerning the \W{P}-completeness of \DVD\ when parameterized by $b$ for $r=2$. We go on to clarify the parameterized complexity of \CKC\ for $k\le 2$ by showing \W{1}-hardness for parameter $b$ and fixed-parameter tractability for the combination of $b$ and $x$. Together with previously known results, this reveals a dichotomy in the computational complexity of \CKC\ for $k\le 2$ and~$k\ge 3$. 

We present two single exponential linear time \FPT\ algorithms, one for \CKC\ with $k=1$ and one for $k=2$. In both cases the parameter is $(b+x)$. In particular, the algorithm for $k=2$ runs in $O(1.755^{x+4b}\cdot n)$ time which means that it solves \textsc{Feedback Vertex Set} in $O(9.487^b\cdot n)$ time (here, $b$ is the solution size of \textsc{Feedback Vertex Set}). To the best of our knowledge, despite of its simplicity our algorithm improves many previous linear time parameterized algorithm for \textsc{Feedback Vertex Set}~\cite{lokshtanov2018linear,guo2006compression}. However, recently a linear time computable polynomial kernel for \textsc{Feedback Vertex Set} was shown~\cite{Iwata17}, which together with e.g.~\cite{kociumaka2014faster} yields an even faster linear time parameterized algorithm.

Furthermore, we conduct a thorough parameterized complexity analysis with respect to structural parameters of the input graph. On the positive side, we show that \CKC\ is fixed-parameter tractable when parameterized by the treewidth of the input graph and show that it presumably does not admit a polynomial kernel when parameterized by either the vertex cover number or the bandwidth of the input graph. We also show that the problem is fixed-parameter tractable when parameterized by the combination of the cliquewidth of the input graph and $b$ or $x$.
Further results include \W{1}-hardness when parameterized by the clique cover number of the input graph and \pNP-hardness for the domination number of the input graph.

\section{Hardness Results from the Literature}
\label{sec:hardness}
In this section, we gather and discuss known hardness results for \CKC. 
Recall that that \CKC\ with~$x=0$ is the same problem as \DVD\ with $r=k-1$.
Hence, the hardness of \CKC\ was first established by Mathieson~\cite{mathieson2010parameterized} who showed that \DVD\ is \NP-complete and \W{P}-complete when parameterized by the budget $b$ for all~$r\ge2$ even if the input graph is already $(r+1)$-degenerate and has maximum degree $2r+1$. However, in the proof of Mathieson~\cite{mathieson2010parameterized} the reduction is incorrect for the case that $r=2$. Abrahamson et al.~\cite{AbrahamsonDF95} claim \W{P}-completeness for $r=2$ but their reduction is also flawed. In the following, we provide counterexamples for both cases. Then we show in the following how to adjust the reduction of Mathieson~\cite{mathieson2010parameterized} for this case.

\subparagraph{Counterexample for Mathieson's Reduction~\cite{mathieson2010parameterized}.} We refer to the original paper by Mathieson~\cite{mathieson2010parameterized} for definitions, notation, and description of the gadgets and the reduction itself. Mathieson provides a reduction from \textsc{Cyclic Monotone Circuit Activation} to \DVD~\cite[Theorem~4.4]{mathieson2010parameterized} showing that \DVD\ is \W{P}-complete when parameterized by the budget~$b$ for all~$r\ge2$ even if the input graph is already $(r+1)$-degenerate and has maximum degree~$2r+1$. However, for the case of $r=2$ it is easy to see that the OR gadget is already 2-degenerate. We illustrate the flawed OR gadget for $r=2$ in \autoref{fig:mathiesonOR}.
\begin{figure}[t]
\begin{center}
\begin{tikzpicture}[every node/.style={draw, fill=gray, circle}, line width=.6pt]
 \node[label={[label distance=0pt]west:$v^i$}] (1) at (0,4.5) {};
 \node (2) at (0,3.5) {};
 \node[fill=ourred] (3) at (0,2) {};
 \node[label={[label distance=0pt]west:$v^o$}] (4) at (0,1) {};
 \draw[dashed] (-1,5.5) -- (1);
 \draw[dashed] (1,5.5) -- (1);
 \draw[dashed] (-1,0) -- (4);
 \draw[dashed] (1,0) -- (4);
 \draw (1) -- (2);
 \draw (3) -- (4);
 \draw[color=gray] (2) -- (3);
 \path (2) edge[color=gray, bend right] (4);
\end{tikzpicture}
\end{center}
\caption{Illustration of the OR gadget in the reduction of Mathieson~\cite[Theorem~4.4]{mathieson2010parameterized} for $r=2$. Note that the red colored vertex has degree 2 and the whole gadget is 2-degenerate.}\label{fig:mathiesonOR}
\end{figure}
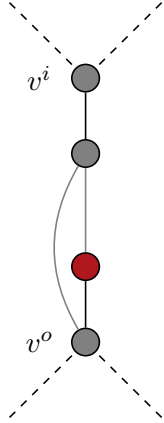

This means that whenever a \textsc{Cyclic Monotone Circuit Activation} instance is activated by the set of all its binary OR gates, the graph produced by the reduction (for $r=2$) is already 2-degenerate, which clearly makes the reduction incorrect. An example of such an instance would be the one given by Mathieson~\cite{mathieson2010parameterized}. We reproduce the example in \autoref{fig:mathiesonExample}.

\begin{figure}[t]
\begin{center}
\begin{tikzpicture}[every node/.style={draw, circle,minimum width=1.1cm},line width=.6pt]
 \node[label={[label distance=-10pt]west:$c$}] (1) at (-2,-2) {AND};
 \node[label={[label distance=-10pt]west:$a$}] (2) at (0,-.8) {OR};
 \node[label={[label distance=-10pt]east:$d$}] (3) at (2,-2) {OR};
 \node[label={[label distance=-10pt]north:$b$}] (4) at (0,1) {AND};
 \draw[->] (1) -- (4);
 \draw[->] (3) -- (1);
 \draw[->] (3) -- (4);
 \draw[->] (4) -- (2);
 \draw[->] (2) -- (1);
 \draw[->] (2) -- (3);
\end{tikzpicture}
\end{center}
\caption{An example instance of \textsc{Cyclic Monotone Circuit Activation} given by Mathieson~\cite{mathieson2010parameterized}. The sets $\{a\}$ and $\{b\}$ activate the entire circuit, whereas the sets $\{c\}$ and $\{d\}$ do not. In particular, the set $\{a,d\}$ of all OR~gates activates the entire circuit.}\label{fig:mathiesonExample}
\end{figure}
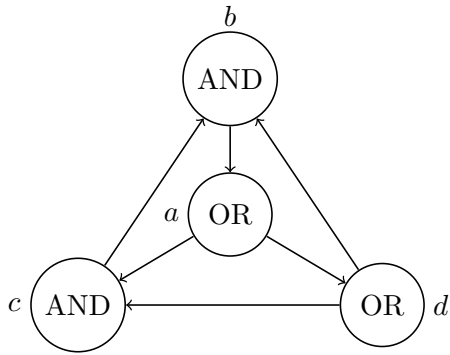

We describe how to repair the reduction (for $r=2$) in the proof of \autoref{thm:mathieson}.
\subparagraph{Counterexample for Abrahamson et al.'s Reduction~\cite{AbrahamsonDF95}.} We refer to the original paper by Abrahamson et al.~\cite{AbrahamsonDF95} for definitions, notation, and description of the gadgets and the reduction itself. The same reduction (using the same notation) can also be found in the book ``Fundamentals of Parameterized Complexity'' by Downey and Fellows~\cite{DF13}. Abrahamson et al.\ provide a reduction from \textsc{Weighted Monotone Circuit Satisfiability} to \textsc{Degree~3 Subgraph Annihilator}, which is equivalent to \DVD\ with~$r=2$. With this reduction they claim to show that \DVD\ with $r=2$ is \W{P}-complete when parameterized by the budget~$b$~\cite[Theorem~3.7~(ii)]{AbrahamsonDF95}. 

The main idea of their reduction is that once a satisfying assignment is found, all variable gadgets corresponding to variables that are set to false are also removed. However, since in a variable gadget for a variable $x[i]$, the vertex $v(i,4)$ has high degree, it is only removed if sufficiently many of the gate gadgets that it is connected to are also removed. However, an AND gadget combined with a fan-out gadget is only removed if both inputs are removed. This follows from fan-out gadget not being removable from below. This allows us to create a counterexample with $k=1$, which we illustrate in \autoref{fig:abrahamsonExample}. It is easy to check that~$x[1]=x[2]=\text{false}, x[3]=\text{true}$ is the only satisfying assignment that has at most one variable set to true. Furthermore, the AND gates in the red area have two outgoing connections each, hence the corresponding AND gadgets have fan-out gadgets attached to them. Initially, the output of these gates is false for the satisfying assignment so the fan-out gadget attached to them are only removed if the AND gadgets themselves are removed. An AND gadget is removed if both of its inputs are removed. However, note that the variable gadget for~$x[1]$ is not completely removed after~$v(3,4)$ is removed from the graph. In particular, the vertex~$v(1,4)$ has still degree $m(k+1)>3$ after all vertices are removed since the AND gadgets it connects to are not removed. It follows that the graph is not 2-degenerate after removing~$v(3,4)$ and hence the reduction is not correct.

We believe that the reduction can be corrected by replacing the high degree vertices~$v(i,4)$ by fan-out gadgets that connect to the gate gadgets. However, we omit a proof of this claim.
\begin{figure}[t]
\begin{center}
\tikzstyle{gate}=[draw, circle,minimum width=.6cm,fill=white]
\begin{tikzpicture}[line width=.6pt]
\node[rectangle, fill=ourred!40, draw=ourred, draw, minimum width=3cm, minimum height=1cm, rounded corners=2mm] (gg) at (0, -1.5) {};
\node (A) at (-2,0) {$x[1]$};
\node (B) at (0,0) {$x[2]$};
\node (C) at (2,0) {$x[3]$};
\node[gate] (g1) at (-1,-1.5) {$\wedge$};
\node[gate] (g2) at (1,-1.5) {$\wedge$};
\node[gate] (g3) at (-1,-3) {$\wedge$};
\node[gate] (g4) at (1,-3) {$\vee$};
\node[gate] (g5) at (1.5,-4) {$\vee$};
\node[gate] (g6) at (0,-5) {$\vee$};
\node (O) at (0,-6) {output};
\draw[->] (A) -- (g1);
\draw[->] (A) -- (g2);
\draw[->] (B) -- (g1);
\draw[->] (C) -- (g2);
\draw[->] (g1) -- (g3);
\draw[->] (g1) -- (g4);
\draw[->] (g2) -- (g3);
\draw[->] (g2) -- (g4);
\draw[->] (C) -- (g5);
\draw[->] (g4) -- (g5);
\draw[->] (g3) -- (g6);
\draw[->] (g5) -- (g6);
\draw[->] (g6) -- (O);
\node[rectangle, fill=none, draw=ourred, draw, minimum width=3cm, minimum height=1cm, rounded corners=2mm] (gg) at (0, -1.5) {};
\end{tikzpicture}
\end{center}
\caption{A \textsc{Weighted Monotone Ciruit Satisfiability} instance with $k=1$ (max.\ number of variables set to true). It is easy to check that $x[1]=x[2]=\text{false}, x[3]=\text{true}$ is the only satisfying assignment that has at most one variable set to true. The AND gates in the red area have two outgoing connections each, hence the corresponding AND gadgets have fan-out gadgets attached to them.}\label{fig:abrahamsonExample}
\end{figure}
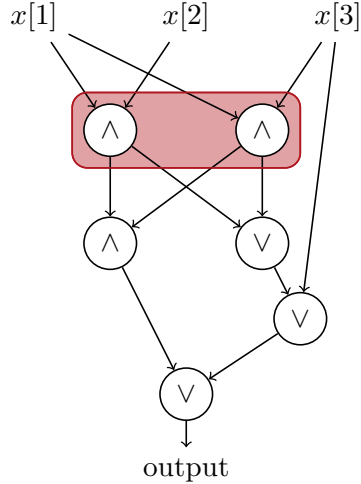

\begin{theorem}[Corrected from \cite{mathieson2010parameterized}]\label{thm:mathieson}
For any $r \ge 2$ \DVD\ is \NP-hard and \W{P}-complete when parameterized by $b$, even if the
degeneracy of the input graph is $r+1$ and the maximum degree of the input graph
is $2r+1$.
\end{theorem}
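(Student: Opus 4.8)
The plan is to keep the reduction of Mathieson~\cite{mathieson2010parameterized} from \textsc{Cyclic Monotone Circuit Activation} to \DVD\ essentially unchanged, since the counterexample above isolates the single faulty ingredient — the binary OR~gadget — and that ingredient is faulty only for $r=2$. Thus I would (i)~leave the construction as is for every $r\ge 3$, where Mathieson's gadgets already have the properties the reduction requires, and (ii)~for $r=2$, substitute a new binary OR~gadget $\mathcal{O}$, with two input connectors $t_1,t_2$ and one output connector $o$, that emulates an OR~gate under the degeneracy-peeling dynamics. Throughout, recall the correspondence underlying the reduction: ``gate $g$ becomes activated'' corresponds to ``the gadget of $g$ is entirely removed when we iteratively delete vertices of degree at most $r$ from $G-S$''.

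The replacement gadget $\mathcal{O}$ must satisfy four requirements. \emph{(P1) Stability:} while the gadgets feeding $t_1$ and $t_2$ are still present, every vertex of $\mathcal{O}$ has degree at least $r+1=3$ in the whole graph (interface vertices being propped up by their external edges), so $\mathcal{O}$ does not peel on its own. \emph{(P2) OR-firing:} if at least one of $t_1,t_2$ is peeled — equivalently, the gadget feeding it has collapsed — then the interior of $\mathcal{O}$ peels completely and $o$'s degree drops enough for $o$ to be peeled as well, while this cascade leaves the other input connector and both feeding gadgets intact (no back-propagation, no spurious activation of the second input). \emph{(P3) No collapse from above:} if only $o$ is peeled (because a successor gadget collapsed), then $\mathcal{O}$ does not peel, so in particular neither $t_1$ nor $t_2$ becomes peelable; this is what keeps the cyclic structure consistent and makes the backward direction work. \emph{(P4) Local bounds:} $\mathcal{O}$ is $3$-degenerate and every vertex of $\mathcal{O}$ has degree at most $2r+1=5$ in the whole graph. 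A gadget on a constant number of vertices whose internal degrees are tuned to exactly~$3$ realizes (P1)--(P4); to make ``a single collapsed input suffices to fire'' compatible with ``firing does not touch the other input or the feeders'', the second input connector should be held in place mainly by its edges to the feeding gadget, which in turn calls for a small change (e.g.\ extra buffer vertices) on the output side of the AND/OR~gadgets feeding $\mathcal{O}$. I would fix the precise adjacency of $\mathcal{O}$ from this criterion and then verify (P1)--(P4) by direct inspection of the constant-size gadget.

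Granting such an $\mathcal{O}$, the rest is the usual two-directional argument. For the forward direction, from an activating set $A$ of at most $b$ gates one takes $S$ to be the corresponding at most $b$ seed vertices, and an induction along the activation order — using (P2) for the OR~gadgets together with the unchanged properties of the AND, fan-out and wire gadgets — shows that the peeling process on $G-S$ deletes every vertex, so $G-S$ is $r$-degenerate. For the backward direction, one shows that any $S$ with $|S|\le b$ for which $G-S$ is $r$-degenerate can be normalized to delete only seed vertices (each misplaced deletion is charged to a seed vertex that triggers the same collapse, with (P1) and (P3) ruling out ``cheating'' collapses not coming from activated gates), whence one reads off an activating set of size at most $b$. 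The constructed graph is $(r+1)$-degenerate, since all gadgets are and they are glued along low-degree interfaces, but not $r$-degenerate, since it contains a subgraph of minimum degree $r+1$ (inside $\mathcal{O}$, or a dummy $K_{r+2}$ if the circuit has no OR~gate), and its maximum degree is exactly $2r+1$. As the reduction runs in polynomial time and maps the parameter $b$ to $b$, and \DVD\ is easily seen to be in \W{P}, this gives \W{P}-completeness and \NP-hardness for all $r\ge 2$ under the stated restrictions. The crux is step~(ii): designing a constant-size OR~gadget meeting (P1)--(P4) at once — reconciling ``one collapsed input must cascade through the whole gadget'' with ``the cascade must not reach the other input or the feeders'', all within the degree bound $2r+1=5$ — and checking that the accompanying tweak of the feeding gadgets preserves the global degeneracy and maximum-degree guarantees.
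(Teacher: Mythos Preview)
Your plan coincides with the paper's: keep Mathieson's reduction intact for all $r\ge 2$ and, for $r=2$, replace only the binary OR~gadget. The paper does this with an explicit eight-vertex gadget. Two input vertices $u_1,u_2$ (one edge each to the respective predecessor) are both adjacent to $a_1$ and to $a_2$; next $a_1$ is adjacent to $b_1$ and $a_2$ to $b_2$; finally $b_1,b_2$ are each adjacent to a vertex $c$ and to the output $v^o$, and $c$ is adjacent to $v^o$. In the assembled graph every vertex of the gadget has degree exactly~$3$ except $v^o$, which has degree $3+2=5$; the gadget is $3$-degenerate, and if either $u_i$ loses its external edge then it peels, both $a_1,a_2$ peel, and the rest of the gadget follows down to~$v^o$.

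Where you diverge from the paper is in your requirement~(P2), which is stronger than needed and is precisely what creates the tension you flag as ``the crux''. You ask that the cascade ``leave the other input connector \ldots\ intact''; the paper's gadget does \emph{not} do this---once $u_i$ peels, so do $a_1,a_2$, and hence $u_{3-i}$ as well. This is harmless in Mathieson's framework: the only external edge of $u_{3-i}$ goes to the output vertex of the non-activated predecessor, and that vertex has internal degree at least~$3$ inside its own gadget, so losing one outgoing edge cannot drop its degree to~$2$ or below. Consequently there is no back-activation of the predecessor, no need to isolate the second input, and no need for the ``small change \ldots\ on the output side of the AND/OR gadgets'' you anticipate. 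Drop the isolation clause from~(P2) and the difficulty you identify---reconciling ``one collapsed input must cascade through'' with ``the cascade must not reach the other input''---simply disappears; the eight-vertex gadget above then meets all remaining requirements by direct inspection.
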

\begin{proof}
We refer to the original paper by Mathieson~\cite{mathieson2010parameterized} for definitions, notation, and description of the gadgets and the reduction itself. Mathieson provides a reduction from \textsc{Cyclic Monotone Circuit Activation} to \DVD~\cite[Theorem~4.4]{mathieson2010parameterized} showing that \DVD\ is \W{P}-complete when parameterized by the budget~$b$ for all~$r\ge2$ even if the input graph is already $(r+1)$-degenerate and has maximum degree~$2r+1$. While Mathieson~\cite{mathieson2010parameterized} claimed the result also for $r=2$, the reduction is incorrect in this case. Similarly, while Abrahamson et al.~\cite{AbrahamsonDF95} claimed the same result for $r=2$ under the name \textsc{Degree 3 Subgraph Annihilator}, their reduction also seems to be flawed. 

However, there is a way to correct the proof of Mathieson~\cite{mathieson2010parameterized}:
The only problem is that his OR gadget is 2-degenerate, so the graph sometimes collapses without even deleting a single vertex. Before we introduce the correct gadget, note that the gadget is always used with exactly two inputs (predecessor gates).
We can replace the OR gadget for case $r=2$ with the graph illustrated in \autoref{fig:mathieson}.
\begin{figure}[t]
\begin{center}
\begin{tikzpicture}[every node/.style={draw, fill=gray, circle}, line width=.6pt]
 \node (1) at (0,4) {};
 \node (2) at (0,3) {};
 \node (3) at (0,2) {};
 \node (4) at (0,1) {};
 \node (5) at (1,4) {};
 \node (6) at (1,3) {};
 \node (7) at (1,2) {};
 \node[label={[label distance=0pt]west:$v^o$}] (8) at (1,0) {};
 \draw[dashed] (-1,5) to node[left,draw=none, fill= none,rectangle, inner sep=10pt] {\small input from predecessor 1} (1);
 \draw[dashed] (2,5) to node[right,draw=none, fill= none,rectangle, inner sep=10pt] {\small input from predecessor 2} (5);
 \draw (1) to (2);
 \draw (1) to (6);
 \draw (5) to (2);
 \draw (5) to (6);
 \draw (2) to (3);
 \draw (6) to (7);
 \draw (3) to (4);
 \draw (3) to (8);
 \draw (7) to (4);
 \draw (7) to (8);
 \draw (4) to (8);
 \draw[dashed] (8) to (0,-1);
 \draw[dashed] (8) to (2,-1);
\node[rectangle, fill=none, draw=ourred, dotted, minimum width=1.6cm, minimum height=.6cm, rounded corners=2mm,label={[label distance=0pt]west:$v^i$}] (vi) at (.5, 4) {};
\end{tikzpicture}
\end{center}
\caption{Illustration of the corrected OR gadget for $r=2$ in the proof of \autoref{thm:mathieson}. The two vertices surrounded by the red dotted line play the role of $v^i$ in the original version of the gadget.}\label{fig:mathieson}
\end{figure}
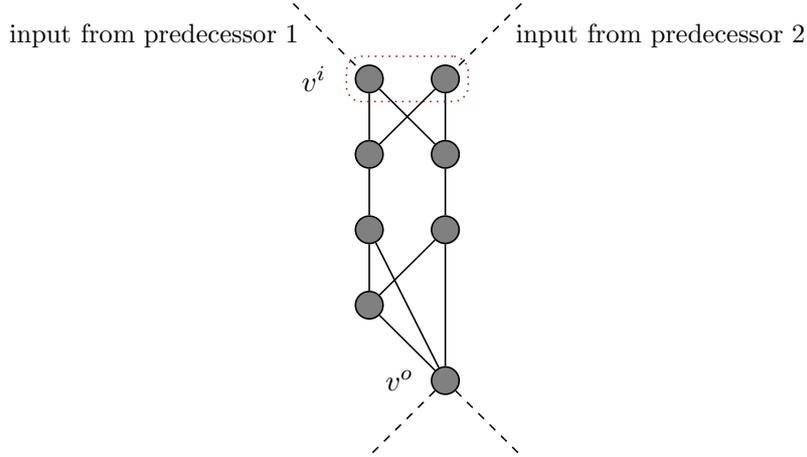
To collapse this gadget one can simply delete $v^{o}$. The correctness now follows from an analogous argument as given by Mathieson~\cite{mathieson2010parameterized}.
\end{proof}
The following observation shows that the hardness result by Mathieson~\cite{mathieson2010parameterized} (\autoref{thm:mathieson}) easily transfers to \CKC\ (also in the cases where $x\neq 0$).
\begin{obs}\label{obs:non_empty_core}
 Let $x'>0$ be a positive integer.
 There is a reduction which transforms instances $(G,b,x,k)$ of \CKC\ with $x=0$ into equivalent instances $(G',b,x',k)$ of \CKC{}.
\end{obs}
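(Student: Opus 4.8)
The plan is to let $G'$ be the disjoint union of $G$ with a clique $K_{x'}$ on $x'$ fresh vertices, and to output the instance $(G',b,x',k)$; the parameters $b$ and $k$ are left untouched, and $G'$ is clearly computable in polynomial time. The guiding intuition is that $K_{x'}$ is ``self-contained'': when $x'\ge k+1$ its own $k$-core is all of $K_{x'}$, and when $x'\le k$ the problems for threshold $0$ and threshold $x'$ coincide anyway because a non-empty $k$-core has at least $k+1$ vertices. So an adversary gains nothing by spending budget inside $K_{x'}$, and deleting a solution for the original instance leaves exactly the clique behind in the $k$-core.

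First I would isolate a few elementary properties of $k$-cores to make the argument uniform: (i) the $k$-core of a disjoint union is the disjoint union of the $k$-cores; (ii) $|k\text{-core}(K_m)|\ge m-k$ for every $m\ge 0$; (iii) $k$-cores are monotone under vertex deletion, i.e.\ $k\text{-core}(F-W)\subseteq k\text{-core}(F)$, since the $k$-core of $F-W$ induces a subgraph of minimum degree at least $k$ in the induced subgraph $F-W$ of $F$, hence in $F$; and (iv) a non-empty $k$-core has at least $k+1$ vertices. Properties (i)--(iv) are all immediate, and the exact formulation of (ii) is chosen precisely so that no case distinction on $x'$ versus $k$ is needed later.

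The forward direction is then short: if $|S|\le b$ and $k\text{-core}(G-S)=\emptyset$, then by (i) we have $k\text{-core}(G'-S)=k\text{-core}(G-S)\uplus k\text{-core}(K_{x'})=k\text{-core}(K_{x'})$, which has at most $x'$ vertices, so $S$ also witnesses $(G',b,x',k)$. For the converse, I would take a witness $S'$ for $(G',b,x',k)$, set $S_G=S'\cap V(G)$ and $j=|S'\setminus V(G)|$ (so $j\le x'$ and $|S_G|\le b-j$), and let $C=k\text{-core}(G-S_G)$. Since $G'-S'=(G-S_G)\uplus K_{x'-j}$, properties (i) and (ii) give $x'\ge |k\text{-core}(G'-S')|=|C|+|k\text{-core}(K_{x'-j})|\ge |C|+(x'-j-k)$, whence $|C|\le j+k$. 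Choosing $C'\subseteq C$ with $|C\setminus C'|\le k$ (possible since then $|C'|\le |C|-k\le j$) and setting $S=S_G\cup C'$, we get $|S|\le (b-j)+j=b$, and by (iii) and (iv), $k\text{-core}(G-S)=k\text{-core}((G-S_G)-C')\subseteq C\setminus C'$ has at most $k$ vertices and hence is empty; so $S$ witnesses $(G,b,0,k)$.

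The step I expect to require the most care is the last one: after deleting the carefully sized subset $C'$ of the $k$-core of $G-S_G$, one must be sure that no \emph{new} $k$-core can surface elsewhere in $G-S$. This is exactly where monotonicity (iii) is essential — it confines any would-be $k$-core to $C\setminus C'$ — after which (iv) finishes the job. Everything else is a budget count, together with a sanity check of the degenerate regimes ($x'\le k$, or $j$ as large as $x'$), which the formulation of (ii) absorbs automatically.
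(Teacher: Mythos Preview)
Your proof is correct and uses the same construction as the paper (disjoint union with a clique on $x'$ vertices), but the argument is organized differently and, in fact, more cleanly. The paper splits into two cases up front: for $x'\le k$ it takes $G'=G$ and invokes property~(iv) directly, while for $x'\ge k+1$ it adds the clique and, in the backward direction, further branches on whether the surviving part of the clique lies in the $k$-core of $G'\setminus S'$ or not. You instead treat all cases uniformly by isolating the inequality $|k\text{-core}(K_m)|\ge m-k$; this single bound absorbs both the ``small~$x'$'' regime and the ``clique collapsed'' sub-case of the paper, and reduces the backward direction to one budget count followed by monotonicity plus property~(iv). The underlying idea---redirect the $j$ units of budget spent inside the clique toward deleting vertices of the residual $k$-core $C$ in $G$---is the same in both proofs; your packaging just avoids the case distinctions. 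The one place worth tightening is the parenthetical ``possible since then $|C'|\le |C|-k\le j$'': when $|C|\le k$ one takes $C'=\emptyset$, so it would read more smoothly to say you choose $|C'|=\max(0,|C|-k)$.
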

\begin{proof}
 We distinguish two cases, depending on the relation of $x'$ to $k$.
 
 If $x'\le k$, then we let $G'=G$. Obviously, if $S$ is a solution for $(G,b,x,k)$, then it is also a solution for $(G',b,x',k)$. On the other hand, if $S'$ is a solution for $(G',b,x',k)$, then $G' \setminus S' = G \setminus S'$ has a $k$-core with at most $k$ vertices. However, any vertex of a graph with at most $k$ has degree at most $k-1$ and, thus, the $k$-core is empty. Therefore $S'$ is a solution for $(G,b,x,k)$.
 
 If $x' \ge k+1$, then we obtain $G'$ as a disjoint union of $G$ and a clique $C$ on $x'$ vertices.
 Again obviously, if $S$ is a solution for $(G,b,x,k)$, then it is also a solution for $(G',b,x',k)$, since the $k$-core of $G'\setminus S$ is exactly $C$. 
 
 Now let $S'$ be a solution for $(G',b,x',k)$. Note that if one vertex of $C \setminus S'$ is part of the $k$-core of $G \setminus S'$, then all vertices of $C \setminus S'$ are. If indeed $C \setminus S'$ is a part of the $k$-core of $G \setminus S'$, then the $k$-core contains at most $C \cap S'$ other vertices. If $Q$ is the set of vertices in the $k$-core of $G \setminus (S' \cup C)= G \setminus S'$, then $Q \cup (S' \setminus C)$ is a solution for $(G,b,x,k)$.
 
 Now if $C \setminus S'$ is not a part of the $k$-core of $G \setminus S'$, then we know that $|C \cap S'|$ vertices are sufficient to collapse a clique of size $x'$. Since the $k$-core of $G' \setminus S'$, which is the same as the $k$-core of $G \setminus S'$ is of size at most $x'$, there is a set $Q$ of vertices at most $|C \cap S'|$ such that the $k$-core of $G \setminus (S' \cup Q)$ is empty. Hence $(S' \cup Q)$ is a solution for $(G,b,x,k)$. 
\end{proof}
With that, we arrive at the following corollary.
\begin{cor}\label{cor:whardness}
\CKC\ is \NP-hard and \W{P}-hard when parameterized by $b$ for all $x\ge 0$ and $k\ge 3$, even if the
degeneracy of the input graph is $\max\{k,x-1\}$ and the maximum degree of the input graph
is $\max\{2k-1,x-1\}$.
\end{cor}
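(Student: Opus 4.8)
The plan is to derive \autoref{cor:whardness} by combining \autoref{thm:mathieson} with \autoref{obs:non_empty_core}. First I would recall that, as noted in the introduction, \CKC\ with $x=0$ and $k=r+1$ is literally the same problem as \DVD\ with degeneracy parameter $r$. Hence, applying \autoref{thm:mathieson} with $r=k-1$ (which forces $r\ge 2$, i.e.\ $k\ge 3$) immediately yields that \CKC\ is \NP-hard and \W{P}-hard when parameterized by $b$ for $x=0$ and every $k\ge 3$, on instances whose input graph has degeneracy $r+1=k$ and maximum degree $2r+1=2k-1$. Since for $x=0$ we have $x-1=-1<k<2k-1$, these bounds coincide with $\max\{k,x-1\}$ and $\max\{2k-1,x-1\}$, so the case $x=0$ is complete.

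For $x\ge 1$, I would invoke the polynomial-time reduction of \autoref{obs:non_empty_core}, which maps an instance $(G,b,0,k)$ to an equivalent instance $(G',b,x,k)$ while leaving $b$ and $k$ unchanged; in particular this is a parameterized reduction for parameter $b$, so both \NP-hardness and \W{P}-hardness are inherited from the $x=0$ case. It then remains only to check that $G'$ respects the claimed degeneracy and maximum-degree bounds, which I would do by following the two cases of the construction. If $x\le k$, then $G'=G$ and the bounds are exactly those from \autoref{thm:mathieson}, namely degeneracy $k$ and maximum degree $2k-1$, which agree with $\max\{k,x-1\}$ and $\max\{2k-1,x-1\}$ because $x-1\le k-1<k$. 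If $x\ge k+1$, then $G'$ is the disjoint union of $G$ with a clique on $x$ vertices, which has degeneracy and maximum degree both equal to $x-1\ge k$; hence the degeneracy of $G'$ is $\max\{k,x-1\}=x-1$ and its maximum degree is $\max\{2k-1,x-1\}$, exactly as required.

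There is essentially no obstacle here: the corollary follows by plugging together two results already established in this section, and the only genuine work is the routine bookkeeping of degeneracy and maximum degree across the two cases of \autoref{obs:non_empty_core} — in particular, observing that in the clique case it is precisely the appended clique on $x$ vertices that realizes (and matches) the $x-1$ terms appearing in the stated bounds. I would also remark briefly that we claim only \W{P}-hardness, not \W{P}-completeness, for general $x$, as membership in \W{P} is neither needed nor addressed by \autoref{obs:non_empty_core}.
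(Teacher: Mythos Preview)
Your proposal is correct and takes exactly the approach the paper intends: the corollary is stated immediately after \autoref{obs:non_empty_core} with the remark ``With that, we arrive at the following corollary,'' i.e., it is meant to follow by combining \autoref{thm:mathieson} (via $r=k-1$) with \autoref{obs:non_empty_core}. Your case analysis for the degeneracy and maximum-degree bounds across the two cases of \autoref{obs:non_empty_core} is the routine verification the paper leaves implicit, and it is carried out correctly.
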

Note that \DVD\ is known to be \NP-hard for all $r\ge 0$.\footnote{\autoref{thm:mathieson} states \NP-hardness of \DVD\ for $r\ge 2$. Recall that for $r=1$ \DVD\ is equivalent to \textsc{Feedback Vertex Set} and for $r=0$ it is equivalent to \textsc{Vertex Cover}, both of which are known to be \NP-hard~\cite{GJ79}.} Hence, we also know that \CKC\ is \NP-hard for $k\le 2$ and all $x\ge 0$. However, the parameterized complexity with respect to $b$ is open in this case. We settle this in the next section.

\section{Algorithms for $k=1$ and $k=2$}\label{sec:smallk}
In this section we investigate the parameterized complexity of \CKC\ for the case that $k\le 2$. Since \cref{cor:whardness} only applies for $k\ge 3$ we first show in the following that the problem is \W{1}-hard with respect to the combination of $b$ and $(n-x)$ for all $k\ge 1$. 
Furthermore, we present two algorithms; one that solves \CKC\ with $k=1$ and one for the $k=2$ case. Both algorithm run in single exponential linear \FPT-time with respect to the parameter combination $(b+x)$. 

We first give a parameterized reduction from \textsc{Clique} to \CKC. Note that since this hardness result holds for the combination of $b$ and the dual parameter of $x$, it is incomparable to \cref{cor:whardness} for~$k\ge 3$.
\begin{prop}\label{thm:whardsmallk}
\CKC\ is \W{1}-hard when parameterized by the combination of $b$ and $(n-x)$ for all $k\ge1$, even if the
input graph is bipartite and $\max\{2,k\}$-degenerate.
\end{prop}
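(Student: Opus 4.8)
The plan is to design a parameterized reduction from \textsc{Clique}. Given an instance $(H,\ell)$ of \textsc{Clique} with $H=(V_H,E_H)$, I will construct an instance $(G,b,x,k)$ of \CKC\ so that some $S\subseteq V(G)$ with $|S|\le b$ reduces the $k$-core of $G-S$ to size at most $x$ exactly when $S$ corresponds to an $\ell$-clique of $H$. Both $b$ and $n-x$ will be bounded by a function of $\ell$ (and of $k$, which is a constant throughout), which yields the claimed \W{1}-hardness, and $G$ will be bipartite and $\max\{2,k\}$-degenerate by construction.

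$G$ will consist of three kinds of gadgets, all bipartite and individually $\max\{2,k\}$-degenerate. For every $v_i\in V_H$ there is a \emph{switch} vertex $u_i$ carrying a private \emph{anchor} $A_i$: $A_i$ is chosen (for $k=1$ as $b{+}1$ pendant neighbours of $u_i$, and for $k\ge2$ as a biclique $K_{k,\,b+k}$ with $u_i$ on the side of size $k$) so that $A_i\cup\{u_i\}$ lies entirely in the $k$-core of $G$, so that $u_i$ has more than $b$ neighbours inside $A_i$, and so that $A_i$ collapses completely once $u_i$ is removed. For every edge $e=\{v_i,v_j\}\in E_H$ there is an \emph{edge gadget} $E_e$ attached to $G$ only through $u_i$ and $u_j$: internally $E_e$ has an empty $k$-core, but it contains two ``port'' vertices of internal degree $k-1$, each joined to both $u_i$ and $u_j$ (for $k=1$ a port is just a vertex adjacent to $u_i$ and $u_j$; for $k=2$ $E_e$ is a long path with its two endpoints plugged into both $u_i$ and $u_j$; for $k\ge 3$ one uses a connected ``$k$-regular-except-at-the-ports'' bipartite graph), so that $E_e$ lies entirely in the $k$-core of $G$ as long as at least one of $u_i,u_j$ survives, but collapses in its entirety as soon as both are deleted. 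Bipartiteness is immediate (ports, anchor leaves and switches can be $2$-coloured consistently), and $\max\{2,k\}$-degeneracy holds because in every subgraph the ports (degree at most $2$ toward switches) and the almost-$k$-regular internal vertices always provide a vertex of degree at most $\max\{2,k\}$. Finally I set $b:=\ell$, make each $|E_e|$ a sufficiently large polynomial in $\ell$, and put $x:=n-\ell\bigl(1+|A_i|\bigr)-\binom{\ell}{2}|E_e|$ (with $|A_i|$, $|E_e|$ the uniform gadget sizes), so that $b+(n-x)$ is a function of $\ell$ and $k$ only.

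The forward direction is straightforward: for an $\ell$-clique $I$, deleting $S=\{u_i : v_i\in I\}$ destroys exactly the $\ell$ anchors $A_i$ with $v_i\in I$, the corresponding $\ell$ switches, and the $\binom{\ell}{2}$ edge gadgets $E_e$ with $e\subseteq I$ (no collapse propagates out of these, since each touches the rest of $G$ only through deleted switches), leaving a $k$-core of size exactly $x$. For the converse, suppose $|S|\le\ell$ and the $k$-core of $G-S$ has size at most $x$, i.e.\ at least $n-x=\ell(1+|A_i|)+\binom{\ell}{2}|E_e|$ vertices are removed or collapsed. The heart of the argument is an accounting lemma: a switch $u_i$ is never collapsed unless $u_i\in S$; an edge gadget $E_e$ is destroyed only if both switches of $e$ are removed, or else at the cost of deletions wasted strictly inside $E_e$; an anchor $A_i$ is destroyed only if $u_i\in S$ or, again, at the cost of several deletions inside it; hence the total destruction is at most $\ell(1+|A_i|)+|E(H[S'])|\cdot|E_e|$, where $S'=\{v_i : u_i\in S\}$, and since $|E_e|$ dominates the savings of any off-switch deletion this bound is attained only when $|S'|=\ell$ and $H[S']$ is complete, i.e.\ $H$ has an $\ell$-clique.

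The step I expect to be the main obstacle is the simultaneous design and analysis of the edge gadget: it must (i) collapse precisely when both of its switches are deleted, (ii) be bipartite with both ports on the same side (which, for $k\ge 3$, clashes with exact $k$-regularity for parity reasons and needs a small, carefully placed degree correction together with a connectivity/expansion property ensuring the collapse cascades), and (iii) be $k$-degenerate and large enough for the calibration — a combination that for $k\le2$ also rules out any locally dense stabilising structure and so forces the ``port of internal degree $k-1$'' design. Making the backward accounting rigorous — in particular ruling out mixed strategies that delete a few switches, a few anchor vertices, and a few gadget interiors — is the other technically delicate point, and it is precisely what pins down how large the $|E_e|$ must be.
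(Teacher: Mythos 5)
Your high-level plan (parameterized reduction from \textsc{Clique}, with per-edge gadgets that collapse exactly when both endpoints are selected, and $n-x$ calibrated to count the collapsed gadgets) matches the paper's, but your construction is considerably heavier than the paper's and the two places you yourself flag as delicate are genuine gaps, not just loose ends. First, for $k\ge 3$ the edge gadget is never actually constructed: you postulate a connected bipartite graph that is $k$-regular except at two ports of internal degree $k-1$, whose internal $k$-core is empty, which sits entirely in the $k$-core while one switch survives, and whose $k$-core empties completely once both switches are gone. Getting the last property is the whole difficulty --- removing two vertices from a near-$k$-regular graph does not in general empty its $k$-core --- and without an explicit gadget the proof does not exist for $k\ge3$. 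The paper sidesteps this entirely with a $(2k-1)$-vertex gadget per edge $e$: $k$ vertices $u_e^1,\dots,u_e^k$, each adjacent to both endpoints of $e$ and to $k-1$ common vertices $w_e^1,\dots,w_e^{k-1}$; every $u_e^i$ has degree exactly $k+1$, so deleting both endpoints drops it to $k-1$ and the whole gadget dies, for every $k$ at once.

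Second, and more seriously, your accounting lemma is false as stated, and the proposed fix (make $|E_e|$ large) does not address the failure. For $k=2$ your edge gadget is a long path $p_1,\dots,p_m$ with $p_1,p_m$ attached to both switches; deleting a \emph{single} internal vertex $p_t$ makes $p_{t-1}$ and $p_{t+1}$ drop to degree $1$ and the cascade wipes out all of $p_2,\dots,p_{m-1}$, i.e.\ one ``wasted'' deletion destroys $m-2$ of the $m$ gadget vertices. So the claimed bound ``total destruction $\le \ell(1+|A_i|)+|E(H[S'])|\cdot|E_e|$'' is violated already by $S$ consisting of $\ell$ internal path vertices, and the savings of an off-switch deletion scale \emph{linearly with} $|E_e|$ rather than being dominated by it --- enlarging the gadget helps the cheater exactly as much as the honest solver. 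A correct backward argument has to compare $\binom{\ell}{2}-\binom{j}{2}$ against $\ell-j$ off-switch deletions (plus the anchor contribution), which is a different and more fragile computation than the one you sketch. The paper's counting avoids this entirely because its gadget has constant size $2k-1$ and the bound is exact: if $a=|S\cap V|$ and $c$ counts gadgets hit by internal deletions, then at most $(2k-1)\bigl(\binom{a}{2}+c\bigr)$ vertices leave the core while $a+c\le p$, which forces $a=p$, $c=0$, and all $\binom{p}{2}$ pairs to be edges. Note also that the paper needs no anchors or switches at all: it preprocesses the \textsc{Clique} instance so that every vertex has degree at least $p+1$, whence every undeleted original vertex automatically retains an intact incident edge gadget and stays in the core.
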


\begin{proof}
We reduce from \W{1}-hard problem \textsc{Clique}~\cite{cygan2015parameterized}, where given a graph $G=(V,E)$ and an integer $p$, the task is to decide whether $G$ contains a clique of size at least $p$. Let $(G,p)$ be an instance of \textsc{Clique} and $k$ be a given constant. We build an instance $(G',b,x,k)$ of \CKC\ as follows. We can assume that $p \le |V(G)|$, as otherwise we can output a trivial no-instance. We further assume that each vertex of $G$ has degree at least $p+1$. Vertex of degree less than $p-1$ is not part of a clique of size at least $p$, while for all vertices of degree $p-1$ or $p$ we can check in $O(n \cdot p^3)$ time whether there is a clique of size at least $p$ containing any of them.

We let $V(G')= V \cup U \cup W$, where $V=V(G)$ are the vertices of $G$, $U=\{u_e^i \mid e \in E,  i \in \{1, \ldots, k\}\}$, and $W=\{w_e^i \mid e \in E,  i \in \{1, \ldots, k-1\}\}$.
We also let $E(G')=E_U \cup E_W$, where $E_U=\{\{v, u^i_e\}\mid e \in E,  i \in \{1, \ldots, k\}, v \in e\}$, and $E_W=\{\{u^i_e,w^{i'}_e\}\mid e \in E,  i \in \{1, \ldots, k\}, i' \in \{1, \ldots, k-1\}\}$. We actually only introduce the sets $W$ and $E_W$ if $k\ge2$.

Finally we set $b=p$ and $x=n-(p+(2k-1)\binom{p}{2})$, where $n=|V(G')|$ is the number of vertices of graph $G'$.

We claim that $(G',b,x,k)$ is a yes-instance of \CKC\ if and only if $(G,p)$ is a yes-instance of \textsc{Clique}. 

$\Rightarrow:$ If $S$ is a clique of size at least $p$ in $G$, then we claim that deleting $S$ from $G'$ results in a $k$-core of size at most $x$. Let $e$ be an edge between two vertices of $S$ in $G$. Then for any $i \in \{1, \ldots, k\}$ vertex $u_e^i$ has only vertices $w_e^1,\ldots w_e^{k-1}$ as neighbors in $G'\setminus S$. Hence, it has degree $k-1$ in this graph and it is not part of the $k$-core of the graph. Since this holds for each $i$, vertices $w_e^1,\ldots w_e^{k-1}$ do not have any neighbors in the $k$-core of $G'\setminus S$ and, hence, they are also not in the $k$-core of the graph. This makes $2k-1$ vertices per each edge of the clique which are not deleted and not in the $k$-core, showing that the size of the $k$-core is at most $x$.

$\Leftarrow:$ Now let $S$ be a set of vertices of $V(G')$ of size at most $b$ such that $G'\setminus S$ has $k$-core of size at most $x$. 
For $e \in E$ let $UW_e=\{u_e^i \mid  i \in \{1, \ldots, k\}\} \cup \{w_e^i \mid i \in \{1, \ldots, k-1\}\}$.
Let $S_E = \{e \in E \mid S \cap UW_e \neq \emptyset\}$. Note that if $e =\{x,y\}$, $e \notin S_E$, and $|S \cap \{x,y\}|\le 1$, then whole $UW_e \cup \{x,y\} \setminus S$ is in the $k$-core of $G'\setminus S$ as each vertex in $UW_e \cup \{x,y\} \setminus S$ has at least $k$ neighbors in $UW_e \cup \{x,y\} \setminus S$. This means that 
each vertex in $V \setminus S$ is in the $k$-core of $G'\setminus S$. Indeed, for an arbitrary vertex $v$ in $V \setminus S$ the degree of $v$ in $G$ is at least $p+1$ and, thus,  there is at least one edge $e$ incident to $v$ which is not in $S_E$. Hence $v$ is in the $k$-core of $G'\setminus S$ by the above argument.

For $e \in S_E$ possibly no vertex of $UW_e$ is in the $k$-core of $G' \setminus S$, effectively shrinking it by $2k-1$ vertices. However, this does not influence the other vertices in $V$, $U$, or $W$, since $V \setminus S$ is in the $k$-core, as we already observed. If $e= \{x,y\}$ and $\{x,y\} \subseteq S$, then also the whole $UW_e$ is not in the $k$-core as observed in the first implication. Thus, if $|S \cap V|=a$ and $|S_E|=c$, then there are at most $(2k-1)(\binom{a}{2}+c)$ vertices of $G'$ which are neither in $S$ nor in the $k$-core of $G' \setminus S$. As $S$ is a solution, this number has to be at least $(2k-1)\binom{p}{2}$, while $a+c \le b=p$. It follows that $a=p$ and $S$ is a clique of size $p$ in $G$.

Note that graph $G'$ is bipartite and for $k \ge 2$ it is also $k$-degenerate, as all vertices in $W$ have degree $k$, after removing them the vertices of $U$ have degree $2$, and, finally, $V$ forms an independent set in $G'$.
\end{proof}

Now we proceed with the algorithm for \CKC\ with $k=1$. While there is a simple algorithm with $O(3^{x+b}(m+n))$ running time\footnote{An informal description of the algorithm: We use an initially empty set $X$ that should contain vertices of the remaining 1-core. We branch over edges where both endpoints are not in $X$ and either remove one of the endpoints or put both endpoint into $X$. Then we branch over all edges that have exactly one endpoint in $X$ and either remove the other endpoint or put the other endpoint into $X$ as well.} for this case, we prefer to present an algorithm with the slightly worse running time as stated, since we then generalize this algorithm to the case $k=2$ with some modifications.

\begin{prop}
\label{thm:fpt_b+x_k=1}
\CKC\ with $k=1$ can be solved in $O(2^{x+2b}(m+n))$ time.
Assuming the Exponential Time Hypothesis, there is no $2^{o(b)+f(x)}n^{O(1)}$ time algorithm for \CKC\ with $k=1$, for any function $f$.
\end{prop}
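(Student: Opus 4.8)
The plan is to treat the two halves of the statement separately: an exhaustive bounded search tree for the upper bound, and a specialisation to $x=0$ for the ETH lower bound.

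For the algorithm I would first record the structural observation that makes $k=1$ tractable: the $1$-core of $G-S$ is exactly the set of non-isolated vertices of $G-S$, so $(G,b,x,1)$ is a yes-instance if and only if $V$ can be partitioned into $S\uplus C\uplus I$ with $|S|\le b$, $|C|\le x$, such that every edge is incident to $S$ or has both endpoints in $C$, and every vertex of $I$ has all its neighbours in $S$ (then $C$ is precisely the $1$-core of $G-S$). From this I would grow a partial partition, starting from empty sets $S$ and $C$ and a status array on the vertices (\emph{deleted}, \emph{in the core}, \emph{free}). As long as there is a \emph{conflict edge} — an edge neither incident to $S$ nor contained in $C$ — I branch: \emph{Rule~1}, if the conflict edge $\{u,v\}$ has $u$ in $C$ and $v$ free, into ``$v\to S$'' and ``$v\to C$''; \emph{Rule~2}, otherwise both endpoints $u,v$ are free, into ``$u\to S$'', ``$v\to S$'', and ``$u,v\to C$''. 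Prune a branch as soon as $|S|>b$ or $|C|>x$; when no conflict edge remains, report a solution with $I:=V\setminus(S\cup C)$.

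Correctness splits into soundness and completeness. For soundness, in a leaf with no conflict edge every edge of $G-S$ lies inside $C$, so every free vertex is isolated in $G-S$, while every vertex ever put into $C$ was put there because of an edge to another vertex of $C$ (both outside $S$), hence is non-isolated in $G-S$; thus the $1$-core of $G-S$ equals $C$ and has size $\le x$. For completeness, given any solution with deletion set $S^\star$ and core $C^\star$, at every branching node there is a choice consistent with $(S^\star,C^\star)$ — the key point is that a free endpoint $v$ of a conflict edge that is not deleted by $S^\star$ but has an undeleted partner (which happens in both rules) cannot lie in the isolated part, so $v\in C^\star$; following such choices keeps $S\subseteq S^\star$ and $C\subseteq C^\star$, so nothing is pruned and the reached leaf is a valid solution within budget. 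For the running time I would use the potential $\mu=2(b-|S|)+(x-|C|)$, equal to $2b+x$ at the root: Rule~1 has branching vector $(2,1)$ with branching number the golden ratio $\phi\approx 1.618$, and Rule~2 has vector $(2,2,2)$ with branching number $\sqrt{3}\approx1.732>\phi$, so the tree has at most $(\sqrt3)^{\,2b+x}=3^{b}(\sqrt3)^{x}\le 4^{b}2^{x}=2^{2b+x}$ leaves (equivalently: a Rule~2 node costs $2$ bits and performs one deletion or two core-insertions, a Rule~1 node costs $1$ bit and performs one of each, so a root-to-leaf path is coded in $\le 2b+x$ bits). With standard bookkeeping (status and budget counters plus an incrementally maintained list of conflict edges) each node is processed in $O(m+n)$ time, giving $O(2^{x+2b}(m+n))$.

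For the ETH lower bound I would specialise to $x=0$: the $1$-core of $G-S$ is empty exactly when $S$ is a vertex cover, so \CKC\ with $k=1$ and $x=0$ is \VC\ parameterised by the solution size $b$ (equivalently \DVD\ with $r=0$). The standard linear-size reduction from $3$-SAT together with the Sparsification Lemma shows \VC\ has no $2^{o(b)}n^{O(1)}$-time algorithm unless ETH fails, so a $2^{o(b)+f(x)}n^{O(1)}$-time algorithm for \CKC\ with $k=1$ would, on instances with the fixed value $x=0$, run in $2^{o(b)+f(0)}n^{O(1)}=2^{o(b)}n^{O(1)}$ time for \VC, contradicting ETH. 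I expect the main obstacle to be matching the claimed running time exactly: the naive potential $b+x$ only yields $(1+\sqrt2)^{b+x}\approx 2.414^{b+x}$, which is incomparable to $2^{x+2b}$, so deletions must be weighted twice as heavily; a secondary point needing care is that the Rule~2 branch ``put both endpoints into $C$'' never loses a solution — a vertex of $C$ that ends up isolated in $G-S$ may be moved to $I$, which only decreases $|C|$ — which is also the design feature that will make the adaptation to the $k=2$ case go through.
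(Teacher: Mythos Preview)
Your proof is correct, but it takes a genuinely different route from the paper's.

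The paper's algorithm is \emph{vertex}-based: it maintains the partial deletion set $S$ together with a set $Q$ of vertices it has decided \emph{not} to delete, repeatedly picks a vertex $v$ of maximum degree in the current $1$-core with $v\notin Q$, and branches on $v\in S$ versus $v\in Q$. The entire weight of the analysis lies in a nontrivial combinatorial lemma (their Lemma~\ref{lem:Q<b+x}) showing that once $|Q|>b+x$ no solution extending $S$ and avoiding $Q$ can exist; the proof of this lemma uses an injective matching of $B$ into $Q\setminus X$ and a degree-counting contradiction. The depth bound $|S|+|Q|\le 2b+x$ then gives the $2^{\,2b+x}$ factor directly. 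Your algorithm is instead the \emph{edge}-based one the paper mentions only in a footnote (there analysed crudely as $3^{\,b+x}$): you track the core $C$ rather than a forbidden set, and your key idea is the weighted potential $2(b-|S|)+(x-|C|)$, which turns the $(1,1,1)$-branching of Rule~2 into a $(2,2,2)$-vector and in fact yields $(\sqrt3)^{\,2b+x}$, strictly better than the stated bound.

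What each approach buys: yours is more elementary---no auxiliary lemma, just standard branching-vector arithmetic---and gives a sharper base. The paper's approach is chosen deliberately because it generalises to $k=2$ (their Theorem~\ref{thm:fpt_b+x_k=2}): for the $2$-core one can no longer characterise membership by a single incident edge, so your Rule~2 branch ``put both endpoints into $C$'' loses its soundness guarantee, whereas the max-degree selection together with the (now more involved) bound $|Q|\le 3b+x$ still goes through. Your closing remark that the design ``will make the adaptation to the $k=2$ case go through'' is therefore optimistic; the paper's choice of a seemingly clumsier algorithm is precisely to enable that extension. The ETH half of your argument coincides with the paper's.
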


\begin{algorithm}[t]
\caption{Algorithm for \CKC\ with $k=1$}\label{algo:1}
\textsc{SolveRec}$(G,S,Q)$\\
\lIf{$|S| > b$ or $|Q| > b+x$}{\textbf{return} No solution\label{algst:too_big}}
Let $G'$ be the $1$-core of $G \setminus S$\label{algst:1-core}.\\
\lIf{$|V(G')| \le x$}{\textbf{return} $S$\label{algst:happy}}
\lIf{$V(G') \subseteq Q$}{\textbf{return} No solution\label{algst:stuck}}
Let $v$ be the vertex with the highest degree in $G'$ which is not in $Q$\label{algst:select}\\
$T \leftarrow$ \textsc{SolveRec}$(G,S \cup \{v\},Q)$ \label{algst:rec_start}\\
\lIf{$T\neq \text{\normalfont{No solution}}$}{\textbf{return} $T$}
\lElse{\textbf{return} \textsc{SolveRec}$(G,S,Q \cup \{v\})$\label{algst:rec_end}}  
\end{algorithm}

\emph{Algorithm:} We present a recursive algorithm (see Algorithm \ref{algo:1} for pseudocode) that maintains two sets $S$ and $Q$. 
The recursive function is supposed to return a solution to the instance, whenever there is a solution $B$ containing all of $S$ and there is no solution containing $S$ and anything of $Q$. If some of the conditions is not met, then the function should return ``No solution''.
In other words, $S$ is the set of deleted vertices and $Q$ is the set of vertices the algorithm has decided not to delete in the previous steps but may be collapsed in the future.
Hence, the solution to the instance, or the information that there is none, is obtained by calling the recursive function with both sets $S$ and $Q$ empty.

The algorithm first checks that $S$ is of size at most $b$ and $Q$ is of size at most $x+b$. 
If any of these is not true, then it rejects the current branch.
Then it computes the $1$-core $G'$ of the graph $G \setminus S$.
If the $1$-core is of size at most $x$, then it returns $S$ as a solution.
If $G'$ is larger, but all its vertices are in $Q$, then we have no way to shrink the core and we again reject.
Finally, the algorithm picks an arbitrary vertex $v$ of largest degree in $G'$ which is not in $Q$ and recurses on both possibilities---either $v$ is in the solution, modeled by adding it to $S$, or it is in no solution containing $S$, modeled by adding it to $Q$. 
We start by showing that Algorithm~\ref{algo:1} has the claimed running time.
\begin{lem}\label{lem:alg1runningtime}
Algorithm~\ref{algo:1} runs in $O(2^{x+2b}(m+n))$ time.
\end{lem}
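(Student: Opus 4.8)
The plan is the standard two-factor estimate: total running time $=$ (number of recursive calls) $\times$ (work per call), and I would aim to show the first factor is $O(2^{x+2b})$ and the second is $O(m+n)$.

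\emph{Bounding the recursion tree.} First I would argue about the shape of the tree of recursive calls. Each invocation of \textsc{SolveRec} either returns immediately or spawns at most two recursive calls, one with $|S|$ increased by one and one with $|Q|$ increased by one; hence $|S|$ and $|Q|$ are nondecreasing along every root-to-node path, and each edge of the tree increments exactly one of them. Because the first line returns ``No solution'' as soon as $|S|>b$ or $|Q|>b+x$, every node that actually recurses satisfies $|S|\le b$ and $|Q|\le b+x$. Thus such a node at depth $d$ has seen at most $b$ increments of $|S|$ and at most $b+x$ increments of $|Q|$, so $d\le 2b+x$; consequently the whole tree has depth at most $2b+x+1$, and a binary tree of that depth has at most $2^{2b+x+2}-1=O(2^{x+2b})$ nodes. (A binomial-coefficient count of the leaves gives a slightly better constant, but it is not needed.)

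\emph{Bounding the work per call, and combining.} Next I would check that a single invocation, excluding its recursive calls, takes $O(m+n)$ time. Forming $G\setminus S$ is linear; the $1$-core of $G\setminus S$ is computed by the usual peeling routine (repeatedly delete a vertex whose current degree is $0$ while maintaining a degree array), which runs in $O(m+n)$ time and along the way supplies the degrees needed to pick a maximum-degree vertex outside $Q$; and the tests ``$|V(G')|\le x$'' and ``$V(G')\subseteq Q$'' take $O(m+n)$ time using a Boolean array marking the vertices of $Q$. Multiplying the two bounds yields the claimed $O(2^{x+2b}(m+n))$ running time.

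\emph{Where care is needed.} There is no genuine obstacle; the only point to get right is the recursion-tree bound. One must note that the branch which skips the second recursive call (when the first already returns a solution), as well as the base cases that check the $1$-core size or that all core vertices lie in $Q$, can only prune the tree, so it is legitimate to overestimate by the ``full'' binary tree in which only the size checks create leaves; and one must keep the internal-node-versus-leaf depth bookkeeping straight so that the depth bound $2b+x+1$ (rather than $2b+x$) is the one plugged into the $2^{d+1}-1$ count.
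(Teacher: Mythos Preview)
Your proposal is correct and follows essentially the same approach as the paper: bound the number of recursive calls by $O(2^{x+2b})$ via the depth of the branching tree (the paper phrases this as unrolling the recurrence $T(\mu)\le 2T(\mu+1)+O(m+n)$ with $\mu=|S|+|Q|$ and $T(\mu)=O(1)$ for $\mu>2b+x$), and multiply by $O(m+n)$ work per call.
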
 
\begin{proof}
Assume the running time of Algorithm \ref{algo:1} is $T(\mu)$, where $\mu	=|S|+|Q|$.
Let $m$ and $n$ be the number of edges and number of vertices in $G$ respectively.
Since when $|S| > b$ or $|Q| > b+x$ Algorithm \ref{algo:1} directly return ``No solution'' in line \ref{algst:too_big}, we have $T(\mu)=O(1)$ for $\mu>2b+x$.
Lines \ref{algst:too_big}, \ref{algst:happy}, \ref{algst:stuck} and \ref{algst:select} can be done in $O(n)$ time.
In line \ref{algst:1-core} the $1$-core $G'$ can be found in $O(n+m)$ time by first removing vertices in $S$ and edges incident with them to get $G \setminus S$, and then removing isolated vertices in $G \setminus S$. 
Thus except for line \ref{algst:rec_start} and \ref{algst:rec_end}, all steps can be done in $O(m+n)$ time, and we have that
\[
T(0) \le 2T(1)+O(m+n) \le \dots \le 2^{\mu+1}T(\mu+1)+(1 + 2 + \dots + 2^{\mu})O(m+n)=O(2^{x+2b}(m+n)).
\]
\end{proof}

Next we show the claimed conditional lower bound on the running time for any algorithm for \CKC\ with $k=1$.
\begin{lem}\label{lem:k1LB}
Assuming the Exponential Time Hypothesis, there is no $2^{o(b)+f(x)}n^{O(1)}$ time algorithm for \CKC\ with $k=1$, for any function $f$.
\end{lem}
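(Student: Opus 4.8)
The plan is to exploit an observation already recorded earlier in the paper: \CKC\ with $k=1$ and $x=0$ is exactly \VC. Indeed, the $1$-core of $G-S$ is empty if and only if $G-S$ has no edges, i.e., $S$ is a vertex cover of $G$; this is just the case $r=k-1=0$ of the equivalence between \CKC\ with $x=0$ and \DVD. It therefore suffices to invoke the standard consequence of the Exponential Time Hypothesis that \VC (equivalently \IS and \Clique) cannot be solved in $2^{o(N)}N^{O(1)}$ time on $N$-vertex graphs, which follows from the sparsification lemma together with the textbook reduction from \textsc{3-SAT} with a linear number of clauses to \IS\ on $O(n)$ vertices; see~\cite{cygan2015parameterized}.

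First I would assume, for contradiction, that \CKC\ with $k=1$ can be solved in $2^{o(b)+f(x)}n^{O(1)}$ time for some function $f$. Given a \VC\ instance $(H,\beta)$ with $|V(H)|=N$, where we may assume $\beta\le N$ (otherwise the instance is trivially positive), we run the hypothetical algorithm on the equivalent \CKC\ instance $(H,\beta,0,1)$. Its running time is $2^{o(\beta)+f(0)}N^{O(1)}$; since $f(0)$ is a constant and $\beta\le N$, this is $2^{o(N)}N^{O(1)}$, a subexponential-time algorithm for \VC, contradicting the Exponential Time Hypothesis. Hence no such algorithm for \CKC\ with $k=1$ exists.

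There is essentially no technical obstacle here; the one point deserving a line of justification is that the hypothesis is phrased in the \emph{solution size} $b$ rather than in the instance size, so one must ensure that the hard \VC\ instances have $b$ linear in the number of vertices. This is automatic in the sparsified \textsc{3-SAT}-to-\VC\ pipeline, where the budget ends up as $\Theta(n+m)=\Theta(N)$, so a $2^{o(b)}$ bound on these instances genuinely collapses to $2^{o(N)}$. The nested dependence $2^{o(b)+f(x)}$ is harmless because we only feed the hypothetical algorithm instances with $x=0$, which turns $f(x)$ into the constant $f(0)$.
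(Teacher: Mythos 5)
Your proposal is correct and follows essentially the same route as the paper: both observe that \CKC\ with $k=1$ and $x=0$ is exactly \VC\ and then invoke the ETH-based lower bound ruling out $2^{o(b)}n^{O(1)}$-time algorithms for \VC\ parameterized by solution size, with $f(x)$ collapsing to the constant $f(0)$. The paper simply cites this \VC\ lower bound directly, whereas you re-derive it from the sparsification lemma, but the argument is the same.
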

\begin{proof}
Since \CKC\ with $k=1$ and $x=0$ is equivalent to \VC, and assuming the Exponential Time Hypothesis, there is no $2^{o(k)}n^{O(1)}$ time algorithm for \VC\ \cite{lokshtanov2013lower},
we have that assuming the Exponential Time Hypothesis, there is no $2^{o(b)+f(x)}n^{O(1)}$ time algorithm for \CKC\ with $k=1$, where $f$ can be an arbitrary function.
\end{proof}

Before showing the correctness of Algorithm \ref{algo:1}, we first show in the following lemma why in line \ref{algst:too_big} set $Q$ should be bounded by $x+b$.
\begin{lem}
\label{lem:Q<b+x}
If $Q$ is of size more than $b+x$, then there is no solution containing whole $S$ and no vertex from $Q$.
\end{lem}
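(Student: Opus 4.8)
The plan is to prove the contrapositive: assuming there is a solution $B$ with $S\subseteq B$, $|B|\le b$ and $B\cap Q=\emptyset$, I will show $|Q|\le b+x$. The first thing I would record is the shape of the collapse when $k=1$: deleting an isolated vertex from a graph never creates a new isolated vertex, so the $1$-core of any graph is exactly that graph with all its isolated vertices removed. Applying this to $G-B$, let $C$ be its $1$-core, so $|V(C)|\le x$, and let $I=\{v\in V\setminus B: N_G(v)\subseteq B\}$ be the set of vertices isolated in $G-B$; then $V\setminus B$ is the disjoint union of $V(C)$ and $I$. Since $B\cap Q=\emptyset$ we have $Q\subseteq V(C)\cup I$, hence $|Q\cap V(C)|\le|V(C)|\le x$, and it remains to prove $|Q\cap I|\le|B|\le b$.

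For that I would use the one structural feature of $Q$ available to us: whenever a vertex $v$ was put into $Q$, it was chosen in line~\ref{algst:select} as a vertex of maximum degree in the then-current $1$-core $G'$ — the $1$-core of $G\setminus S'$ for the set $S'\subseteq S$ deleted so far — among the vertices of $G'$ not already in $Q$. In particular $v\in V(G')$, so $v$ has a neighbour outside $S'$; if $v\in I$ this neighbour lies in $N_G(v)\subseteq B$, i.e.\ in $B\setminus S'$. I would also note that $Q\cap I$ is independent in $G$: an edge between $v,v'\in Q\cap I$ would put $v'\in N_G(v)\subseteq B$, contradicting $B\cap Q=\emptyset$.

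The heart of the argument — and the step I expect to be the main obstacle — is the estimate $|Q\cap I|\le|B|$. The natural route is to build an injection $\iota\colon Q\cap I\to B$ with $\iota(v)\in N_G(v)$ for every $v$; by Hall's theorem this exists iff $|N_G(A)|\ge|A|$ for every $A\subseteq Q\cap I$. Suppose some $A\subseteq Q\cap I$ violated this, and let $v^\star\in A$ be the vertex added to $Q$ last, at a stage with deleted set $S^\star\subseteq S\subseteq B$ and $1$-core $G^\star$. Every other vertex of $A$ is already in $Q$ at that stage; it lies outside $B\supseteq S^\star$, so it is still present in $G\setminus S^\star$, and those of these vertices that are non-isolated there belong to $G^\star$ with all their $G^\star$-neighbours inside $N_G(A)\setminus S^\star$. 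Since $v^\star$ was chosen of maximum degree among the non-$Q$ vertices of $G^\star$, and every vertex of $N_G(A)\setminus S^\star$ is such a non-$Q$ vertex (it is adjacent to some vertex of $A$, which is not in $S^\star$), each vertex of $N_G(A)\setminus S^\star$ has $G^\star$-degree at most $\deg_{G^\star}(v^\star)\le|N_G(v^\star)\setminus S^\star|\le|N_G(A)\setminus S^\star|<|A|$. The remaining work is to play the at most $|N_G(A)\setminus S^\star|$ "surviving" neighbours in $B\setminus S^\star$ and the vertices of $S^\star$ (which absorb, via a neighbour deleted only after they were added, the vertices of $A$ that have become isolated in $G\setminus S^\star$) off against the $|A|$ vertices of $A$ to reach a contradiction. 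Making this balance yield the \emph{linear} bound $|B|$ rather than a quadratic one is the delicate point: it forces one to invoke maximality of the chosen vertex also at earlier stages — a vertex of $B$ adjacent to many vertices of $Q\cap I$ would, at the stage just before the last of them was picked, itself have been a maximum-degree candidate and so have been moved into $S$ or $Q$ first. The two easy ends ($|Q\cap V(C)|\le x$ and independence of $Q\cap I$) are routine; this charging step is where the real content lies.

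Finally, combining $|Q\cap I|\le|B|\le b$ with $|Q\cap V(C)|\le x$ gives $|Q|\le b+x$, which is the contrapositive of the statement.
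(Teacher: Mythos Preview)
Your setup is essentially the paper's: you correctly isolate the two pieces $Q\cap V(C)$ (bounded by $x$) and $Q\cap I$ (to be bounded by $|B|\le b$), note that $Q\cap I$ is independent with all its edges going into $B$, and observe that every $v\in Q\cap I$ had a neighbour in $B\setminus S'$ at the moment it was placed in $Q$. This is exactly the paper's decomposition (its $X$ is your $Q\cap V(C)$ and its $Q\setminus X$ is your $Q\cap I$).

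The gap is precisely the step you yourself flag as ``where the real content lies.'' You propose to bound $|Q\cap I|$ via Hall's condition, i.e., to exhibit a matching of $Q\cap I$ into $B$ along edges of $G$, and you begin a contradiction argument from a violating set $A$. But the chain of inequalities you actually reach---that each vertex of $N_G(A)\setminus S^\star$ has $G^\star$-degree at most $\deg_{G^\star}(v^\star)<|A|$---is not yet a contradiction, and the ``remaining work'' you outline (charging the members of $A$ already isolated in $G\setminus S^\star$ to earlier deletions in $S^\star$, then balancing) is not carried out. The difficulty is real: at the stage of $v^\star$ the earlier members of $A$ are already in $Q$, so the max-degree choice of $v^\star$ tells you nothing about \emph{their} degrees in $G^\star$, and without that the edge count between $A$ and $N_G(A)\setminus S^\star$ does not close. (A small slip: your parenthetical reason for $N_G(A)\setminus S^\star$ being ``non-$Q$'' is wrong---the correct reason is simply $N_G(A)\subseteq B$ and $B\cap Q=\emptyset$.)

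The paper sidesteps Hall entirely and does not look for an adjacency-respecting matching. Assuming $|Q\setminus X|\ge b+1$, it builds an injection $f\colon B\to Q\setminus X$ based purely on the \emph{order} in which vertices entered $S\cup Q$ (each $v_i\in B$ is sent to the nearest preceding unused member of $Q\setminus X$), arranged so that $\deg_{f(i)}(v_{f(i)})\ge \deg_i(v_i)$ by the max-degree selection rule. Then it double-counts the edges between $B$ and $Q\setminus X$: the lower bound is $\sum_{v_j\in Q\setminus X}\deg_j(v_j)$ (every edge out of $Q\setminus X$ ends in $B$) and the upper bound is $\sum_{v_i\in B}\deg_i(v_i)$ (edges toward $Q$ survive in every intermediate $G_i$). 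The one leftover vertex of $Q\setminus X$ not hit by $f$ then forces a strict inequality the wrong way. The order-based pairing is exactly what lets the max-degree property be invoked at \emph{every} vertex of $B$ simultaneously---the missing ingredient in your Hall-based framing.
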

\begin{proof}
Suppose for contradiction that there is a set $Q$ of size at least $b+x+1$ and a solution $B$ such that $|B| \le b$, $S \subseteq B$ and $B \cap Q=\emptyset$. Let $v_1,v_2, \ldots, v_{r'}$ be the vertices of the set $S \cup Q$ in the order as they were added to the set by successive recursive calls. Moreover, if $B \setminus S$ is empty, then let $r=r'$. Otherwise, let $B \setminus S = \{v_{r'+1}, \ldots, v_r\}$, i.e., in both cases $\{v_1, \ldots, v_r\} = B \cup Q$.
Let $G'$ be the $1$-core of $G\setminus B$ and let $X = V(G') \cap Q$. Since $B$ is a solution, we know that $|X| \le x$. Our aim is to show that the number of edges lost by vertices of $Q \setminus X$ is larger than the number of edges incident to the vertices of $B$, which would be a contradiction. 

To this end, we construct an injective function $f$ that maps the vertices in $B$ to vertices of $Q \setminus X$. 
First let $v_i$ be the vertex in $B$ with the largest $i$. 
We let $f(i)$ be $\max \{j \mid v_j \in Q \setminus X\}$.
Now let $v_i$ be the vertex from $B$ with the largest $i$ such that $f(i)$ was not set yet and $v_{i'}$ be the vertex from $B$ with the least $i'$ such that $i'>i$. We set $f(i) = \max \{j \mid j < f(i') \wedge v_j \in Q \setminus X\}$. Since the set $Q \setminus X$ contains at least $b+1$ vertices, while $B$ contains at most $b$, this way we find a mapping for every vertex in $B$. Moreover, there remains at least one vertex in $Q \setminus X$ not being in the image of $f$, let us denote it $v_{q_0}$.

For $t \in \{1, \ldots, r\}$ let $G_t$ be the $1$-core of the graph $G \setminus (B \cap \{v_1, \ldots, v_{t-1}\})$. For $v \in V(G_t)$ let $\deg_t(v)$ be the degree of the vertex $v$ in $G_t$. 
By the way we selected $v_t$ we know that $\deg_t(v_t) \ge \deg_{t} (v_{t'})  \ge \deg_{t'} (v_{t'})$ for every $r' \ge t' > t$. Moreover, $\deg_t(v_t) \ge \deg_{t} (v_{t'}) \ge \deg_{t'} (v_{t'})$ for every $t \le r'< t'$.  
Note also that since $G_t$ is a $1$-core, we have $\deg_t(v_t) \ge 1$ for all $t$.

If for every vertex $v_i \in B$ we have $f(i)<i$, then $\deg_i(v_i)\leq \deg_{f(i)}(v_{f(i)})$
for every vertex $v_i$ in $B$. 
Let us count the number of edges of the form $\{v_i,v_j\}$ such that $v_i \in B$ and $v_j \in Q \setminus X$.
Since each edge incident on a vertex of $Q \setminus X$ must have the other endpoint in $B$, this number is at least $\sum_{v_j \in Q \setminus X} \deg_j (v_j)$. On the other hand, since edges towards $Q$ are always counted in $\deg_t(v)$, this number is at most $\sum_{v_i \in B} \deg_i (v_i)$. We have that 
\[
0 \le \sum_{v_i \in B} \deg_i (v_i) - \sum_{v_j \in Q \setminus X} \deg_j (v_j) \le - \deg_{q_0} (v_{q_0}) + \sum_{v_i \in B} (\deg_i (v_i) - \deg_{f(i)} (v_{f(i)})).
\]
On the other hand, since $\deg_i(v_i)\leq \deg_{f(i)}(v_{f(i)})$
for every vertex $v_i$ in $B$ we have that 
\[
- \deg_{q_0} (v_{q_0}) + \sum_{v_i \in B} (\deg_i (v_i) - \deg_{f(i)} (v_{f(i)})) \le - \deg_{q_0} (v_{q_0}) + \sum_{v_i \in B} 0 <0,
\]
which is a contradiction.

Otherwise, let $i_0$ be the largest $i$ such that $f(i) > i$ and $j_0=f(i_0)$. 
We consider the graph $G_{j_0}$. 
Now for every $i>i_0$ with $v_i \in B$ we have $i_0 < f(i_0)< f(i) < i$, thus $\deg_i(v_i)\leq \deg_{f(i)}(v_{f(i)})$. 
Let us count the number of edges of the form $\{v_i,v_j\}$ in $G_{j_0}$ such that $v_i \in B$ and $v_j \in Q \setminus X$. Since each edge of $G_{j_0}$ incident on a vertex of $(Q \setminus X) \cap V(G_{j_0})$ must have the other endpoint in $B \cap V(G_{j_0})$, this number is at least $\sum_{v_j \in (Q \setminus X)\cap V(G_{j_0})} \deg_j (v_j)$. On the other hand, since edges towards $Q$ are always counted in $\deg_t(v)$, this number is at most $\sum_{v_i \in B \cap V(G_{j_0})} \deg_i (v_i)$. We have that 
\begin{align*}
0 &\le \sum_{v_i \in B \cap V(G_{j_0})} \deg_i (v_i) - \sum_{v_j \in (Q \setminus X) \cap V(G_{j_0})} \deg_j (v_j) \\ &\le- \deg_{j_0} (v_{j_0}) + \sum_{v_i \in B\cap V(G_{j_0})} (\deg_i (v_i) - \deg_{f(i)} (v_{f(i)})) .
\end{align*}
On the other hand, since $\deg_i(v_i)\leq \deg_{f(i)}(v_{f(i)})$
for every vertex $v_i$ in $B\cap V(G_{j_0})$ we have that 
\[
- \deg_{j_0} (v_{j_0}) + \sum_{v_i \in B\cap V(G_{j_0})} (\deg_i (v_i) - \deg_{f(i)} (v_{f(i)})) \le - \deg_{j_0} (v_{j_0}) + \sum_{v_i \in B\cap V(G_{j_0})} 0 <0,
\]
which is again a contradiction.
\end{proof}
Now we have all necessary pieces to prove \autoref{thm:fpt_b+x_k=1}.
\begin{proof}[Proof for \autoref{thm:fpt_b+x_k=1}]
To show the correctness of the Algorithm~\ref{algo:1}, we first show that whenever the algorithm outputs a solution, then this solution is indeed correct. Then we show that whenever there exists a solution, the algorithm also finds a solution. 

$\Rightarrow:$ We show this part by induction on the recursion tree. If Algorithm~\ref{algo:1} returns $S$ as a solution in line \ref{algst:happy}, then it is of size at most $b$ since line \ref{algst:too_big} does not apply and the $1$-core of $G\setminus S$ is of size at most $x$. This constitutes the base case of the induction.

If the solution is obtained from recursive calls on lines~\ref{algst:rec_start}-\ref{algst:rec_end}, then we know that it is correct by induction hypothesis.

$\Leftarrow:$ Next, we show by induction on the recursion tree that if there is a solution then Algorithm~\ref{algo:1} returns a solution. In particular, we show that if there is a recursive call of Algorithm~\ref{algo:1} with sets $S$ and $Q$ such that there is a solution containing all of~$S$ and there is no solution containing the whole $S$ and any vertex from~$Q$, then the algorithm either directly outputs a solution or it invokes a recursive call with sets $S'$ and $Q'$ such that there is a solution containing all of~$S'$ and there is no solution containing the whole $S'$ and any vertex from~$Q'$.
Let $S$ and $Q$ be two input sets of a recursive call of Algorithm~\ref{algo:1} and let $B$ be a solution such that $S \subseteq B$ and $B \cap Q=\emptyset$. First we show that none of lines \ref{algst:too_big} and \ref{algst:stuck} applies. 
Line~\ref{algst:too_big} will not apply since we have $|S|\le |B| \le b$ and by \autoref{lem:Q<b+x} we also know that we have $|Q|\le b+x$.
If the $1$-core of $G[Q]$ is of size more than $x$, which is the case when line~\ref{algst:stuck} applies and line~\ref{algst:happy} does not apply, then no set $B'$ with $B' \cap Q= \emptyset$ can make the $1$-core of $G\setminus B'$ of size at most~$x$, which is a contradiction to $B$ being a solution with $B\cap Q=\emptyset$. If follows that the current recursive call of Algorithm~\ref{algo:1} does not output ``No Solution''.

If $B=S$, then line~\ref{algst:happy} applies and the algorithm outputs $B$. Otherwise, we know that any vertex $v$ is either in $B$ and hence $S\cup \{v\}\subseteq B$, or we have that $B\cap\{Q\cup \{v\}\}=\emptyset$. In particular, if the recursive call on line~\ref{algst:rec_start} does not return a solution, then, by induction hypothesis, there is no solution containing $S\cup \{v\}$, i.e., there is no solution containing $S$ and anything of $Q \cup \{v\}$ and the call on line~\ref{algst:rec_end} must return a solution by induction hypothesis. 
It follows that the algorithm invokes a recursive call with the desired properties in line~\ref{algst:rec_start} or in line~\ref{algst:rec_end}.

Since Algorithm~\ref{algo:1} starts with $S=Q=\emptyset$ we have that for any solution $B$ the conditions~$S \subseteq B$ and $B \cap Q=\emptyset$ are initially fulfilled. Furthermore, it follows from the complexity analysis in \autoref{lem:alg1runningtime} that the algorithm always terminates. Therefore the algorithm outputs a solution if one exists and hence is correct.

The running time bound for Algorithm~\ref{algo:1} follows from \autoref{lem:alg1runningtime} and the conditional running time lower bound for \CKC\ with $k=1$ follows from \autoref{lem:k1LB}.
\end{proof}

In the remainder of the section, we show how to adapt this algorithm for \CKC\ with~$k=2$.

\begin{theorem}
\label{thm:fpt_b+x_k=2}
\CKC\ with $k=2$ can be solved in $O(1.755^{x+4b}\cdot n)$ time.
Assuming the Exponential Time Hypothesis, there is no $2^{o(b)+f(x)}n^{O(1)}$ time algorithm for \CKC\ with $k=2$, for any function $f$.
\end{theorem}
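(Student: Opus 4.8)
The lower bound requires no new work: \CKC\ with $k=2$ and $x=0$ is exactly \FVS\ with solution size $b$, and, assuming the Exponential Time Hypothesis, \FVS\ has no $2^{o(b)}n^{O(1)}$-time algorithm~\cite{cygan2015parameterized,lokshtanov2013lower}; hence neither does \CKC\ with $k=2$, for any function $f$ of $x$. This mirrors \autoref{lem:k1LB}.

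For the algorithm I would retain the recursive skeleton of \autoref{algo:1}: we carry two sets, $S$ (vertices already deleted) and $Q$ (vertices we have committed to keep, but which may still collapse), compute the $2$-core $G'$ of $G\setminus S$ in linear time by iteratively peeling vertices of degree at most $1$, return $S$ once $|V(G')|\le x$, and prune the branch when $|S|>b$, when $|Q|$ exceeds a bound to be fixed below, or when $V(G')\subseteq Q$. Two features distinguish $k=2$ from $k=1$. First, a vertex now leaves the $2$-core as soon as its degree drops to~$1$ rather than to~$0$, so a single deletion can trigger a long cascade along chains of degree-$2$ vertices. Second, $G'$ may consist entirely of degree-$2$ vertices, i.e.\ be a disjoint union of cycles; this is a polynomial-time base case, since deleting one vertex of such a cycle collapses the whole cycle, so it suffices to delete one vertex from each of the largest cycles greedily and check whether the budget $b$ drives the total size below $x$. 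When $G'$ is not a union of cycles it has a vertex of degree at least~$3$; I pick such a vertex $v$ of maximum degree with $v\notin Q$ and branch on whether $v$ is deleted ($S\leftarrow S\cup\{v\}$) or kept ($Q\leftarrow Q\cup\{v\}$), just as for $k=1$. The subtlety is that when the maximum degree is exactly~$3$ the ``keep'' branch need not shrink the core appreciably, and in that regime I expect one has to refine it, e.g.\ by branching in addition on a neighbour of $v$; it is this low-degree branching that degrades the branching number from the golden ratio to the claimed $1.755$.

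The running-time analysis then rests on a weighted measure, counting deletions more heavily than kept vertices (something of the form $\mu=2|S|+|Q|$), which after the pruning rules is bounded by $4b+x$; each branch decreases $\mu$ by at least $1$, deleting decreases it by at least $2$, and the refined low-degree branching can be shown to have branching number at most $1.755$, so the whole search tree has size $O(1.755^{4b+x})$ and, with linear-time work per node, the bound follows. The step I expect to be the main obstacle is establishing the analogue of \autoref{lem:Q<b+x} for the $2$-core, namely that $|Q|>2b+x$ already forbids a solution that contains $S$ and avoids $Q$; the charging argument of the $k=1$ proof credited each collapsed vertex of $Q$ with all the edges it lost, each of which had to be destroyed by a deletion, and for the $2$-core this has to be redone while allowing a collapsing vertex to retain one edge and allowing cascades to run through paths of degree-$2$ vertices, which is exactly why the bound degrades from $b+x$ to $2b+x$. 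I would set it up as before: order $S\cup Q$ by insertion time, exploit that $v$ is always chosen of maximum degree outside $Q$ to obtain the monotonicity $\deg_t(v_t)\ge\deg_{t'}(v_{t'})$ down the recursion, construct an injection from (a superset of) the hypothetical solution $B$ into the part of $Q$ outside the final $2$-core, and compare the edges lost by collapsed $Q$-vertices with the edges that deletions of $B$ can eliminate to reach a contradiction. Correctness of the branching and of the pruning rules then follows by the same induction on the recursion tree as in \autoref{thm:fpt_b+x_k=1}, once one checks that the ``stuck'' rule $V(G')\subseteq Q$ remains sound for the $2$-core.
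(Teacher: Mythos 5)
Your lower-bound argument is exactly the paper's (\autoref{lem:k2lowerbound}), and your algorithmic skeleton---the sets $S$ and $Q$, linear-time $2$-core computation, a greedy base case when the part of $G'$ outside $Q$ is a union of cycles, and binary branching on a maximum-degree vertex not in $Q$---matches \autoref{algo:2}. However, both quantitative claims on which your running time rests are unestablished, and one of them targets the wrong bound. The pruning threshold you conjecture, $|Q|\le 2b+x$, is not what the paper proves: \autoref{lem:line_3_correct} only gives $|Q|\le 3b+x$, and its proof is substantially more delicate than a re-run of \autoref{lem:Q<b+x}. One orients every edge of $G$ according to the order in which vertices collapse, uses that for $k=2$ each collapsed vertex has at most \emph{one} outgoing edge, and maps each solution vertex injectively to a block of \emph{three} consecutive vertices of $Q\setminus X$, each of degree at least $3$ at insertion time; the accounting must also route lost edges through the set $Y$ of collapsed vertices outside $Q$, which your ``each lost edge is destroyed by a deletion'' picture ignores. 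The factor three is precisely what forces the threshold $3b+x$, so the step you yourself flag as the main obstacle is aimed at a bound you are unlikely to reach by the method you describe.

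The second gap is the provenance of the constant $1.755$. The paper does \emph{not} refine the branching at degree $3$ or branch on a neighbour of $v$; it branches exactly as for $k=1$. The improvement over $2^{4b+x}$ comes from the asymmetry of the two pruning bounds: every root-to-leaf path has at most $b+1$ ``delete'' steps and at most $3b+x+1$ ``keep'' steps, so the number of recursive calls is at most roughly $2\binom{4b+x+4}{b+1}$, and $\binom{z}{z/4}=O(1.7549^{z})$ gives $O(1.7549^{4b+x})$ (\autoref{lem:algo2runningtime}). Your weighted measure $\mu=2|S|+|Q|$ with a golden-ratio recurrence would need the $2b+x$ bound on $|Q|$, which does not hold; with the correct bound $3b+x$ it yields $\varphi^{5b+x}$, which is worse in $b$ than the paper's bound. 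Finally, the correctness of the greedy cycle step and of restricting branching to degree-$\ge 3$ vertices requires an exchange argument (\autoref{lem:degree_at_least_3} and \autoref{lem:line_14_correct}): a hypothetical solution meeting a component whose degree-$\ge 3$ vertices all lie in $Q$, or meeting a cycle component that contains a $Q$-vertex, can be rewritten to use $Q$-vertices, contradicting the branch invariant; only the $Q$-free cycle components may be attacked greedily. Your ``stuck'' rule $V(G')\subseteq Q$ does not cover these cases.
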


The above theorem in particular yields an $O(9.487^b \cdot n)$ algorithm for \textsc{Feedback Vertex Set}.
For the proof we 
need the following lemma, which shows that, except for some specific connected components, we can limit the solution to contain vertices of degree at least three.

\begin{lem}
\label{lem:degree_at_least_3}
For any instance $(G,b,x,2)$ of \CKC\ with $k=2$, where $G$ is a $2$-core and does not contain a cycle as a connected component, if there is a solution $B$ for $(G,b,x,2)$, then there is also a solution $B'$ for $(G,b,x,2)$ which contains only vertices with degree larger than 2. 
\end{lem}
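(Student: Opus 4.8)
The plan is to take a solution $B$ and, whenever it contains a vertex $v$ of degree at most $2$ in $G$, argue that we can swap $v$ out for another vertex without losing the solution property, and then iterate. Since $G$ is a $2$-core, every vertex has degree exactly $2$ or at least $3$, so the only ``bad'' vertices in $B$ are those of degree exactly $2$. The key structural observation is that in a $2$-core, the degree-$2$ vertices form a collection of internally-disjoint paths whose endpoints are attached to vertices of degree at least $3$ (or, in a cycle component, such a path closes up into a cycle — but these are excluded by hypothesis). So pick a bad vertex $v \in B$ of degree $2$; it lies on a maximal path $P = w_0, w_1, \ldots, w_t, w_{t+1}$ where $w_1, \ldots, w_t$ all have degree $2$ in $G$ and $w_0, w_{t+1}$ have degree at least $3$ (here $w_0$ and $w_{t+1}$ need not be distinct, but $t \ge 1$ and $v = w_j$ for some $1 \le j \le t$).

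The main step is to show that deleting $v = w_j$ has ``no more effect'' on the collapse than deleting the endpoint $w_0$ (or $w_{t+1}$): first I would observe that whichever degree-$2$ vertex on $P$ one removes, the cascade is the same — removing any single interior vertex of $P$ disconnects $P$, causes the whole path segment to collapse down to $w_0$ and $w_{t+1}$, and the subsequent behaviour of the rest of the graph depends only on the fact that $w_0$ and $w_{t+1}$ each lose one neighbour. Hence $G - (B \setminus \{v\}) - \{w_0\}$ has the same $2$-core as $G - B$, provided $w_0 \notin B$ already. If $w_0 \in B$, we instead try $w_{t+1}$; if both endpoints of $P$ are already in $B$, then $v$ is redundant and $B \setminus \{v\}$ is already a solution (removing $v$ after both endpoints of its path are gone changes nothing in the collapse), so we can just drop $v$ and decrease $|B|$. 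Thus set $B' := (B \setminus \{v\}) \cup \{w_0\}$ (or $\cup\{w_{t+1}\}$, or nothing), which has $|B'| \le |B| \le b$ and yields a $2$-core of size at most $x$.

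To finish I would set up an explicit termination argument for the iteration, since naively swapping a degree-$2$ vertex for $w_0$ might reintroduce another degree-$2$ vertex elsewhere — but $w_0$ has degree at least $3$ in $G$, so it is \emph{not} bad, and the number of bad (degree-$2$) vertices in the solution strictly decreases with each swap (and strictly decreases in the ``drop $v$'' case too, along with $|B|$). Therefore after at most $|B| \le b$ steps we reach a solution $B'$ containing only vertices of degree at least $3$ in $G$. The main obstacle is the first claim of the previous paragraph: making precise that the cascade triggered by removing an interior degree-$2$ vertex of $P$ is identical (as far as the final $2$-core is concerned) to the cascade triggered by removing an endpoint of $P$ of degree at least $3$. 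This requires a careful look at how the collapse propagates along $P$ — one should verify that after removing any one interior vertex, every remaining vertex of $P$ ends up with degree at most $1$ and thus gets deleted, leaving exactly the same residual graph on $V(G) \setminus V(P)$ together with at most $\{w_0, w_{t+1}\}$, each having lost exactly one edge — and it is here that the hypothesis that no connected component of $G$ is a cycle is used, to guarantee that $P$ really does have a degree-$\ge 3$ endpoint available.
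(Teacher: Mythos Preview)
Your approach is essentially the same as the paper's: replace each degree-$2$ vertex in the solution by a degree-$\ge 3$ endpoint of its maximal degree-$2$ path, and iterate. One small correction: you assert that $G - (B \setminus \{v\}) - \{w_0\}$ has \emph{the same} $2$-core as $G - B$, but this need not hold---removing $w_0$ (which has degree $\ge 3$) can trigger strictly more collapse than removing the interior vertex $v$, since $w_0$ has neighbours outside $P$. The correct statement is that the $2$-core of $G - ((B \setminus \{v\}) \cup \{w_0\})$ is a \emph{subset} of the $2$-core of $G - B$, which is all you need. The paper makes this containment clean by passing through the intermediate set $B_1 = B \cup \{w_0\}$: the $2$-core of $G \setminus ((B \setminus \{v\}) \cup \{w_0\})$ equals that of $G \setminus B_1$ (because $v$ collapses anyway once $w_0$ is removed), and the latter is trivially contained in the $2$-core of $G \setminus B$ since $B \subseteq B_1$. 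This framing also makes your case split on whether $w_0$ or $w_{t+1}$ already lies in $B$ unnecessary.
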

\begin{proof}
Let $v$ be any vertex with degree 2 in $B$. Let $v'$ be a vertex of degree at least 3 in $G$ such that there is a path $P$ between $v$ and $v'$ with all internal vertices of degree exactly 2 in $G$. Since no component of $G$ is a cycle, such a vertex must exist.
Let $B_1=B \cup v'$ and $B_2=B \cup v' \setminus v$.
We have the $2$-core of $G \setminus B_2$ is the same as the $2$-core of $G \setminus B_1$ and the $2$-core of $G \setminus B_1$ is a subset of the $2$-core of $G \setminus B$.
So the $2$-core of $G \setminus B_2$ is a subset of the $2$-core of $G \setminus B$, and hence no larger than $x$.
Therefore $B_2$ is also a solution for $(G,b,x,2)$.
Following the same way, we can replace all degree 2 vertices in $B$ with vertices which have degree larger than 2, and get a new solution $B'$ for $(G,b,x,2)$.
\end{proof}

\emph{Algorithm:} 
Our algorithm for $k=2$ is similar to Algorithm \ref{algo:1} with two main differences (see Algorithm~\ref{algo:2} for pseudocode). 
First $|Q| > b+x$ is replaced by $|Q| > 3b+x$.
Second when selecting the maximum degree vertex from $V(G') \setminus Q$, we need to make sure that this vertex has degree greater than 2.
Otherwise, either we can directly select vertices from $V(G') \setminus Q$ to break cycles in $G'$ and get a $2$-core of size at most $x$, or the algorithm rejects this branch.

\begin{algorithm}[t]
\caption{Algorithm for \CKC\ with $k=2$}\label{algo:2}
\textsc{SolveRec2}$(G,S,Q)$\\
\lIf{$|S| > b$ or $|Q| > 3b+x$}{\textbf{return} No solution\label{algst:2:S_too_big}} 
Let $G'$ be the $2$-core of $G \setminus S$\label{algst:2:2-core}.\\
\lIf{$|V(G')| \le x$}{\textbf{return} $S$\label{algst:2:happy}}
\If{$\max_{v \in V(G') \setminus Q} \deg(v) \le 3$ \label{algst:2:low_degrees}}{
Let $C_1, \ldots, C_r$ be the connected components of $G'$ not containing vertices of $Q$, ordered such that $|V(C_1)| \ge |V(C_2)| \ge \ldots \ge |V(C_r)|$\label{algst:2:components}\;
Let $r'\leftarrow \min \{r, b -|S|\}$\label{algst:2:budget}\;
\If{$|V(G')|-\sum_{i=1}^{r'}|V(C_i)| \le x$\label{algst:2:check_size}}{
\lFor{$i=1, \ldots, r'$}{Select an arbitrary vertex from $C_i$ and add it to $S$}
\textbf{return} $S$\label{algst:2:greedy_end}
}
\lElse{\textbf{return} No solution\label{algst:2:no_greedy}}
}
Let $v$ be the vertex with the highest degree in $G'$ which is not in $Q$\label{algst:2:select}\;
$T \leftarrow$ \textsc{SolveRec2}$(G,S \cup \{v\},Q)$ \label{algst:2:rec_start}\;
\lIf{$T\neq \text{\normalfont{No solution}}$}{\textbf{return} $T$}
\lElse{\textbf{return} \textsc{SolveRec2}$(G,S,Q \cup \{v\})$ \label{algst:2:rec_end}} 

\end{algorithm}

We start by showing that Algorithm~\ref{algo:2} has the claimed running time.
\begin{lem}\label{lem:algo2runningtime}
Algorithm~\ref{algo:2} runs in $O(1.755^{x+4b}\cdot n)$ time.
\end{lem}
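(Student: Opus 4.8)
The plan is to mimic the analysis of \autoref{lem:alg1runningtime}, but now accounting for the new branching measure and the extra work done in the low-degree branch. As before, let $\mu = |S| + |Q|$ and note that once $\mu$ exceeds the threshold $4b + x$ (the maximum possible value of $|S|+|Q|$ given $|S|\le b$ and $|Q|\le 3b+x$), the call returns immediately in line~\ref{algst:2:S_too_big}, so $T(\mu) = O(1)$ there. First I would argue that every step of a single invocation, \emph{except} the two recursive calls in lines~\ref{algst:2:rec_start}--\ref{algst:2:rec_end}, runs in $O(n)$ time: computing the $2$-core in line~\ref{algst:2:2-core} is $O(m+n)$ by iteratively deleting vertices of degree at most~$1$; the degree check in line~\ref{algst:2:low_degrees}, the component decomposition in line~\ref{algst:2:components}, the sorting by component size, the prefix-sum check in line~\ref{algst:2:check_size}, and the greedy selection of vertices are all $O(m+n)$; since $G$ has $n$ vertices and we may assume $m = O(n)$ after a preprocessing step that bounds the degeneracy (or simply state the bound in terms of $(m+n)$ and absorb it), each non-recursive part is $O(n)$. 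Crucially, the low-degree branch (lines~\ref{algst:2:components}--\ref{algst:2:no_greedy}) terminates the recursion — it either returns a solution or rejects — so it contributes no branching.

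Next I would set up the branching recurrence. The only branching happens in lines~\ref{algst:2:rec_start}--\ref{algst:2:rec_end}, where the selected vertex $v$ has degree at least $3$ in $G'$ (because we only reach this point if line~\ref{algst:2:low_degrees} failed, i.e.\ some vertex of $V(G')\setminus Q$ has degree $\ge 4$; and $v$ is chosen of maximum degree, hence degree $\ge 4$). In the first branch $v$ is added to $S$, so $|S|$ increases by $1$ and the remaining budget on $S$ drops; in the second branch $v$ is added to $Q$. The point is that, because $v$ always has degree at least — I would argue — large enough, we should measure progress not just by $\mu$ but by a weighted combination that makes the second branch also make real progress. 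Concretely, I expect the intended measure is something like $\varphi = \alpha(b - |S|) + \beta(3b + x - |Q|)$ with suitable nonnegative weights $\alpha,\beta$, chosen so that the branching number of the recurrence $T(\varphi) \le T(\varphi - \alpha) + T(\varphi - \beta) + O(n)$ is at most $1.755$ when $\varphi$ ranges over the relevant interval; solving $z^{-\alpha} + z^{-\beta} = 1$ and optimizing the total exponent $x+4b$ against the base should yield the constant $1.755$. The maximum value of $\varphi$ is a linear function of $x$ and $b$ — with the right weights it works out to something proportional to $x + 4b$ — so unrolling the recurrence gives $T = O(1.755^{x+4b}\cdot n)$, exactly as in the telescoping computation of \autoref{lem:alg1runningtime} but with branching factor $1.755$ instead of $2$ and exponent $x+4b$ instead of $x+2b$.

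The main obstacle, and the step I would spend the most care on, is pinning down exactly how much progress each branch makes and therefore which weighted measure gives the base $1.755$. The subtlety is that adding $v$ to $Q$ does not obviously shrink the core — $Q$ is just a commitment not to delete $v$ — so the $1.755$ must come from a combined accounting: the $|Q| > 3b + x$ cutoff is set precisely so that a "large $Q$" certificate of non-solvability (analogous to \autoref{lem:Q<b+x}, adapted to $2$-cores and degree-$\ge 3$ deletions, where each deleted vertex can "save" at most $3$ edges) kicks in, which is why the coefficient is $3b$ rather than $b$. I would make this rigorous by: (i) confirming that in the branching step $v$ has degree $\ge 4$, hence both branches strictly decrease an appropriate potential by a bounded amount; (ii) choosing weights $\alpha$ for the "$|S|{+}1$" side and $\beta$ for the "$|Q|{+}1$" side so that $1.755$ is the root of the characteristic equation; and (iii) checking that $\max \varphi$ over all reachable states equals $x + 4b$ up to the chosen normalization, so the final bound reads $O(1.755^{x+4b}\cdot n)$. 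Everything else is the same telescoping argument as in \autoref{lem:alg1runningtime}; I would present it by displaying the unrolled inequality
\[
T(0) \le 1.755\cdot T(1) + O(n) \le \cdots = O(1.755^{x+4b}\cdot n),
\]
with the understanding that the "$1$" increments are the weighted steps just described.
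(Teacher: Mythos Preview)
Your approach is genuinely different from the paper's, and while it can be made to work, the plan as written has one red herring and one real gap.

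The paper does not use a weighted potential at all. It observes that in the recursion tree every root-to-leaf path has length at most $4b+x+2$ (since the sizes grow by one and line~\ref{algst:2:S_too_big} stops once $|S|=b+1$ or $|Q|=3b+x+1$), and along any such path at most $b+1$ of the steps are ``add to $S$''. Hence the total number of calls is at most $2\binom{4b+x+2}{b+1}-1$, proved by a one-line induction using Pascal's rule. Since $b+1\le \tfrac14(4b+x+4)$, this binomial is at most $\binom{z}{z/4}$ with $z=4b+x+4$, and the standard estimate $\binom{z}{z/4}=O(1.7549^{z})$ gives the $1.755^{4b+x}$ factor. The per-call cost is $O(m+n)$; the paper argues that in a yes-instance $m=O(bn+x^2)$ (not via degeneracy, which does not obviously work), and the resulting polynomial in $b,x$ is absorbed in the gap between $1.7549$ and $1.755$.

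The red herring in your plan is step~(i): the degree of the branching vertex $v$ is completely irrelevant to the running time. Each branch increments exactly one of $|S|,|Q|$ by $1$, independently of $\deg(v)$; the degree lower bound is used only in the correctness proof (the analogue of \autoref{lem:Q<b+x}) and has no bearing on the recursion-tree size.

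The real gap is step~(ii)/(iii). With $\varphi=\alpha(b-|S|)+\beta(3b+x-|Q|)$ the two branches give the recurrence $T(\varphi)\le T(\varphi-\alpha)+T(\varphi-\beta)$ with branching number $z$ satisfying $z^{-\alpha}+z^{-\beta}=1$. If you also require $\varphi_{0}=\alpha b+\beta(3b+x)$ to equal $x+4b$ for all $b,x$, you are forced to $\alpha=\beta=1$ and hence $z=2$, not $1.755$; your final telescoping display with factor $1.755$ then does not follow. What actually works is the asymmetric choice $z^{-\alpha}=\tfrac14$, $z^{-\beta}=\tfrac34$ (the entropy-optimal weights for ratio $\tfrac14$), which gives $z\approx 1.7549$ and $\varphi_{0}\approx 4b+0.512x\le 4b+x$, so $z^{\varphi_{0}}\le 1.755^{4b+x}$. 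That computation is exactly what your plan omits, and without it the claimed base is not justified. The paper's binomial-coefficient count sidesteps this entirely and is both shorter and more transparent here.
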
 
\begin{proof}
We first show by induction on the size of $S \cup Q$ starting with the largest size achieved that a call with $S$ and $Q$ results in at most $2(\binom{4b+x+2-|S|-|Q|}{b+1-|S|})-1$ calls to the function in total.
Indeed, the size of $|Q|$ never exceeds $3b+x+1$ and the size of $|S|$ never exceeds $b+1$ since the sizes grow by one and if they achieve the bound, then line~\ref{algst:2:S_too_big} applies. Hence, if any of the lines~\ref{algst:2:S_too_big}--\ref{algst:2:no_greedy} applies, then we have only one call and $|Q| \le 3b+x+1$ and $|S|\le b+1$ implies $2(\binom{4b+x+2-|S|-|Q|}{b+1-|S|})-1 \ge 1$, making the basic cases. If the call makes recursive calls, then in one of them $S$ is one larger than in the current one, and if the second one is made, then $Q$ is one larger in it. Hence the number of calls is at most $1+ 2(\binom{4b+x+2-|S|-|Q|-1}{b+1-|S|-1})-1+2(\binom{4b+x+2-|S|-|Q|-1}{b+1-|S|})-1 \le 2(\binom{4b+x+2-|S|-|Q|}{b+1-|S|})-1$, finishing the induction.

Since we call the algorithm with sets $S$ and $Q$ empty, it follows that the total number of calls is at most $2(\binom{4b+x+2}{b+1}) \le 2\binom{4b+x+4}{b+1}$. Using, e.g., the lemma of Fomin et al.~\cite[Lemma 10]{FominKPPV14} (or Stirling's approximation) one can show that $\binom{z}{\frac{1}{4}z} = O(1.7549^z)$. Hence, $\binom{4b+x+4}{b+1} \le \binom{4b+x+4}{\frac{1}{4}(4b+x+4)} =O(1.7549^{4b+x+4})$. So we have that the number of recursive calls is at most $O(1.7549^{4b+x})$.

Now we analyze the time complexity of a single call.
In line \ref{algst:2:2-core}, the $2$-core $G'$ can be found in $O(m)$ time by recursively removing vertices with degree less than 2 in $G \setminus S$~\cite{batagelj2003m}.
In line \ref{algst:2:components}, sorting these circles according to their sizes can be done in $O(n)$ time using counting sort. 
Thus except for Line \ref{algst:2:rec_start} and \ref{algst:2:rec_end}, all steps can be done in $O(m+n)$. 
Since there are at most $O(bn+x^2)$ edges or we are facing a no-instance, we have that the time complexity of the algorithm is $O(1.7549^{4b+x}\cdot b^{O(1)}x^{O(1)}n)$, which is $O(1.755^{4b+x}\cdot n)$ as claimed.
\end{proof}
Next we show the claimed conditional lower bound on the running time for any algorithm for \CKC\ with $k=2$.
\begin{lem}\label{lem:k2lowerbound}
Assuming the Exponential Time Hypothesis, there is no $2^{o(b)+f(x)}n^{O(1)}$ time algorithm for \CKC\ with $k=2$, for any function $f$.
\end{lem}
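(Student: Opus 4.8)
The statement is the conditional lower bound of \autoref{thm:fpt_b+x_k=2}: assuming ETH, there is no $2^{o(b)+f(x)}n^{O(1)}$ time algorithm for \CKC\ with $k=2$, for any function $f$. The plan is to mimic \autoref{lem:k1LB}: reduce from a problem with a known ETH lower bound by taking $x=0$. With $x=0$ and $k=2$, \CKC\ is exactly \DVD\ with $r=1$, which is \textsc{Feedback Vertex Set}. So it suffices to invoke the known ETH lower bound for \textsc{Feedback Vertex Set}: there is no $2^{o(b)}n^{O(1)}$ time algorithm, where $b$ is the solution size, unless ETH fails. This follows, e.g., from the standard linear reduction from \textsc{Vertex Cover} (or directly from \textsc{3-SAT} via the usual chain of reductions), together with the ETH-based lower bound $2^{o(n+m)}$ for \textsc{3-SAT}; the relevant statement for \textsc{Feedback Vertex Set} is recorded in the literature (e.g.~\cite{cygan2015parameterized,lokshtanov2013lower}).

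First I would fix $x=0$ and observe, exactly as in \autoref{obs:non_empty_core} and the discussion preceding it, that $(G,b,0,2)$ is a yes-instance of \CKC\ if and only if $G-S$ is $1$-degenerate for some $S$ with $|S|\le b$, i.e.\ if and only if $G$ has a feedback vertex set of size at most $b$. Then, if there were a $2^{o(b)+f(x)}n^{O(1)}$ time algorithm for \CKC\ with $k=2$ for some function $f$, plugging in $x=0$ would give a $2^{o(b)}\cdot 2^{f(0)}\cdot n^{O(1)} = 2^{o(b)}n^{O(1)}$ time algorithm for \textsc{Feedback Vertex Set} parameterized by solution size, contradicting ETH. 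This is the entire argument; there is essentially no technical obstacle, since the only external ingredient is the (standard) ETH lower bound for \textsc{Feedback Vertex Set}.

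The only point needing a little care is making sure we are quoting the lower bound for \textsc{Feedback Vertex Set} in the right form (no $2^{o(\sqrt n)}$-type loss), so I would cite the linear-size reduction from \textsc{Vertex Cover} — which preserves the solution-size parameter up to a constant factor and blows up the instance only polynomially (in fact linearly) — combined with the ETH lower bound for \textsc{Vertex Cover} used in \autoref{lem:k1LB}. That chain gives exactly "no $2^{o(b)}n^{O(1)}$ time algorithm for \textsc{Feedback Vertex Set}", which transfers verbatim to \CKC\ with $k=2$ via the $x=0$ identification above.

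\begin{proof}
Recall that \CKC\ with $k=2$ and $x=0$ is the same problem as \DVD\ with $r=1$, which in turn is \textsc{Feedback Vertex Set} (a graph is $1$-degenerate if and only if it is a forest). By the classical linear parameter-preserving reduction from \textsc{Vertex Cover} to \textsc{Feedback Vertex Set} together with the fact that, assuming the Exponential Time Hypothesis, there is no $2^{o(k)}n^{O(1)}$ time algorithm for \textsc{Vertex Cover}~\cite{lokshtanov2013lower}, we have that, assuming the Exponential Time Hypothesis, there is no $2^{o(b)}n^{O(1)}$ time algorithm for \textsc{Feedback Vertex Set} parameterized by the solution size~$b$~\cite{cygan2015parameterized}. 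Now suppose there were a $2^{o(b)+f(x)}n^{O(1)}$ time algorithm for \CKC\ with $k=2$ for some function~$f$. Running it on instances with $x=0$ would solve \textsc{Feedback Vertex Set} in time $2^{o(b)+f(0)}n^{O(1)}=2^{o(b)}n^{O(1)}$, a contradiction. Hence, assuming the Exponential Time Hypothesis, there is no $2^{o(b)+f(x)}n^{O(1)}$ time algorithm for \CKC\ with $k=2$, for any function~$f$.
\end{proof}
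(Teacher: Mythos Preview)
Your proof is correct and follows essentially the same approach as the paper: identify \CKC\ with $k=2$ and $x=0$ as \textsc{Feedback Vertex Set}, invoke the known ETH lower bound for \textsc{Feedback Vertex Set}~\cite{lokshtanov2013lower}, and conclude by specializing a hypothetical $2^{o(b)+f(x)}n^{O(1)}$ algorithm to $x=0$. The paper's version is slightly terser (it cites the \textsc{Feedback Vertex Set} lower bound directly rather than deriving it via \textsc{Vertex Cover}), but the argument is the same.
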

\begin{proof}
Since \CKC\ with $k=2$ and $x=0$ is equivalent to \FVS, and assuming the Exponential Time Hypothesis, there is no $2^{o(k)}n^{O(1)}$ time algorithm for \FVS~\cite{lokshtanov2013lower},
we have that assuming the Exponential Time Hypothesis, there is no $2^{o(b)+f(x)}n^{O(1)}$ time algorithm for \CKC\ with $k=2$, where $f$ is an arbitrary function.
\end{proof}

Before showing the correctness of Algorithm \ref{algo:2}, we prove the following lemmata which will be helpful in the correctness proof.
The next lemma helps to show that line \ref{algst:2:no_greedy} is correct.
\begin{lem}
\label{lem:line_14_correct}
If line \ref{algst:2:no_greedy} of Algorithm \ref{algo:2} applies, and there is no solution containing whole $S$ and no vertex from $Q$, then there is no solution containing the whole $S$.
\end{lem}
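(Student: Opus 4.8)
I plan to prove the contrapositive: under the assumption that line~\ref{algst:2:no_greedy} applies, I will show that if there is a solution $B$ with $S\subseteq B$, then there is also a solution $B'$ with $S\subseteq B'$ and $B'\cap Q=\emptyset$, which contradicts the hypothesis. Recall that ``line~\ref{algst:2:no_greedy} applies'' unpacks to: $|S|\le b$, $|Q|\le 3b+x$, the $2$-core $G'$ of $G\setminus S$ satisfies $|V(G')|>x$, every vertex of $V(G')\setminus Q$ has degree at most~$2$ (hence exactly~$2$) in~$G'$, and, writing $C_1,\dots,C_r$ for the connected components of $G'$ disjoint from~$Q$ ordered by non-increasing size and $r'=\min\{r,b-|S|\}$, we have $|V(G')|-\sum_{i=1}^{r'}|V(C_i)|>x$. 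First I would normalise~$B$: since deleting a vertex that is not in the $2$-core of $G\setminus S$ does not change that $2$-core, and $S\subseteq B$, we may assume $B\setminus S\subseteq V(G')$; put $Z:=B\setminus S$, so $|Z|\le b-|S|$ and the $2$-core of $G'\setminus Z$ has size at most~$x$.

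Next I would record the structure of~$G'$ forced by the degree condition: every~$C_i$ is a cycle, so deleting any one of its vertices removes the whole of~$C_i$ from the $2$-core, and every vertex of degree at least~$3$ in~$G'$ lies in~$Q$, so each component meeting~$Q$ consists of degree-$\ge 3$ vertices of~$Q$ joined by maximal paths of degree-$2$ vertices. The plan is then to ``push'' $Z$ off~$Q$: for a vertex $z\in Z$ of degree~$2$ lying on a maximal degree-$2$ block (a path between two branch vertices, or a cycle component) that also contains a vertex outside~$Q$, replace~$z$ by such a vertex; this keeps $|Z|$ unchanged and does not enlarge the $2$-core of $G'\setminus Z$, since cutting that block has the same effect no matter where inside it is done. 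After doing this exhaustively, the vertices of $Z\cap Q$ that remain are the degree-$\ge 3$ branch vertices and the degree-$2$ vertices lying on blocks built entirely from~$Q$. For these I would argue, via a collapse-cascade analysis together with \autoref{lem:degree_at_least_3} applied to the union of the non-cycle components of~$G'$ (whose degree-$\ge 3$ vertices are all in~$Q$), that a deletion budget is never used more productively inside a component meeting~$Q$ than on the largest so-far-untouched~$C_i$'s; hence $Z$ can be rerouted to some $Z'\subseteq V(G')\setminus Q$ with $|Z'|\le|Z|\le b-|S|$ whose deletion shrinks the $2$-core at least as much, so that $B':=S\cup Z'$ is the desired solution with $S\subseteq B'$ and $B'\cap Q=\emptyset$. (Equivalently, one gets the counting statement that no $Z\subseteq V(G')$ of size $\le b-|S|$ leaves a $2$-core of size at most~$x$, since the most it can collapse is $\sum_{i=1}^{r'}|V(C_i)|$ and $|V(G')|-\sum_{i=1}^{r'}|V(C_i)|>x$, directly contradicting that $B$ is a solution.)

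The step I expect to be the main obstacle is precisely this last one: establishing that deleting~$t$ vertices inside a component of~$G'$ meeting~$Q$ never collapses more vertices than deleting~$t$ suitably chosen vertices from the cycles $C_1,\dots,C_r$. This needs a careful study of how the $2$-core collapse cascades in a $2$-core all of whose branch vertices are ``protected'' in~$Q$ — in particular ruling out that one cheap deletion destroys a long degree-$2$ path, or an entire $Q$-heavy cyclic block, at a better rate than a single~$C_i$ — and it is here that both the hypothesis that there is no solution containing~$S$ and avoiding~$Q$ (equivalently: no budget-$(b-|S|)$ deletion set inside $V(G')\setminus Q$ reduces the $2$-core to size at most~$x$, which is stronger than the mere failure of the test in line~\ref{algst:2:check_size}) and \autoref{lem:degree_at_least_3} are indispensable; some additional care is needed for components meeting~$Q$ that happen to be cycles, which must be weighed against the~$C_i$ by size.
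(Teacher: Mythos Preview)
There is a genuine gap, rooted in a misreading of the hypothesis. You take ``there is no solution containing whole $S$ and no vertex from $Q$'' to mean that no solution $B\supseteq S$ satisfies $B\cap Q=\emptyset$; the intended meaning---consistent with the algorithm's invariant in the proof of \autoref{thm:fpt_b+x_k=2} and with how the paper's own argument uses the hypothesis---is that no solution $B\supseteq S$ satisfies $B\cap Q\neq\emptyset$. Under your reading the lemma is simply false. Take $G'$ to be the disjoint union of a theta graph $D$ on $1000$ vertices (two degree-$3$ vertices $u,v\in Q$ joined by three long internally degree-$2$ paths) and a triangle $C_1$ disjoint from $Q$; set $b-|S|=1$ and $x=3$. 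Then line~\ref{algst:2:no_greedy} applies ($|V(G')|-|V(C_1)|=1000>3$), and $S\cup\{u\}$ is a solution since deleting $u$ collapses all of $D$; yet no single deletion outside $Q$ works: hitting $C_1$ leaves $D$ intact, and hitting a degree-$2$ vertex of $D$ leaves a long cycle through $u$ and $v$ together with $C_1$. So the claim you flag as your ``main obstacle''---that a deletion in a $Q$-meeting component is never more productive than one in some $C_i$---is not merely hard to prove but false, and your plan to reroute $Z$ off $Q$ cannot succeed.

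Under the correct reading the proof runs in the opposite direction: one pushes the solution \emph{onto} $Q$, not off it. After normalising $Z=B\setminus S\subseteq V(G')$ as you did, partition $V(G')$ into the union $D$ of components containing a degree-$\ge 3$ vertex, the union $C_0$ of cycle components meeting $Q$, and the cycles $C_1,\dots,C_r$. If $Z$ meets $D$, apply \autoref{lem:degree_at_least_3} to $(D,|Z\cap V(D)|,x',2)$ to replace $Z\cap V(D)$ by a set of degree-$\ge 3$ vertices of $D$---all of which lie in $Q$---producing a solution that contains $S$ and meets $Q$, contradicting the hypothesis. If $Z$ meets $C_0$, swap any such vertex for the $Q$-vertex on its cycle; same contradiction. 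Hence $Z\subseteq\bigcup_{i=1}^r V(C_i)$, so $Z$ destroys at most $r'$ of these cycles and the $2$-core of $G'\setminus Z$ has size at least $|V(G')|-\sum_{i=1}^{r'}|V(C_i)|>x$, contradicting that $B$ is a solution. In particular, \autoref{lem:degree_at_least_3} is used to move deletions \emph{towards} $Q$, the reverse of the role you envisaged for it.
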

\begin{proof}
Suppose for contradiction that $B$ is a solution containing $S$.
Note that in this case $B'= B\setminus S$ is a solution for $(G',b -|S|, x, 2)$.
Let $D$ be the graph formed by the union of connected components of $G'$ containing vertices of degree at least 3 and $C_0$ be the graph formed by the union of connected components of $G'$ which are cycles and contain vertices of $Q$, that is, $V(D) \cup V(C_0) = V(G') \setminus \bigcup_{i=1}^{r}V(C_i)$, where $C_i$ is as stated in the algorithm.

If $B^D= B\cap V(D)$ is nonempty and $x'$ is the size of the $2$-core of $D \setminus B^D$, then $B^D$ is a solution to the instance $(D,|B^D|,x',2)$. Hence, by Lemma~\ref{lem:degree_at_least_3}, there is another solution 
$B^D_3$ to the instance $(D,|B^D|,x',2)$ which contains only vertices of degree at least 3. However all these vertices are in $Q$, and $(B \setminus B^D) \cup B^D_3$ is a solution to $(G,b,x,2)$ containing whole $S$ and vertices of $Q$, contradicting our assumption.
Hence $B^D$ is empty. 

If $B^C= B\cap V(C_0)$ is nonempty, then let $y$ be any vertex in $B^C$ and $C_y$ the connected component of $C_0$ containing $y$. By the definition of $C_0$ component $C_y$ contains a vertex of $Q$, let us denote it $y'$.
Let $B'= (B \setminus \{y\}) \cup \{y'\}$. The $2$-cores of $G \setminus B$ and $G \setminus B$ are the same, since $C_y$ is a cycle. 
Thus $B'$ is a solution to $(G,b,x,2)$ containing whole $S$ and vertices of $Q$, contradicting our assumption.
Hence also $B^C$ is empty. 

The components of $G'$ neither in $C_0$ nor in $D$ are cycles since $G'$ is a $2$-core, they contain no vertices of $Q$, and there are no other vertices of degree at least 3. Since $B$ contains at most $r'= \min \{r, b -|S|\}$ vertices out of these components, it can destroy at most $r'$ of these cycles. Since $|V(C_1)| \ge |V(C_2)| \ge \ldots \ge |V(C_r)|$, this decreases the size of the $2$-core by at most $\sum_{i=1}^{r'}|V(C_i)|$. Thus, if $|V(G')|-\sum_{i=1}^{r'}|V(C_i)| > x$, then there is no solution containing whole $S$.
\end{proof}

Next, we show that the second part of line \ref{algst:2:S_too_big} of Algorithm \ref{algo:2} is correct.
\begin{lem}
\label{lem:line_3_correct}
If $Q$ is of size more than $3b+x$, then there is no solution containing whole $S$ and no vertex from $Q$.
\end{lem}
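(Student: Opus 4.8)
The plan is to follow the proof of \autoref{lem:Q<b+x} for $k=1$, the point being that the degree-at-least-$4$ condition tested in line~\ref{algst:2:low_degrees} of Algorithm~\ref{algo:2} replaces the bare ``degree at least $1$'' available in the $k=1$ case and is what produces the factor $3$ in front of $b$. Suppose for contradiction that $|Q| \ge 3b+x+1$ and that there is a solution $B$ with $S \subseteq B$, $B \cap Q = \emptyset$ and $|B| \le b$. As in \autoref{lem:Q<b+x}, I would order $B \cup Q$ as $v_1,\dots,v_r$ so that $v_1,\dots,v_{r'}$ are the vertices of $S \cup Q$ in the order in which the recursion added them and $v_{r'+1},\dots,v_r$ are the remaining vertices of $B$, let $G_t$ be the $2$-core of $G$ minus the vertices of $B$ among $v_1,\dots,v_{t-1}$, write $\deg_t(\cdot)$ for degrees in $G_t$, and set $G' := G_{r+1}$ (the $2$-core of $G \setminus B$) and $U := V(G)\setminus B \setminus V(G')$. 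Since $|V(G')| \le x$ and $B \cap Q = \emptyset$ we have $|Q \cap U| \ge 3b+1$. Two facts carry over from the $k=1$ proof: degrees are nonincreasing along $G_1 \supseteq G_2 \supseteq \dots$; and, because line~\ref{algst:2:select} always picks a maximum-degree vertex outside $Q$ while $B \cap Q = \emptyset$, for $v_i \in B$, $v_j \in Q$ with $j \le i$ and $v_i \in V(G_j)$ one gets $\deg_i(v_i) \le \deg_j(v_j)$, with in addition $\deg_j(v_j) \ge 4$ for every $v_j \in Q$.

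The counting step is where the cases differ. Fix $v_j \in Q\cap U$. Because $v_j$ is eventually removed from the $2$-core, just before it vanishes it has degree at most $1$, so it loses at least $\deg_j(v_j) - 1 \ge 3$ of the edges it had in $G_j$, and each such edge leads to a vertex that is removed before $v_j$, i.e.\ to a vertex of $B$ or to a vertex of $U$. The edges going to $B$ can be charged to the numbers $\deg_i(v_i)$ exactly as for $k=1$ (each such edge is still present in $G_i$), and the edges going to $U$ can be charged, using the ``last neighbour'' forest on $U$ (every collapsed vertex has at most one neighbour present at the moment it is removed), to the vertices of $U$. This should give an inequality of the form $3\,|Q\cap U| \le \sum_{v_i \in B}\deg_i(v_i) + (\text{a term measuring how much of }U\text{ collapses onto }Q\cap U)$. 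On the other side, since $|Q\cap U| \ge 3|B|+1$, I would build---again from the right, and, whenever a chosen triple fails to precede its $B$-vertex in the sequence $v_1,\dots,v_r$, by passing to the graph $G_{j_0}$ for the offending index $j_0$ exactly as in \autoref{lem:Q<b+x}---an injection sending each $v_i \in B$ to three distinct vertices of $Q\cap U$ preceding it, leaving at least one vertex $v_{q_0}\in Q\cap U$ unmatched. Together with the comparison $\deg_i(v_i)\le\deg_j(v_j)$ this yields $3\sum_{v_i\in B}\deg_i(v_i) \le \sum_{v_j\in Q\cap U}\deg_j(v_j) - \deg_{q_0}(v_{q_0})$, and the two inequalities combined should close the contradiction.

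The hard part is the term ``measuring how much of $U$ collapses onto $Q\cap U$''. For $k=1$ a vertex of $Q\setminus X$ has all of its neighbours in $B$ and leaves the core only once every one of them has been deleted, so there is nothing of this kind; for $k=2$, however, a collapsed vertex of $Q$ can shed edges into long chains of degree-$2$ vertices that cascade out of the core, and the obvious bound on the extra term is merely $|U|$, which is not controlled by $b$. Obtaining a usable bound is the real work, and I expect it to go through \autoref{lem:degree_at_least_3}: equivalently, one first suppresses all degree-$2$ vertices and runs the whole argument in the resulting multigraph of minimum degree $3$, in which the vertices of $Q\cap U$ still have degree at least $4$ and the structure that collapses onto them is genuinely forest-like, of total size comparable to the degrees that the injection already pays for.
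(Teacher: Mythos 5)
Your proposal reproduces the scaffolding of the paper's argument correctly: the ordering of $B\cup Q$ by time of insertion, the injection mapping each $v_i\in B$ to three consecutive vertices of $Q\setminus X$ built from the right, the degree monotonicity $\deg_i(v_i)\le\deg_j(v_j)$, the leftover unmatched vertex, and the fallback to $G_{j_0}$ when some triple does not precede its $B$-vertex are all exactly what the paper does. But the step you explicitly flag as ``the real work'' is indeed the heart of the proof, and it is missing. The paper closes it not via \autoref{lem:degree_at_least_3} or suppression of degree-$2$ vertices, but by orienting every edge outside $X$ according to the order in which vertices collapse and observing that, since $k=2$, every collapsed vertex has at most one outgoing edge. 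Writing $Q'=Q\setminus X$ and letting $Y$ be the collapsed vertices outside $Q$, a conservation argument over $Y$ (each $Y$-vertex has at least one incoming and at most one outgoing edge) gives $N_{\overrightarrow{YQ'}} \le N_{\overrightarrow{BY}}+N_{\overrightarrow{Q'Y}}$, and since $N_{\overrightarrow{Q'Y}}$ plus the number of $Q'$-to-$Q'$ edges is at most $|Q'|$, the total number of edges charged to $Q'$ is at most $\sum_{v_i\in B}\deg_i(v_i)+2|Q'|$. The slack of $2$ per vertex of $Q'$ is exactly what $\deg_j(v_j)\ge 3$ absorbs, and it is precisely why each $B$-vertex must be matched to \emph{three} vertices of $Q'$, i.e., why the threshold is $3b+x$. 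Your sketch never produces a bound on the $Y\to Q'$ contribution in terms of quantities controlled by $b$, and your final inequality with left-hand side $3|Q\cap U|$ would in any case be too weak to meet the injection-based bound $3\sum_{v_i\in B}\deg_i(v_i)$, which requires the full $\sum_{v_j}(\deg_j(v_j)-2)$ on that side.

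Your proposed repair route is also unlikely to work as stated. \autoref{lem:degree_at_least_3} is an exchange argument on the solution $B$ (replacing degree-$2$ solution vertices by nearby degree-$\ge 3$ vertices); it says nothing about accounting for the edges shed during the cascade, which is what you need here. Moreover, suppressing degree-$2$ vertices does not eliminate the problematic term: the cascade also propagates through collapsed vertices of degree $3$ or more in $Y$ (each such vertex absorbs at least two lost edges before emitting one), so even in the suppressed multigraph you would still need the in-flow/out-flow comparison over $Y$ that the paper's orientation argument provides. In short, the combinatorial skeleton is right, but the one idea that makes the $k=2$ case go through --- the partial orientation by collapse time together with the one-outgoing-edge property --- is absent, so the proof as proposed does not close.
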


\begin{proof}
Suppose for contradiction that there is a set $Q$ of size at least $3b+x+1$ and a solution $B$ such that $|B| \le b$, $S \subseteq B$ and $B \cap Q=\emptyset$. Let $v_1,v_2, \ldots, v_{r'}$ be the vertices of the set $S \cup Q$ in the order as they were added to the set by successive recursive calls. Moreover, if $B \setminus S$ is empty, then let $r=r'$. Otherwise, let $B \setminus S = \{v_{r'+1}, \ldots, v_r\}$, i.e., in both cases $\{v_1, \ldots, v_r\} = B \cup Q$.
Let $G'$ be the $2$-core of $G\setminus B$ and let $X = V(G') \cap Q$. Since $B$ is a solution, we know that $|X| \le x$.

We construct a function $f$ that maps every vertex of $B$ to a set of three consecutive vertices of $Q \setminus X$ (see also Figure \ref{fig:function}). 
First let $v_i$ be the vertex in $B$ with the largest $i$. 
We let $f(i)$ be the set $\{j_1,j_2,j_3\}$, where $j_1=\max\{j \mid v_j \in Q \setminus X\}$, $j_2=\max\{j<j_1 \mid v_j \in Q \setminus X\}$, and $j_3=\max\{j<j_2 \mid v_j \in Q \setminus X\}$.
Now let $v_i$ be the vertex from $B$ with the largest $i$ such that $f(i)$ was not set yet and $v_{i'}$ be the vertex from $B$ with the least $i'$ such that $i'>i$. We set $f(i) = \{j_1,j_2,j_3\}$, where $j_1=\max\{j< \min\{k \in f(i')\} \mid v_j \in Q \setminus X\}$, $j_2=\max\{j<j_1 \mid v_j \in Q \setminus X\}$, and $j_3=\max\{j<j_2 \mid v_j \in Q \setminus X\}$. Since the set $Q \setminus X$ contains at least $3b+1$ vertices, while $B$ contains at most $b$, this way we find a mapping for every vertex in $B$, keeping the images of different vertices disjoint. Moreover, denote $p=|Q \setminus X|-3|B| \geq 1$. There remains $p$ vertices in $Q \setminus X$ not being in the union of images of $f$, let us denote them $v_{q_1}, \dots, v_{q_{p}}$.

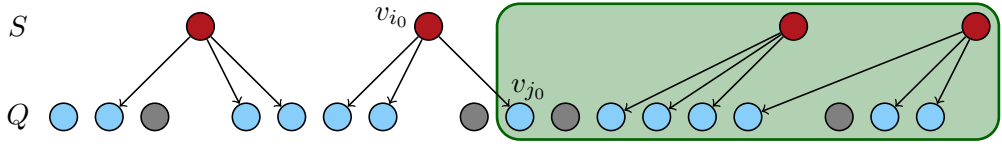
\begin{figure}[t]
\begin{center}
\begin{tikzpicture}[scale=.6, line width=.6pt]

\node (S) at (4,2) {$S$};
\node (S) at (4,0) {$Q$};

\node[draw,circle,fill=ourred] (s2) at (8,2) {};
\node[draw,circle,fill=ourred] (s3) at (13,2) {};
\node[draw,circle,fill=ourred] (s4) at (21,2) {};
\node[draw,circle,fill=ourred] (s5) at (25,2) {};

\node[draw,circle,fill=ourblue] (q3) at (5,0) {};
\node[draw,circle,fill=ourblue] (q4) at (6,0) {};
\node[draw,circle,fill=ourblue] (q5) at (9,0) {};
\node[draw,circle,fill=ourblue] (q6) at (10,0) {};
\node[draw,circle,fill=ourblue] (q7) at (11,0) {};
\node[draw,circle,fill=ourblue] (q8) at (12,0) {};
\node[draw,circle,fill=ourblue] (q9) at (15,0) {};
\node[draw,circle,fill=ourblue] (q10) at (17,0) {};
\node[draw,circle,fill=ourblue] (q11) at (18,0) {};
\node[draw,circle,fill=ourblue] (q12) at (19,0) {};
\node[draw,circle,fill=ourblue] (q13) at (20,0) {};
\node[draw,circle,fill=ourblue] (q14) at (23,0) {};
\node[draw,circle,fill=ourblue] (q15) at (24,0) {};

\node[draw,circle,fill=gray] (x2) at (7,0) {};
\node[draw,circle,fill=gray] (x1) at (14,0) {};
\node[draw,circle,fill=gray] (x1) at (16,0) {};
\node[draw,circle,fill=gray] (x2) at (22,0) {};

\draw[->] (s2) -- (q4);
\draw[->] (s2) -- (q5);
\draw[->] (s2) -- (q6);
\draw[->] (s3) -- (q7);
\draw[->] (s3) -- (q8);
\draw[->] (s3) -- (q9);
\draw[->] (s4) -- (q10);
\draw[->] (s4) -- (q11);
\draw[->] (s4) -- (q12);
\draw[->] (s5) -- (q13);
\draw[->] (s5) -- (q14);
\draw[->] (s5) -- (q15);

\node (i0) at (12.2,2.2) {$v_{i_0}$};
\node (i0) at (15.2,0.6) {$v_{j_0}$};

\begin{pgfonlayer}{bg}

\draw[rounded corners=3mm, fill=ourgreen!30] (14.5,2.5) -- (14.5,-.5) -- (25.5,-.5) -- (25.5,2.5) -- cycle;
\draw[rounded corners=3mm, line width=1pt,draw=ourgreen] (14.5,2.5) -- (14.5,-.5) -- (25.5,-.5) -- (25.5,2.5) -- cycle;

\end{pgfonlayer}

\end{tikzpicture}
\end{center}
\caption{Illustration of function $f$. Vertices in $Q$ are separated into two parts: gray vertices from~$X$ and blue vertices from $Q'=Q \setminus X$.
Every vertex in $S$ is mapped to a set of three consecutive blue vertices in $Q'$ from the right to the left. Graph $G_{j_0}$ with the property that $i > \max\{j \mid j \in f(i)\}$ for every $i$ with $v_i \in B\cap V(G_{j_0})$ is contained in the green box.
}
\label{fig:function}
\end{figure}

For $t \in \{1, \ldots, r\}$ let $G_t$ be the $2$-core of the graph $G \setminus (B \cap \{v_1, \ldots, v_t\})$. For $v \in V(G_t)$ let $\deg_t(v)$ be the degree of the vertex $v$ in $G_t$. 
By the way we selected $v_t$ we again know that $ \deg_t(v_t) \ge \deg_{t} (v_{t'}) \ge \deg_{t'} (v_{t'}) $ for every $t < t' \le r'$ and $\deg_t(v_t) \ge \deg_{t} (v_{t'}) \ge \deg_{t'} (v_{t'})$ for every $t \le r' < t' $. 
Note also that $\deg_t(v_t) \ge 3$ for all $v_t \in Q$.

If for every vertex $v_i \in B$ we have $i > \max\{j \mid j \in f(i)\}$, then $\deg_i(v_i)\leq \deg_j(v_{j})$ for vertex $v_i$ in $B$ and every $j \in f(i)$.
Together with $\deg_t(v_t) \ge 3$ for all $v_t \in Q$, we have $\deg_i(v_i) - \sum_{j \in f(i)}\deg_t(v_{j})+6 \le 0$
for every vertex $v_i$ in $B$. 
Let $V=B \uplus Q' \uplus X \uplus Y$, where $Q'=Q \setminus X$ and $Y$ is the set of collapsed vertices not contained in $Q$.
Now we transform graph $G$ into a partial directed graph by considering the collapsing process (see also \autoref{fig:directed_graph}).
More precisely, for every edge in $G$, except for edges which have two endpoints in $X$, we will assign it a direction.
To this end, we just need to give an order of vertices in $V(G) \setminus X$.
We define this order based on the time vertices collapse.
It may happens that several vertices in~$Q'$ or $Y$ collapse at the same time.
For this situation, we just order these vertices according to an arbitrary but fixed order.
Since $k=2$, every collapsed vertex in $Q' \cup Y$ has at most one outgoing edge.
Let us consider the number, denoted by $N$, of edges of the form $\overrightarrow{v_iv_j}$ such that the head $v_j$ is in $Q'$.
Since every vertex in $Q'$ has at most one outgoing edge,
we have $N \ge \sum_{v_j \in Q'} \deg_j(v_j)-|Q'|$.

\begin{figure}[t]
\begin{center}
\begin{tikzpicture}[line width=1pt, scale=1] 
 
    \draw (-4,0) circle (0.7cm);
    \draw (0,0) circle (0.7cm);
    \draw (1,0) circle (0.7cm);
    \draw (-2,-2) circle (0.7cm);
    
    \node (B) at (-4,0) {$B$};
    \node (Q') at (0,0) {$Q'$};
    \node (X) at (1,0) {$X$};
    \node (Y) at (-2,-2) {$Y$};
    
    \draw[->] (B) -- (Q');
    \draw[->] (B) -- (Y);
    \draw[->] (Y) -- (Q');
    \draw[->] (Q') -- (Y);
    
    \node (BQ') at (-2,0.3) {$N_{\overrightarrow{BQ'}}$};
    \node (BY) at (-3.4,-1.2) {$N_{\overrightarrow{BY}}$};
    \node (YQ') at (-1.6,-0.8) {$N_{\overrightarrow{YQ'}}$};
    \node (Q'Y) at (-0.4,-1.2) {$N_{\overrightarrow{Q'Y}}$};
    
\end{tikzpicture}
\end{center}
\caption{Illustration of the partial directed graph when considering the collapsing process. The set $B$ contains the deleted vertices and $X$ is the set of vertices remaining in the $2$-core of $G \setminus B$. The set $Q'$ contains vertices the algorithm has decided not to delete but eventually collapse and $Y$ is the set of other collapsed vertices.}
\label{fig:directed_graph}
\end{figure}
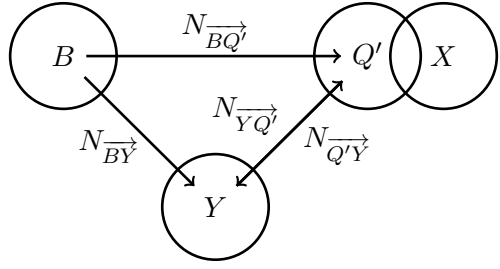

On the other hand, let 
\begin{compactitem}
\item $N_{\overrightarrow{BQ'}}$ be the number of edges going from $B$ to $Q'$;
\item $N_{\overrightarrow{BY}}$ be the number of edges going from $B$ to $Y$;
\item $N_{\overrightarrow{YQ'}}$ be the number of edges going from $Y$ to $Q'$;
\item $N_{\overrightarrow{Q'Y}}$ be the number of edges going from $Q'$ to $Y$.
\end{compactitem}

We claim that 
\begin{equation}
N_{\overrightarrow{YQ'}} \le N_{\overrightarrow{BY}}+N_{\overrightarrow{Q'Y}}.\label{claim1}
\end{equation}
Denote the number of edges in $G[Y]$ by $n_Y$.
Since every vertex in $Y$ has at least one incoming edge but at most one outgoing edge, 
we have $\sum_{v \in Y} \deg^-(v) \le \sum_{v \in Y} \deg^+(v)$,
where $\deg^{+}(v)$ ($\deg^{-}(v)$) is the number of incoming (outgoing) edges of vertex $v$ in $G$, respectively.
This means $N_{\overrightarrow{YQ'}}+ n_Y \le N_{\overrightarrow{BY}}+N_{\overrightarrow{Q'Y}} +n_Y$,
therefore, $N_{\overrightarrow{YQ'}} \le N_{\overrightarrow{BY}}+N_{\overrightarrow{Q'Y}}$, finishing the proof of the claim.

Since the edges which have their heads in $Q'$ have their tails from $B \cup Y \cup Q'$,
\[N \le N_{\overrightarrow{BQ'}}+ N_{\overrightarrow{YQ'}} + (|Q'|-N_{\overrightarrow{Q'Y}})
\le N_{\overrightarrow{BQ'}}+ |Q'| + N_{\overrightarrow{BY}}
\le \sum_{v_i \in B} \deg_i (v_i) + |Q'|.
\]
Then we have that 
\begin{align*}
0 & \le \sum_{v_i \in B} \deg_i (v_i)+|Q'| - \Big(\sum_{v_j \in Q'} \deg_j (v_j)-|Q'| \Big) \\
& = \sum_{v_i \in B} \deg_i (v_i) - \sum_{v_j \in Q'} (\deg_j (v_j)-2) \\
& \le  \sum_{v_i \in B} \deg_i (v_i) - \Big(\sum_{v_i \in B}\sum_{j \in f(i)} (\deg_j (v_j)-2) + \sum_{i=1}^p (\deg_{q_i}(v_{q_i})-2 ) \Big) \\
& < \sum_{v_i \in B} \deg_i (v_i) - \sum_{v_i \in B}\sum_{j \in f(i)} (\deg_j (v_j)-2) \\
& = \sum_{v_i \in B} \Big(\deg_i (v_i) - \sum_{j \in f(i)} \deg_j (v_j)+6 \Big) \\
& \le 0,
\end{align*}
which is a contradiction.

Otherwise, let $i_0$ be the largest $i$ such that $i < \max\{j \mid j \in f(i)\}$ and $j_0=\max\{j \mid j \in f(i_0)\}$.
Now we consider the graph $G_{j_0}$ and the function $f$ restricted on $B\cap V(G_{j_0})$ (see also Figure \ref{fig:function}). 
First note that in $G_{j_0}$ vertex $v_{j_0}$ is the only vertex which is not contained in any image set of $f$ restricted on $B\cap V(G_{j_0})$.
Moreover, for every $i$ with $v_i \in B\cap V(G_{j_0})$ we have $i > \max\{j \mid j \in f(i)\}$.
Hence, $\deg_i(v_i)\leq \deg_j(v_{j})$ for vertex $v_i$ in $B\cap V(G_{j_0})$  and every $j \in f(i)$.
Together with $\deg_t(v_t) \ge 3$ for all $v_t \in Q$, we have $\deg_i(v_i) - \sum_{j \in f(i)}\deg_t(v_{j})+6 \le 0$
for every vertex $v_i$ in $B\cap V(G_{j_0})$. 

Now similar to the first situation, we first transform graph $G_{j_0}$ into a partial directed graph by considering the collapsing process.
Let $V(G_{j_0})=B_0 \uplus {Q_0}' \uplus X_0 \uplus Y_0$, where $B_0=B \cap V(G_{j_0})$, ${Q_0}'=(Q \setminus X) \cap V(G_{j_0})$, $X_0=X \cap V(G_{j_0})$ and $Y_0$ is the set of collapsed vertices in $V(G_{j_0})$ not contained in $Q$. 
Then consider the number, denoted by $N_0$, of edges of the form $\overrightarrow{v_iv_j}$ such that the head $v_j$ is in ${Q_0}'$.
Since every vertex in ${Q_0}'$ has at most one outgoing edge,
we have $N_0 \ge \sum_{v_j \in {Q_0}'} \deg_j(v_j)-|{Q_0}'|$.
On the other hand, let 
\begin{compactitem}
\item $N_{\overrightarrow{B_0{Q_0}'}}$ be the number of edges going from $B_0$ to ${Q_0}'$;
\item $N_{\overrightarrow{B_0Y_0}}$ be the number of edges going from $B_0$ to $Y_0$;
\item $N_{\overrightarrow{Y_0{Q_0}'}}$ be the number of edges going from $Y_0$ to ${Q_0}'$;
\item $N_{\overrightarrow{{Q_0}'Y_0}}$ be the number of edges going from ${Q_0}'$ to $Y_0$.
\end{compactitem}
Similar to \autoref{claim1}, we have $N_{\overrightarrow{Y_0{Q_0}'}} \le N_{\overrightarrow{B_0Y_0}}+N_{\overrightarrow{{Q_0}'Y_0}}$ and get the upper bound for $N_0$:

\[N \le N_{\overrightarrow{B_0{Q_0}'}}+ N_{\overrightarrow{Y_0{Q_0}'}} + (|{Q_0}'|-N_{\overrightarrow{{Q_0}'Y_0}})
\le N_{\overrightarrow{B_0{Q_0}'}}+ |{Q_0}'| + N_{\overrightarrow{B_0Y_0}}
\le \sum_{v_i \in B_0} \deg_i (v_i) + |{Q_0}'|.
\]
Then we have that
\begin{align*}
0 & \le \sum_{v_i \in B_0} \deg_i (v_i)+|{Q_0}'| - \Big(\sum_{v_j \in {Q_0}'} \deg_j (v_j)-|{Q_0}'| \Big) \\
& = \sum_{v_i \in B_0} \deg_i (v_i) - \sum_{v_j \in {Q_0}'} (\deg_j (v_j)-2) \\
& \le  \sum_{v_i \in B_0} \deg_i (v_i) - \Big(\sum_{v_i \in B_0}\sum_{j \in f(i)} (\deg_j (v_j)-2) + \deg_{j_0}(v_{j_0})-2 \Big) \\
& < \sum_{v_i \in B_0} \deg_i (v_i) - \sum_{v_i \in B_0}\sum_{j \in f(i)} (\deg_j (v_j)-2) \\
& = \sum_{v_i \in B_0} \Big(\deg_i (v_i) - \sum_{j \in f(i)} \deg_j (v_j)+6 \Big) \\
& \le 0, 
\end{align*}
which is a contradiction. 
\end{proof}
Now we have all necessary pieces for the proof of \autoref{thm:fpt_b+x_k=2}.

\begin{proof}[Proof for \autoref{thm:fpt_b+x_k=2}]
To show the correctness of the algorithm, it is again enough to show that for any pair of $S$ and $Q$ appearing in the recursive process, the set returned by the recursive function is a solution that contains whole $S$ and no vertex from $Q$ and if there is a solution containing whole $S$ and there is no solution containing the whole $S$ and any vertex from $Q$, then the function returns such a solution.
For simplicity we assume without loss of generality that the input graph is a $2$-core.

$\Leftarrow:$ If the function returns $S$ as a solution on line \ref{algst:2:happy}, then it is of size at most $b$ since line \ref{algst:2:S_too_big} does not apply and the $2$-core of $G\setminus S$ is of size at most $x$. Hence it is obviously a solution and it is fulfilling the constraints.
If the solution is obtained from recursive calls on lines \ref{algst:2:rec_start}-\ref{algst:2:rec_end}, then it is returned without modification and $S$ is subset of the solution, since $S$ or its superset was passed to the recursive calls. Similarly, it contains no vertex from $Q$, since $Q$ or its superset was passed to the recursive calls.

If the function returns set $S'$ as a solution on line \ref{algst:2:greedy_end}, then for each $i \in \{1, \ldots, r'\}$ set $S'$ contains a vertex of the cycle $C_i$. Hence the $2$-core of $G \setminus S'$ does not contain the cycle $C_i$. Since the $2$-core $G''$ of $G \setminus S'$ differs from the $2$-core $G'$ of $G \setminus S$ exactly in these missing cycle components, we have $V(G'') = |V(G')|-\sum_{i=1}^{r'}|V(C_i)|$. Since this is at most $x$ by line \ref{algst:2:check_size}, line \ref{algst:2:S_too_big} does not apply, and $Q \cap \bigcup_{i=1}^{r'}V(C_i)=\emptyset$, $S$ is a solution fulfilling the constraints.  Hence, if the function returns a solution, then the answer is correct.

$\Rightarrow:$ Now we show by induction on $|Q \cup S|$ for every pair of $S$ and $Q$ appearing in the recursive process, starting from the largest $|Q \cup S|$ achieved, that if there is a solution containing whole $S$ and there is no solution containing the whole $S$ and any vertex from $Q$, then 
the algorithm returns such a solution. 

If $B$ is a solution such that $S \subseteq B$ and $B \cap Q=\emptyset$, then line \ref{algst:2:S_too_big} will not apply since $|B| \le b$ and because of \autoref{lem:line_3_correct}, line~\ref{algst:2:no_greedy} does not apply according to \autoref{lem:line_14_correct}. 

Let $G'$ be the $2$-core of $G \setminus S$. If $B$ is a solution, then $S \cup ((B \setminus S) \cap V(G'))$ is also a solution, so we can assume that $B \setminus (S \cup V(G'))=\emptyset$. 
If $B=S$, then line \ref{algst:happy} applies. If there are no vertices of degree at least 3 which are not in $Q$, then $B$ contains no vertices of the components containing vertices of $Q$, as we have shown above.
Hence $B$ can decrease the size of the $2$-core compared to $G'$ by at most $\sum_{i=1}^{r'}|V(C_i)|$, where $r'$ is as in the algorithm. As $B$ is a solution, this implies $|V(G')|-\sum_{i=1}^{r'}|V(C_i)| \le x$ and a solution containing $S$ is returned on line \ref{algst:2:greedy_end}.
This finishes the proof for the base cases of the induction.

Now suppose that the claim already holds for all calls with larger $|Q \cup S|$ and $v$ is the vertex selected by the algorithm on line \ref{algst:2:select}.
If there is a solution containing $S \cup \{v\}$ (in particular if $v \in B$), then the call \textsc{SolveRec2}$(G,S \cup \{v\},Q)$ must return a solution by induction hypothesis and otherwise there is no solution containing $S$ and anything of $Q \cup \{v\}$ and the call \textsc{SolveRec2}$(G,S,Q \cup \{v\})$ must return a solution by induction hypothesis. Thus the algorithm works correctly.

The running time bound for Algorithm~\ref{algo:2} follows from \autoref{lem:algo2runningtime} and the conditional running time lower bound for \CKC\ with $k=1$ follows from \autoref{lem:k2lowerbound}.
\end{proof}

\section{Structural Graph Parameters}\label{sec:structural}
In this section, we investigate the parameterized complexity of \CKC\ with respect to several structural parameters of the input graph. \cref{cor:whardness} already implies hardness for constant values of several structural graph parameters. We expand this picture by observing that the problem remains \NP-hard on graphs with a dominating set of size one and by showing that the problem is \W{1}-hard when parameterized by the combination of~$b$ and the clique cover number of the input graph. 
On the positive side, we show that the problem is in \FPT when parameterized by the treewidth of the input graph or the clique-width of the input graph and~$k$ combined with either~$b$, $x$, $n-x$, or $n-b$. Lastly, we show that the problem presumably does not admit a polynomial kernel when parameterized by the combination of $b$ and the vertex cover number of the input graph, or by the combination of~$b$, $k$, and the bandwidth of the input graph.

We start with an easy observation that we will make use of in most of the hardness results in this section.
\begin{obs}\label{obs:high_deg}
If $(G,b,x,k)$ is an instance of \CKC\ and vertex $v$ is a part of the $(k+b)$-core of $G$, and $S \subseteq V$ is of size at most $b$, then either $v \in S$ or $v$ is part of the $k$-core of $G\setminus S$.
\end{obs}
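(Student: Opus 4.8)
The plan is to exploit the monotonicity of the core operation together with the characterisation of the $\ell$-core as the \emph{unique maximal} induced subgraph of minimum degree at least $\ell$. Recall the equivalent description: the $\ell$-core is obtained from $G$ by repeatedly deleting a vertex whose current degree is less than $\ell$; in particular, every induced subgraph of $G$ with minimum degree at least $\ell$ is contained in the $\ell$-core, since the peeling process never deletes a vertex while it still has $\ell$ neighbours lying inside such a subgraph.

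First I would let $H$ denote the $(k+b)$-core of $G$. By hypothesis $v \in V(H)$, and every vertex of $H$ has degree at least $k+b$ inside $H$. Next I would pass to the graph $H - S$: since $|S| \le b$, removing the vertices of $S \cap V(H)$ from $H$ decreases the degree of each surviving vertex by at most $|S \cap V(H)| \le b$, so every vertex of $H - S$ has degree at least $(k+b)-b = k$ in $H - S$. Hence $H - S$ is an induced subgraph of $G - S$ with minimum degree at least $k$, and by the remark above $H - S$ is contained in the $k$-core of $G - S$.

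Finally, assume $v \notin S$. Then, because $v \in V(H)$, we have $v \in V(H) \setminus S = V(H - S)$, and therefore $v$ lies in the $k$-core of $G - S$, which is exactly the conclusion. I do not anticipate a real obstacle here; the only step deserving an explicit word of justification is the containment ``minimum degree $\ge \ell$ implies inside the $\ell$-core'', which is immediate from either the peeling description of the core or the fact that the union of two induced subgraphs of minimum degree at least $\ell$ again has minimum degree at least $\ell$.
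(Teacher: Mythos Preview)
Your proof is correct and follows essentially the same approach as the paper: both let $H$ (the paper uses $C$) be the $(k+b)$-core, observe that deleting at most $b$ vertices leaves every surviving vertex with degree at least $k$ in $H\setminus S$, and conclude that $H\setminus S$ lies inside the $k$-core of $G\setminus S$. The paper's version is terser, while you spell out the justification for the containment step, but the underlying argument is identical.
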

\begin{proof}
 Let $C$ be the $(k+b)$-core of $G$. In $C\setminus S$ the degree of each vertex is at least $k+b-b$, hence $C \setminus S$ is a subgraph of the $k$-core of $G\setminus S$.
\end{proof}

The following observation yields that we can reduce the size of a dominating set of any instance of \CKC\ to one by introducing a universal vertex. Note that, for example, this only increases the degeneracy by one.
\begin{obs}\label{obs:dominatingset}
Let $(G,b,x,k)$ be an instance of \CKC\ and $G'$ be the graph obtained from $G$ by adding a universal vertex, then $(G',b+1,x,k)$ is an equivalent instance of \CKC.
\end{obs}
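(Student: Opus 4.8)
The plan is to prove the equivalence by transforming solutions in both directions; write $u$ for the universal vertex added to $G$ to form $G'$. The ``only if'' direction is immediate: if $S$ is a solution for $(G,b,x,k)$, then $S\cup\{u\}$ has size at most $b+1$ and $G'-(S\cup\{u\})=G-S$, so it is a solution for $(G',b+1,x,k)$. For the converse, start from a solution $S'$ for $(G',b+1,x,k)$. If $u\in S'$ we are done at once, since $S'\setminus\{u\}$ has size at most $b$ and $G-(S'\setminus\{u\})=G'-S'$. So the whole content lies in the case $u\notin S'$, where $S'\subseteq V(G)$ has size up to $b+1$ and we must manufacture a solution for $G$ of size at most $b$.

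Here the idea is to take $S:=S'$ when $|S'|\le b$, and otherwise to delete one arbitrary vertex $w$ from $S'$, so that $G-S=(G-S')+w$; the task is then to show that re-inserting a single vertex into $G-S'$ cannot blow up the $k$-core. The key point is that $u$ is universal in $G'-S'$, so in the collapse process turning $G'-S'$ into its $k$-core $C$ (with $|C|\le x$), every vertex that is removed has at most $k-1$ surviving neighbours at the moment of removal, and as long as $u$ is still present it is one of those neighbours; hence such a vertex has at most $k-2$ surviving neighbours inside $V(G)\setminus S'$. Replaying the very same removal order inside $(G-S')+w$, every such vertex still has at most $(k-2)+1=k-1<k$ neighbours (the possible extra one being $w$) and is removed again; since the $k$-core of a graph is uniquely determined, independently of the collapse order, the $k$-core of $(G-S')+w$ is contained in $(C\setminus\{u\})\cup\{w\}$ and thus has size at most $|C|\le x$. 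The analogous replay with $S=S'$ shows the $k$-core of $G-S'$ lies in $C\setminus\{u\}$, again of size at most $x$.

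The step I expect to be the main obstacle, and which calls for a separate argument, is the subcase in which $u$ itself does not survive into $C$. I would dispose of it by observing that a universal vertex cannot be removed by the collapse process while $k$ or more other vertices remain; consequently, if $u$ is ever removed while collapsing $G'-S'$ then at most $k$ vertices remain at that instant and they all collapse immediately afterwards, so $C=\emptyset$ and $u$ is among the last $k$ vertices removed. Replaying that order in $(G-S')+w$ then deletes everything except the at most $k-1$ vertices removed after $u$ together with $w$, i.e.\ at most $k$ vertices, which has empty $k$-core. Combining the cases ($|S'|\le b$ versus $|S'|=b+1$, and $u\in C$ versus $u\notin C$) produces in every case a solution for $(G,b,x,k)$ of size at most $b$, finishing the reduction; the only routine part left is checking that the chosen removal orders really are legal collapse orders in the modified graphs, which is exactly the degree bookkeeping sketched above.
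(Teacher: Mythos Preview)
Your argument is correct, and in fact it lands on exactly the same solution the paper does in the nontrivial case: when $u\notin S'$ and $|S'|=b+1$, both you and the paper end up using $S=S'\setminus\{w\}$ for an arbitrary $w\in S'$. The difference is purely in how the bound on the $k$-core of $G-S$ is justified.

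The paper argues in one stroke via a subgraph isomorphism: the bijection $\varphi$ fixing $V(G)\setminus S'$ and sending $w\mapsto u$ embeds $G-(S'\setminus\{w\})$ as a spanning subgraph of $G'-S'$ (every edge at $w$ maps to an edge at the universal vertex $u$), and since the $k$-core is monotone under spanning subgraphs, its size cannot increase. This avoids any analysis of the collapse process and, in particular, makes your case split on whether $u$ survives into $C$ unnecessary.

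Your replay argument unpacks the same degree comparison step by step: losing $u$ as a neighbour and possibly gaining $w$ never raises the degree of a collapsing vertex, so the same elimination order remains valid. This is perfectly sound, and your handling of the subcase $u\notin C$ is correct (indeed $u$ can only drop below degree $k$ once fewer than $k$ other vertices remain, forcing $C=\emptyset$). The trade-off is that you do more bookkeeping and need the extra case, whereas the paper's isomorphism observation gives a two-line proof. On the other hand, your treatment is slightly more careful about the edge case $|S'|\le b$ (where one simply takes $S=S'$ and uses that deleting $u$ can only shrink the $k$-core), which the paper leaves implicit.
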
 
\begin{proof}
 Let $(G,b,x,k)$ be an instance of \CKC\ and let $(G',b',x,k)$ be the instance formed by a graph $G'$ which is obtained from $G$ by adding an universal vertex $u$, $b'=b+1$, and $x$ and $k$ from the original instance. We claim that the instances are equivalent.
 
 First if $S$ is a solution for $(G,b,x,k)$, then $S \cup \{u\}$ is a solution for $(G',b',x,k)$, as $G \setminus S = G' \setminus S'$. Second, let $S'$ be a solution for $(G',b',x,k)$. If $S'$ contains $u$, then $S' \setminus \{u\}$ is a solution for $G$. Now suppose that $S'$ does not contain $u$ and let $v$ be an arbitrary vertex of $S'$. We claim that $S''=(S' \setminus \{v\}) \cup \{u\}$ is also a solution to $(G',b',x,k)$, since $G' \setminus S''$ is isomorphic to a subgraph of $G'\setminus S'$. Indeed, consider the bijection $\varphi$, which maps each vertex of $V(G) \setminus S'$ to itself and $v$ to $u$. To show that it is an isomorhism, it is enough to consider edges incident on $v$, however, as there is an edge between $u$ and every vertex of $V(G) \setminus S'$, these definitely map to edges. Hence the $k$-core of $G' \setminus S''$ is at most as large as the $k$-core of $G'\setminus S'$ and indeed $S''$ is a solution to $(G',b',x,k)$. Now the equivalence of the instance follows from the case where $u$ is in $S'$.
\end{proof}

Considering a larger parameter than e.g.\ the size of the dominating set, namely the clique cover number\footnote{The clique cover number of a graph $G$ is the minimum number of induced cliques such that their union contains all vertices of $G$.}, we can show \W{1}-hardness, even in combination with $b$. We do this by providing a parameterized reduction from \textsc{Multicolored Clique} parameterized by the solution size.

\tikzstyle{bigclique}=[ellipse,draw=black,minimum size=15pt,inner sep=2pt,fill=lightgray, line width=1.8pt]
\tikzstyle{bigclique2}=[ellipse,draw=black,minimum size=15pt,inner sep=2pt,fill=lightgray]
\tikzstyle{vclique}=[ellipse,draw=black,minimum size=1.2cm,inner sep=2pt]
\tikzstyle{eclique}=[rectangle,draw=black,minimum size=1.2cm,inner sep=2pt]
\tikzstyle{thickgray}=[line width=1.8pt, color=gray]
\tikzstyle{smallv}=[draw,shape=circle,fill,inner sep=2pt]

\tikzstyle{sclique}=[circle,draw=black,minimum size=1.2cm,inner sep=2pt,fill=lightgray, line width=1.8pt]
\tikzstyle{vset}=[ellipse,draw=black,minimum size=15pt,inner sep=2pt,fill=white]
\tikzstyle{eset}=[ellipse,draw=black,minimum size=15pt,inner sep=2pt,minimum width=2cm,minimum height=1cm, line width=1pt,fill=white]

\begin{prop}
\label{thm:W1-hard_wrt_clique_cover}
\CKC\ is \W{1}-hard when parameterized by the combination of~$b$ and the clique cover number of the input graph.
\end{prop}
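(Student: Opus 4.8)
The plan is to give a parameterized reduction from \MCC\ parameterized by the number $p$ of colour classes, which is \W{1}-hard. Given an instance $(H,V_1,\dots,V_p)$ of \MCC, I would build a graph $G$ whose vertex set is partitioned into $p+\binom{p}{2}$ cliques: a \emph{colour clique} $A_i$ whose vertices correspond to the vertices of $V_i$ (for each $i\in\{1,\dots,p\}$), and an \emph{edge clique} $B_{ij}$ for each pair $i<j$ whose vertices correspond to the edges of $H$ between $V_i$ and $V_j$, each padded with a controlled number of dummy vertices. The only edges between distinct cliques are the incidence edges $a_v b_e$ with $v\in e$. This already bounds the clique cover number of $G$ by $p+\binom{p}{2}$, and I would set the budget to $b=p+\binom{p}{2}$, so that both parameters are functions of $p$ alone. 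The quantities $k$, the dummy counts, and $x$ are the tuning knobs.

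I would design the gadget so that, via \autoref{obs:high_deg}, each colour clique $A_i$ is large enough to lie in the $(k+b)$-core of $G$ and is therefore ``protected'': any solution of size at most $b$ either deletes a vertex of $A_i$ or leaves all of $A_i\setminus S$ in the $k$-core. The intended solution for a \emph{yes}-instance, i.e.\ for a multicoloured clique $\{v_1,\dots,v_p\}$ with $v_i\in V_i$, deletes $a_{v_i}$ from each $A_i$ and one designated (say, dummy) vertex from each $B_{ij}$; the edge-vertex degrees are calibrated so that deleting that designated vertex of $B_{ij}$ together with $a_{v_i}$ and $a_{v_j}$ triggers a cascade that collapses all of $B_{ij}$ except for a prescribed self-supporting remainder, while nothing inside any $A_i$ collapses. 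Setting $x$ to the resulting total core size then makes the constructed instance a \emph{yes}-instance whenever $(H,V_1,\dots,V_p)$ is.

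For the converse I would argue as follows. A solution cannot collapse any $A_i$ (it is too robust), so to reach core size at most $x$ it must delete at least one vertex from every $B_{ij}$; by the budget this forces \emph{exactly} one deletion in each $B_{ij}$ and at most one in each $A_i$, with no slack left. A counting argument on the cascade then shows that the deleted vertex of $B_{ij}$ can bring the core below the threshold only if it equals $b_{\{v_i,v_j\}}$ for the unique vertices $v_i\in V_i$, $v_j\in V_j$ whose $a$-vertices are deleted; in particular $\{v_i,v_j\}\in E(H)$, and these choices must be globally consistent, so $\{v_1,\dots,v_p\}$ is a multicoloured clique.

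The main obstacle is the calibration of the collapse dynamics: the cascade inside an edge clique $B_{ij}$ must go ``far enough'' for the consistent choice and must stop short of the threshold for every inconsistent one, i.e.\ one has to realise a per-pair edge-verification gadget out of a single clique plus incidence edges. Hand in hand with this is the global counting argument for the backward direction, which has to rule out cheating by redistributing deletions among the cliques or by picking mutually incompatible edges. Once the right clique and dummy sizes are fixed, the remaining degree bookkeeping (checking that the $k$-core of $G$ is all of $G$, that the protected cliques behave as claimed, and that the cascade counts match the chosen $x$) is routine.
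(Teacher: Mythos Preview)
Your high-level plan (reduce from \MCC, one clique per colour class and one per colour pair) is the right skeleton, and your candid acknowledgement that the cascade calibration is ``the main obstacle'' is accurate. But this obstacle is self-inflicted: the paper's reduction sidesteps it by two moves you omit.

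First, the paper sets the budget to $b=p$, not $p+\binom{p}{2}$; nothing is ever deleted from an edge clique. Each edge clique $C_{i,j}$ is given size exactly $k$ (with $k$ chosen large, namely $k=\max_{i<j}2|E_{i,j}|$), and each edge $\{u,v\}$ of $H$ is represented by \emph{two} vertices of $C_{i,j}$, both adjacent to $a_u$ and $a_v$; dummies fill the rest. Every vertex of $C_{i,j}$ then has degree exactly $k+1$ ($k-1$ inside plus two outside). Deleting $a_u$ and $a_v$ drops the two representatives to degree $k-1$; they fall out, which pulls the remaining $k-2$ vertices down to degree $k-1$, and the whole edge clique collapses. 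No internal trigger, no ``prescribed self-supporting remainder'', no calibration.

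Second, protection of the colour cliques is not obtained by ``making $A_i$ large enough'': with only incidence edges between cliques, $\deg(a_v)=|V_i|-1+\deg_H(v)$ is dictated by the input and need not reach $k+b$. The paper instead adds one further big clique $C$ and joins it to every colour clique (and, to keep degrees uniform, to every dummy in the edge cliques). This guarantees the $(k+b)$-core membership you want to invoke via \autoref{obs:high_deg}, at the cost of increasing the clique cover number by just one.

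With $b=p$ the backward direction becomes a clean count: each of the $\binom{p}{2}$ edge cliques must collapse to meet the threshold, the first vertex of $C_{i,j}$ to fall must have lost both outside neighbours, and $p$ deletions supply the required $2\binom{p}{2}$ incidences only if they hit one vertex per colour and every pair is adjacent in $H$. Your budget $p+\binom{p}{2}$ and the attendant partial-collapse bookkeeping make the soundness argument substantially harder for no gain. If you insist on your variant, you still owe a protection mechanism (your current structure does not provide one) and an actual calibration; but the simpler route is available and is what the paper takes.
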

\begin{figure}[t]
\begin{center}
\begin{tikzpicture}[scale=.91, line width=.6pt]
\node[bigclique2,minimum width=3cm, minimum height=1cm] (C1) at (0,3) {$C$};
\node[vclique,minimum width=1cm, minimum height=3cm] (V1) at (-4,0) {};
\node[vclique,minimum width=1cm, minimum height=3cm] (V2) at (0,0) {};
\node[vclique,minimum width=1cm, minimum height=3cm] (Vs) at (4,0) {};
\node[eclique,minimum width=1cm, minimum height=3cm] (E12) at (-2,0) {};
\node (dots) at (2,0) {$\dots$};
\node (e12) at (-2,-2) {$E_{1,2}$};
\node (v1) at (-4,-2) {$V_1$};
\node (v2) at (0,-2) {$V_2$};
\node (vs) at (4,-2) {$V_s$};
\node[bigclique2,minimum width=3cm, minimum height=1cm] (C2) at (0,-3) {$C$};
\path (V1) edge[in=180, out=90,thickgray] (C1);
\path (V2) edge[in=270, out=90,thickgray] (C1);
\path (Vs) edge[in=0, out=90,thickgray] (C1);

\node(d1) at (-2,1.2) {$\ldots$};
\node[smallv,fill=lipicsyellow] (ev1) at (-2,0.7) {};
\node[smallv,fill=lipicsyellow] (ev2) at (-2,0.2) {};
\node(d2) at (-2,-.3) {$\ldots$};
\node[smallv, fill=ourred] (v11) at (-4,0.45) {};
\node[smallv, fill=ourblue] (v21) at (0,0.45) {};
\draw (v11) -- (ev1);
\draw (v11) -- (ev2);
\draw (v21) -- (ev1);
\draw (v21) -- (ev2);

\path (-2.55,-.5) edge (-1.45,-.5);

\node(d2) at (-2,-.8) {$\ldots$};
\node[smallv] (ev3) at (-2,-1.2) {};
\path (ev3) edge[in=120, out=300,thickgray] (C2);

\path (C1) edge[in=20, out=340,thickgray] (C2);

\end{tikzpicture}
\end{center}
\caption{Illustration of the reduction from \MCC to \CKC with~$E_{1,2}$ highlighted.
Every gray edge in this figure means that all vertices in one endpoint of this edge are connected to all vertices in the other endpoint.
The big clique $C$ is separated into two parts, and the upper part is connected to all $V_i$ with $1 \le i \le s$.
Two yellow vertices in $E_{1,2}$ represent an edge, and they are connected to two end point of this edge, the red vertex in $V_1$ and the blue vertex in $V_2$.
Vertices below the line in $E_{1,2}$ are dummy vertices, and each of them is connected to all vertices in the lower part of $C$.
}
\label{fig:WHardness_CC}
\end{figure}
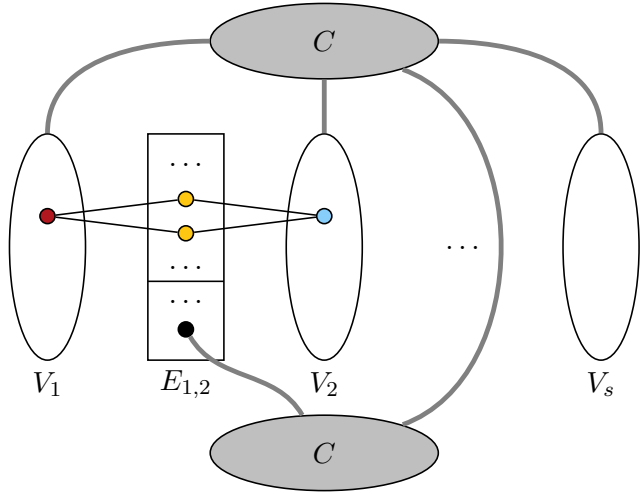

\begin{proof}
We present a parameterized reduction from the \W{1}-hard problem \MCC\ parameterized by the solution size~\cite{cygan2015parameterized}.
In \MCC, we are given an integer $s$ and a $s$-colorable graph with color classes $V_1, V_2, \dots, V_s$, and the task is to find a clique of size $s$ containing one vertex from each color. 
Let $(G=(V,E),s)$ be an instance of \MCC. 
The edge set $E$ can be partitioned into $\binom{s}{2}$ subsets: $E_{i,j}=\{v_iv_j|v_i \in V_i, v_j \in V_j\}, 1\leq i< j\leq s$.
We create an instance $(G'=(V',E'),b,x,k)$ of \CKC\ as follows.
\begin{compactitem}
\item Denote $k=\max_{1\leq i< j\leq s} 2|E_{i,j}|$, $n=|V|$ and set $b=s$, $x=N-N'$ where $N=2n^4+k+s+n+k\binom{s}{2}$ is the number of vertices in $G'$ we will construct and $N'=s+k\binom{s}{2}$.
\item For every $V_i,i=1,2,\dots,s$, create a clique $C_i$ in $G'$, which contains all vertices in $V_i$. 
\item For every $E_{i,j}, 1\leq i< j\leq s$, create a clique $C_{i,j}$ of size $k$ in $G'$, which contains 2 copies of vertices $v_{i,j},{v_{i,j}}'$ for every edge $v_iv_j$ in $E_{i,j}$ and $k-2|E_{i,j}|$ more dummy vertices.
\item For every edge $v_iv_i \in E_{i,j}$, add 4 edges $v_{i,j}v_i$, $v_{i,j}v_j$, ${v_{i,j}}'v_i$ and ${v_{i,j}}'v_j$ in $G'$. 
\item Create a clique $C$ of size $2n^4+k+s$. 
Add edges between vertices in $C_i,i=1,2,\dots,s$ and $k+s$ vertices in $C$. 
For every dummy vertex in $C_{i,j}, 1\leq i< j\leq s$, we add edges between this vertex and two distinct vertices in $C$.
The size of $C$ is large enough such that no pair of edges between $E_{i,j}$ and $C$ share the same end point in $C$.
\end{compactitem}
Notice that the clique cover number of $G'$ is $s+\binom{s}{2}+1$. The construction is illustrated in \autoref{fig:WHardness_CC}.
We claim that there is a multicolored clique of size $s$ in $G$ if and only if~$(G'=(V',E'),b,x,k)$ is a yes-instance.

$\Rightarrow:$ If there is a multicolored clique $C'$ with vertex set $S$ of size $s$ in $G$, 
we show in the following that the $k$-core of $G-S$ has size at most $x$. 
Since $b=s$ and $N'=s+k\binom{s}{2}$, it suffices to show that all edge cliques $C_{i,j}$ collapse. 
For any clique $C_{i,j}$, every vertex in this clique has vertex degree $k+1$, since it has $k-1$ neighbors in $C_{i,j}$ and 2 neighbors in $C_i$ and $C_j$ (or $C$ for dummy vertices).
For any $C_{i,j}$, suppose $v_i,v_j \in S$, where $v_i \in V_i$ and $v_j \in V_j$, since $C'$ is a clique, there is an edge $v_iv_j$ in $G$ , hence both $v_{i,j}$ and ${v_{i,j}}'$ connected to $v_i$ and $v_j$ in $G'$.
After deleting $v_i$ and $v_j$, both $v_{i,j}$ and ${v_{i,j}}'$ will be collapsed, which then make all remaining vertices in $C_{i,j}$ collapse.
Therefore all edge cliques $C_{i,j}$ will collapse after deleting~$S$.

$\Leftarrow:$ Suppose $(G'=(V',E'),b,x,k)$ is a yes-instance, we need to show there is a multicolored clique of size $s$ in $G$.
Let $S$ be the deleted vertex set of size $b$ and let $S'$ be the set of all collapsed vertices. 
Since $N'=s+k\binom{s}{2}$, we have $|S'| \geq k\binom{s}{2}$.
Notice that in the subgraph of $G'$ induced by $C$ and all $C_i$'s all vertices in $C_i$ have vertex degree at least $k+s$ and all vertices in $C$ have degree at least $2n^4+k+s-1$. Therefore, by \autoref{obs:high_deg}, these vertices will never collapse and only vertices in $C_{i,j}, 1\leq i< j\leq s$ can collapse.
Since the number of vertices in $C_{i,j}, 1\leq i< j\leq s$ is exactly $k\binom{s}{2}$, we have all vertices in $C_{i,j}, 1\leq i< j\leq s$ will collapse and $S$ only contains vertices from $C$ and $C_i, 1\leq i \leq s$.

Suppose $S$ contains $t$ vertices from $C$ and $s-t$ vertices from $C_i, 1\leq i \leq s$.
On one hand, for every clique $C_{i,j}$, the first vertex to collapse must connect to two vertices from $S$, so overall there must be $2\binom{s}{2}$ edges between all such vertices and $S$.
On the other hand, each vertex in $S \cap C$ can provide at most one such edge and each vertex in $S$ from $C_i, 1\leq i \leq s$ can provide at most $s-1$ such edges, so overall the number is strictly less than $2\binom{s}{2}$ if $t>0$.
Since all cliques $C_{i,j}, 1\leq i< j\leq s$ will collapse, we have $t=0$ and $S$ only contains vertices from $C_i$.

So the firstly collapsed vertex $v_{i,j}$ in $C_{i,j}$ must connect to two vertices from $S$, one $v_i$ from $C_i$ and another $v_j$ from $C_j$.
This means $S$ contains exactly one vertex $v_i$ from each $C_i$ and each pair of vertices $v_i$ and $v_j$ connect to at least one common vertex $v_{i,j}$ in $C_{i,j}$, which means $v_i$ and $v_j$ are connected in $G$.
Therefore, all vertices from $S$ form a clique of size $s$ in~$G$.
\end{proof}

On the positive side, we sketch a dynamic program on the tree decomposition of the input graph $G$ which implies that \CKC\ is in \FPT\ when parameterized by the treewidth of the input graph.
\begin{prop}
\CKC\ is in \FPT\ when parameterized by the treewidth of the input graph.
\end{prop}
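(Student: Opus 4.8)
The plan is to first replace the collapsing process by a static ``peeling order'' certificate. I would start with the observation that a graph of treewidth $t$ is $t$-degenerate, so it has no subgraph of minimum degree at least $t+1$; hence if $k>\tw(G)$ the $k$-core of $G$ is already empty and we output \textsc{yes}. So assume $k\le\tw(G)=:t$. Then I would show that $(G,b,x,k)$ is a yes-instance if and only if there exist disjoint sets $S,D\subseteq V(G)$ with $|S|\le b$, $|S|+|D|\ge n-x$, and an ordering $d_1,\dots,d_m$ of $D$ such that for every $i$ the vertex $d_i$ has fewer than $k$ neighbours in $G-S$ that lie outside $\{d_1,\dots,d_{i-1}\}$. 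For ``$\Leftarrow$'', running the peeling process that defines the $k$-core on $G-S$ first on $d_1,\dots,d_m$ in this order is legal (when $d_i$ is removed its current degree is exactly the above quantity, which is $<k$), so all of $D$ is removed and the $k$-core is contained in $(V\setminus S)\setminus D$, which has size at most $x$; for ``$\Rightarrow$'' take $D=(V\setminus S)\setminus k\text{-core}(G-S)$ ordered by the time the peeling process removes each vertex. The value of this reformulation is that the ``surviving'' set $C:=V\setminus S\setminus D$ is then unconstrained: it need not induce minimum degree $k$, it is merely a superset of the true $k$-core, so a tree-decomposition dynamic program only has to certify the peeling order for $D$.

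\textbf{The dynamic program.} I would take a nice tree decomposition of $G$ of width $O(t)$ (constructible in FPT time) with introduce-vertex, introduce-edge, forget, and join nodes, and write $G_\tau$ for the subgraph seen in the subtree rooted at a node $\tau$. A \emph{signature} at $\tau$ with bag $X_\tau$ consists of: a labeling $\lambda\colon X_\tau\to\{\mathsf{del},\mathsf{col},\mathsf{surv}\}$ guessing membership in $S$, $D$, or $C$; a linear order $\prec$ on $\lambda^{-1}(\mathsf{col})$, meant to be the restriction to $X_\tau$ of the global peeling order of $D$; for each $v\in\lambda^{-1}(\mathsf{col})$ a counter $c(v)\in\{0,\dots,k\}$ holding, capped at $k$, the number of neighbours $u$ of $v$ in $G_\tau$ with $\lambda(u)\neq\mathsf{del}$ that are \emph{not before} $v$ (that is, $\lambda(u)=\mathsf{surv}$, or $\lambda(u)=\mathsf{col}$ and $v\prec u$); and two integers $s,d\in\{0,\dots,n\}$ storing $|S\cap V(G_\tau)|$ and $|D\cap V(G_\tau)|$. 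The table at $\tau$ records which signatures are \emph{realizable}, i.e.\ extendable to a labeling of $V(G_\tau)$ under which, in addition, every $\mathsf{col}$-vertex already forgotten below $\tau$ has final counter value $<k$.

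\textbf{Transitions and acceptance.} These are routine. An introduce-vertex node guesses $\lambda(v)$, inserts $v$ at an arbitrary position of $\prec$ when $\lambda(v)=\mathsf{col}$, sets $c(v)=0$, and updates $s,d$. An introduce-edge node for $\{u,v\}$ leaves $\lambda,\prec,s,d$ unchanged and, for each $\mathsf{col}$-endpoint whose other endpoint $w$ satisfies $\lambda(w)\neq\mathsf{del}$ and is not before it, increments that endpoint's counter (capped at $k$). A forget node for $v$ demands $c(v)<k$ when $\lambda(v)=\mathsf{col}$ --- which is exactly the peeling condition for $v$, and is now complete since every neighbour of $v$ lies in $G_\tau$ --- and then restricts the signature to $X_\tau\setminus\{v\}$. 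A join node combines two signatures sharing the same $(\lambda,\prec)$, adding counters (capped at $k$) and adding $s,d$ with the usual subtraction of the shared bag. At the empty-bag root we accept iff some realizable signature has $s\le b$ and $s+d\ge n-x$. Correctness then follows by the standard bottom-up induction over the decomposition together with the reformulation above.

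\textbf{Main obstacle and running time.} The one point that needs care is the treatment of the peeling order. First, one must check that all the local orders $\prec$ are simultaneously extendable to one global order of $D$; this holds because every comparison the algorithm ever commits is between two vertices sharing a bag, and the bags containing a fixed vertex form a connected subtree on which its relative position is kept consistent. Second, one must check that when $v$ is forgotten its counter $c(v)$ already accounts for \emph{all} of $v$'s neighbours and for the side (before/after) of each; this is precisely the tree-decomposition property, since the bags containing an edge $\{u,v\}$ form a subtree of the bags containing $v$, so $u\in V(G_\tau)$ and the comparison $u$ versus $v$ was settled before $v$ left the bags. Everything else (the join bookkeeping, termination) is standard. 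Each node has at most $3^{|X_\tau|}\cdot|X_\tau|!\cdot(k+1)^{|X_\tau|}\cdot(n+1)^2$ signatures, and since $k\le t$ this is $2^{O(t\log t)}\cdot n^{2}$; with $O(tn)$ nodes the algorithm runs in $2^{O(t\log t)}\cdot n^{O(1)}$ time, establishing that \CKC\ is in \FPT\ when parameterized by treewidth. (One could also argue more abstractly: for fixed $k$ the property ``$C$ is the $k$-core of $G-S$'' is definable in monadic second-order logic, so the existence of $S,C$ with $|S|\le b$ and $|C|\le x$ follows from the optimization form of Courcelle's theorem; I prefer the explicit program since it extends naturally to the clique-width results stated next.)
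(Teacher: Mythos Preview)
Your proposal is correct and takes essentially the same approach as the paper. The paper's proof is only a sketch, but it uses the identical scheme: bound $k\le\tw(G)$, partition each bag into deleted/surviving/collapsing vertices (their $B,X,Q$ are your $\mathsf{del},\mathsf{surv},\mathsf{col}$), store an elimination order on the collapsing vertices together with a counter per collapsing vertex recording its neighbours in $X$ or later in the order, and arrive at the same $(\tw(G))^{O(\tw(G))}\cdot n^{O(1)}$ bound. Your write-up simply fleshes out the transitions and the consistency argument for the global peeling order that the paper leaves implicit; the MSO remark you add at the end is also in the spirit of the paper, which uses Courcelle--Makowsky--Rotics for its clique-width result.
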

\begin{proof}[Proof Sketch]
Observe that either $k \le \tw(G)$ or the $k$-core is (already) empty and we can answer Yes.
Hence, for the rest of the proof we assume that $k \le \tw(G)$. We assume we are given a nice tree decomposition of $G$~\cite{Kloks94,bodlaender2016c}
and use dynamic programming on the nice tree decomposition of $G$.
The indices of the table are formed for each bag of the decomposition by the number of vertices of the solution already forgotten, the number of vertices in the core already forgotten, a partition of the bag into three set $B$, $X$, and $Q$, an (elimination) order for the vertices in $Q$, and for each vertex in $Q$ the number of its neighbors in $X$ or higher in the order. This number is always in $0,\ldots, k-1$, as otherwise it would not be possible to eliminate the vertex.

The set $B$ represents the partial solution (or rather its intersection with the bag), i.e., the vertices to be deleted.
The set $X$ represents the vertices which (are free to) remain in the core.
The vertices in $Q$ should collapse after removing the vertices of the solution and the collapse of the vertices preceding them in the order.

There are $3^{\tw(G)} \cdot (\tw(G))^{O(\tw(G))} \cdot k^{\tw(G)} = (\tw(G))^{O(\tw(G))}$ possible indices for each bag. Hence the slightly superexponential running time of $(\tw(G))^{O(\tw(G))} \cdot n^{O(1)}$ follows.
\end{proof}

Using monadic second order (MSO) logic formulas, we can show that for a smaller structural parameter, namely the cliquewidth of the input graph, we can also obtain positive results. Here however, we can only show fixed-parameter tractability for the combination of the cliquewidth of the input graph with $k$ and either $b$, $x$, $n-x$, or $n-b$.
\begin{prop}
\label{thm:MSO}
\CKC\ is in \FPT\ when parameterized by the cliquewidth of the input graph combined with $k$ and either $b$, $x$, $n-x$, or $n-b$. 
\end{prop}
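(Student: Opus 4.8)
The plan is to reduce the problem to model checking (more precisely, to optimization — LinEMSO) for monadic second-order logic on graphs of bounded clique-width, using the theorem of Courcelle, Makowsky, and Rotics that LinEMSO$_1$ optimization problems are solvable in time $f(\mathrm{cw}(G),|\varphi|)\cdot n^{O(1)}$ once a clique-width expression is given. The first step is to observe that, because $k$ is part of the parameter, the relation ``$H$ is the $k$-core of $G-S$'' is expressible by an $\mathrm{MSO}_1$ formula $\varphi(S,H)$ whose length depends only on $k$. Namely, one writes $S\cap H=\emptyset$, the sentence $\mathrm{core}_k(H)$ saying ``every vertex of $H$ has at least $k$ neighbours in $H$'' (a block of $k$ existential vertex quantifiers together with $O(k^2)$ adjacency and inequality atoms), and the maximality clause $\forall H'\big((H'\cap S=\emptyset\wedge \mathrm{core}_k(H'))\rightarrow H'\subseteq H\big)$. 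This last clause is sound because the union of two induced subgraphs of minimum degree at least $k$ again has minimum degree at least $k$, so the maximal such subgraph of $G-S$ is unique and equals the $k$-core (this also correctly captures the empty-core case, for which $H=\emptyset$). Note that $\varphi$ lies in $\mathrm{MSO}_1$ — no quantification over edge sets is needed — which is exactly the fragment for which clique-width is the relevant width measure.

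Next I would rephrase ``$(G,b,x,k)$ is a yes-instance'' as a LinEMSO$_1$ question, treating the four parameter combinations uniformly: in each case exactly one of the two size constraints is bounded by the parameter and is therefore folded into the formula, while the other is left as the quantity to be optimized. If $b$ is in the parameter, add the subformula ``$|S|\le b$'' (length $O(b^2)$) to $\varphi(S,H)$, minimize $|H|$ over all satisfying pairs $(S,H)$, and answer Yes iff this minimum is at most $x$. If $x$ is in the parameter, add ``$|H|\le x$'' and minimize $|S|$, comparing to $b$. If $n-x$ is in the parameter, add ``there exist $n-x$ pairwise distinct vertices outside $H$'' (length depending only on $n-x$, and equivalent to $|H|\le x$) and minimize $|S|$, comparing to $b$. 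If $n-b$ is in the parameter, add ``there exist $n-b$ pairwise distinct vertices outside $S$'' (equivalent to $|S|\le b$) and minimize $|H|$, comparing to $x$. In each case the resulting LinEMSO$_1$ instance uses a formula of length bounded by a function of $k$ and the relevant second parameter, with $|H|$ (or $|S|$) as the linear objective.

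Finally I would invoke the algorithmic machinery: first compute, in time $g(\mathrm{cw}(G))\cdot n^{O(1)}$, a clique-width expression of width bounded in terms of $\mathrm{cw}(G)$ (a constant-factor approximation suffices and is known to be computable in fixed-parameter time), and then run the Courcelle–Makowsky–Rotics LinEMSO$_1$ optimization algorithm on it with the formula constructed above. Substituting $|\varphi| = h(k) + O(\poly(\text{second parameter}))$ into the running time $f(\mathrm{cw}(G),|\varphi|)\cdot n^{O(1)}$ yields fixed-parameter tractability in $\mathrm{cw}(G)$ combined with $k$ and whichever of $b$, $x$, $n-x$, or $n-b$ is chosen, as claimed.

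I expect the main obstacle — and the part deserving the most care in a full write-up — to be the first step: making sure that the $k$-core, which is defined by an iterative peeling process, nonetheless has an $\mathrm{MSO}_1$ (rather than $\mathrm{MSO}_2$) description of length depending only on $k$, which hinges on the ``maximal induced subgraph of minimum degree $\ge k$'' characterization together with the union-closure argument that makes this maximal subgraph unique. A secondary point to verify carefully is that for each of the four parameterizations exactly one of the two size constraints is genuinely bounded, so that the other can safely be left as the (possibly large) optimization target; in particular one should note that brute-forcing over all $S$ with $|S|\le b$ would only give an XP algorithm, which is why the LinEMSO optimization formulation — rather than plain $\mathrm{MSO}_1$ model checking with both bounds hard-coded — is essential.
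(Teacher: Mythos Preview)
Your proposal is correct and follows essentially the same route as the paper: express the $k$-core in $\mathrm{MSO}_1$ with a formula of length depending only on $k$, hard-code the one size constraint that is bounded by the parameter, and invoke the Courcelle--Makowsky--Rotics LinEMSO machinery for the remaining size bound. The only noteworthy difference is that the paper does not pin down $H$ as the \emph{exact} $k$-core; instead it writes a formula $\mathtt{core}(G,B,X)$ expressing merely that $X$ is an \emph{upper bound} on the $k$-core of $G-B$ (every superset of $X$ disjoint from $B$ contains a low-degree vertex), which sidesteps your maximality/union-closure clause, and then pairs this with its $\mathtt{sizeatmost}_s$ / $\mathtt{sizeatmost}_{n-s}$ subformulae --- but this is a cosmetic difference, and your more explicit treatment of the optimization step is arguably clearer.
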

\begin{proof}
We first develop, for a fixed $k$ a formula \[ \mathtt{core}(G,B,X),\] which should express that the set $X$ contains the whole $k$-core of the graph $G - B$. The formula thus says that no graph induced by a set larger than $X$, but not containing anything from $B$ is a core. In other words, each such graph contains a vertex of degree at most $k-1$, i.e., not having $k$ distinct neighbors. For that purpose we use the following subformula:
\begin{align*}
\mathtt{smalldeg}_k(v,A) &= \forall x_1 \in V \forall x_2 \in V \ldots \forall x_k \in V \left(\bigwedge_{i=1}^k (x_i \in A \wedge \mathtt{adj}(v, x_i))\right) \\
&\implies \bigvee_{1\le i < j \le k} x_i=x_j 
\end{align*}
Now the sought formula is
\[ \mathtt{core}(G,B,X) = \forall A \subseteq V (A \cap B = \emptyset \wedge X\subseteq A) \implies \exists v \in V (v \in A) \wedge \mathtt{smalldeg}_k(v,A).\]
This formula is of length $O(k)$. Combined with some of the following formulae it gives the result for all parameter combinations promised.
The following formula bounds a set $S$ passed to be of size at most $s$:
\[ \mathtt{sizeatmost}_s(S) = \forall x_1 \in V \forall x_2 \in V \ldots \forall x_s \in V \forall x_{s+1} \in V(\bigwedge_{i=1}^{s+1} x_i \in S ) \implies \bigvee_{1\le i < j \le s+1} x_i=x_j\]
The next formula bounds the size to at most $n-s$:
\[ \mathtt{sizeatmost}_{n-s}(S) = \exists x_1 \in V \exists x_2 \in V \ldots \exists x_s \in V (\bigwedge_{i=1}^s x_i \notin S ) \wedge (\bigwedge_{1\le i < j \le s} x_i \neq x_j) \]
Both these formulae have length $O(s^2)$.

Now the result follows from the theorem of Courcelle et al.~\cite{CourcelleMR00}.
\end{proof}

In the remainder of this section, we show that \CKC\ does not admit a polynomial kernel when parameterized by rather large parameter combinations. We first show an OR-cross composition~\cite{bodlaender2014kernelization,cygan2015parameterized} from \CVC.
\begin{theorem}\label{thm:nopkvc}
For all $k\ge 2$ \CKC\ does not admit a polynomial kernel when parameterized by the combination of $b$ and the vertex cover number
of the input graph unless \NoKernelAssume.
\end{theorem}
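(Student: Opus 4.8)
The plan is to give an OR‑cross‑composition from \CVC. First I would fix a polynomial equivalence relation that puts all malformed inputs into one class and declares two well‑formed instances $(G_1,\ell_1)$, $(G_2,\ell_2)$ of \CVC\ equivalent iff $|V(G_1)|=|V(G_2)|$ and $\ell_1=\ell_2$; on inputs of bounded size this leaves polynomially many classes, so it suffices to compose $t$ instances $(G_1,\ell),\dots,(G_t,\ell)$ that are all cubic, share a common vertex set $V=\{v_1,\dots,v_n\}$, and share the budget $\ell$ (we may also assume $\ell<n$, since otherwise $V$ itself is a vertex cover and the instance is trivially positive). Since \CVC\ is \NP-hard, producing one \CKC-instance $(G',b,x,k)$ with parameter $b+\mathrm{vc}(G')$ bounded polynomially in $\max_i|(G_i,\ell)|$ and with $(G',b,x,k)$ positive iff some $(G_i,\ell)$ is positive will, via the cross-composition framework of~\cite{bodlaender2014kernelization,cygan2015parameterized}, yield the theorem.

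The construction I would use builds on the shared set $V$. Besides $V$, the graph $G'$ contains: (i) a \emph{rigid block} $W$ — essentially a clique of size $k+b$ to which every vertex of $V$ is joined — so that $W\cup V$ lies in the $(k+b)$-core of $G'$ and hence, by \autoref{obs:high_deg}, no solution of size $\le b$ can remove a vertex of $W\cup V$ from the $k$-core except by deleting it; (ii) for every pair $p=\{a,b\}$ of vertices of $V$ an \emph{edge gadget} attached only to $v_a,v_b$ (and a few rigid vertices of $W$), tuned so that the gadget survives in the $k$-core precisely when neither $v_a$ nor $v_b$ has been deleted — for $k=2$ this is a single vertex of degree $2$, for larger $k$ a $(k-1)$-clique each of whose vertices is additionally joined to $v_a,v_b$; and (iii) for every input graph $G_i$ a \emph{verification gadget} $\nu_i$ wired to the edge gadgets of the pairs in $E(G_i)$ together with $k-1$ rigid vertices of $W$, arranged so that $\nu_i$ stays in the $k$-core iff at least one edge of $G_i$ is still uncovered, i.e.\ iff $S\cap V$ is \emph{not} a vertex cover of $G_i$. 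I would keep $\mathrm{vc}(G')$ polynomial by making the edge gadgets and the $\nu_i$'s attach only to $V$, $W$, and each other, so that $V\cup W\cup\{\text{edge-gadget vertices}\}$ is already a vertex cover; as there are only $O(n^2)$ edge-gadget vertices this gives $\mathrm{vc}(G')=O(n^2)$ and $b=\Theta(\ell)=O(n)$, both polynomial in the size of one \CVC-instance and, crucially, independent of $t$. Finally I would add \emph{dummy} edge gadgets for all the remaining pairs of $V$ so that the number of edge gadgets surviving in the core depends only on $|S\cap V|$, and then choose $x$ so that the $k$-core of $G'-S$ has size at most $x$ exactly when at least one $\nu_i$ has collapsed.

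Granting the gadget behaviour, the two directions are routine bookkeeping. If some $G_i$ has a vertex cover $C$ of size $\le\ell$, deleting $C$ (padded arbitrarily inside $V$ up to size $b$) collapses every edge gadget of $G_i$, hence $\nu_i$, and the chosen value of $x$ makes the resulting $k$-core small enough. Conversely, if $(G',b,x,k)$ is positive, one first shows an optimal solution $S$ may be taken inside $V\cup W$ with $|S|=b$ (deleting edge-gadget vertices or $\nu_i$'s is never better than spending the same budget on $V$, using the rigidity of $W\cup V$ and \autoref{obs:high_deg}), then that the size bound forces some $\nu_i$ out of the core, which by construction means $S\cap V$ — together with the at most $k-1$ vertices of $W$ it may have used, each charged to one uncovered edge — covers $E(G_i)$, so $G_i$ has a vertex cover of size $\le\ell$.

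The step I expect to be the main obstacle is making the verification gadget behave \emph{exactly} as ``collapses iff the instance is covered'', while simultaneously keeping the vertex cover number polynomial. Because $k$-core collapse is an undirected fixpoint process, a naive wiring lets $\nu_i$ and the edge gadgets of $G_i$ prop one another up, so that $\nu_i$ survives even after $S\cap V$ already covers all of $E(G_i)$; the remedy is to make the ``output'' edge from an edge gadget to $\nu_i$ double as one of the gadget's $k$ required neighbours, so that losing either endpoint $v_a,v_b$ destroys the gadget regardless of $\nu_i$ — but this pushes towards per-$(G_i,\text{edge})$ gadgets, of which there are $\Theta(tn)$, which in turn threatens the vertex cover bound. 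Reconciling these two requirements (one-way collapse, yet all the heavy adjacency landing on $V$, $W$, or the $O(n^2)$ pair-gadgets), and separately handling $k=2$, where a degree-$2$ ``AND'' gadget leaves no room for an output wire and the verification must instead be implemented through a theta-like per-instance blob whose only cycle disappears exactly when $E(G_i)$ is covered, is where the real work lies; the remaining tuning of $|W|$, the dummy-gadget count, and $x$ is straightforward arithmetic.
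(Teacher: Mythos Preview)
Your proposal has a real gap, and it is precisely the one you flag yourself without resolving. With shared pair-gadgets and per-instance verification vertices~$\nu_i$, the edges $\nu_i\!-\!(\text{pair gadget})$ inevitably raise the gadget's degree above~$k$, so deleting one endpoint $v_a$ no longer collapses it; the fixpoint lets~$\nu_i$ and the gadgets of~$G_i$ prop each other up. Your suggested remedy---per-$(i,\text{edge})$ gadgets---does give one-way collapse, but now there are $\Theta(tn)$ such gadgets, each needing a neighbour in the cover side, and the vertex cover bound is lost. You have correctly located the tension (one-way collapse versus bounded cover) but not broken it, and the sketch as written does not go through for any $k\ge 2$.

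Two remarks point to the fix the paper uses. First, you impose an unnecessary constraint: in an OR-cross-composition the parameter may depend polynomially on $\log t$, not only on $\max_i|x_i|$, so there is no need to keep $b$ independent of~$t$. Second, having $\Theta(t)$ many gadget vertices is harmless for the vertex cover \emph{provided they form an independent set}. The paper exploits both: it introduces, for each instance~$G_i$, a private set $V_i^E$ of ``edge vertices'' (one per edge of~$G_i$), all of which together form an independent set; and instead of a passive verification vertex~$\nu_i$, it uses an \emph{active selection gadget} of $2\log T$ cliques of size $O(ks)$ wired to the $V_i^E$ by the bits of~$i$. The budget is set to $b=s+2ks\log T$: a solution first spends $2ks\log T$ deletions to kill the $\log T$ selection cliques addressing instance~$i$, which drops every vertex of $V_i^E$ to degree exactly~$k$ (two endpoints in~$V$ plus $k-2$ private anchors in a big rigid clique~$C$), and then spends the remaining~$s$ deletions on a vertex cover of~$G_i$ to collapse all of~$V_i^E$. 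Because the heavy adjacency lives entirely in $V$, the selection cliques, and~$C$---all of size polynomial in $n$ and $\log T$---the vertex cover number is bounded as required, while the $\Theta(tn)$ edge vertices contribute nothing to it. This selection mechanism replaces your $\nu_i$'s entirely and avoids the mutual-propping problem.
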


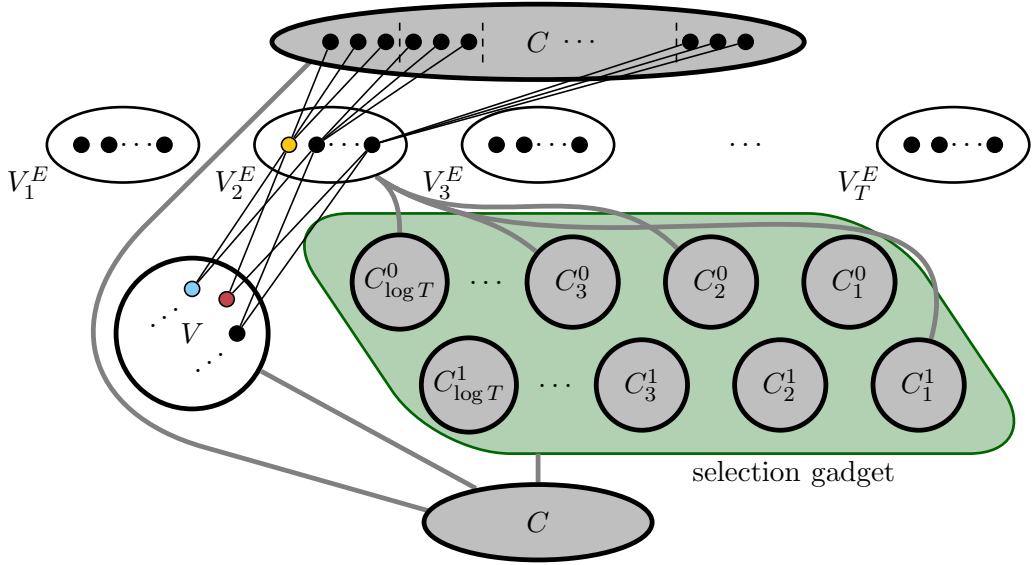
\begin{figure}[t]
\begin{center}
\begin{tikzpicture}[scale=.91, line width=.6pt]
\node[bigclique,minimum width=7cm, minimum height=1cm] (C1) at (0,-.5) {$C$};
\node[smallv] (c11) at (-3,-.5) {};
\node[smallv] (c12) at (-2.6,-.5) {};
\node[smallv] (c13) at (-2.2,-.5) {};
\node[smallv] (c14) at (-1.8,-.5) {};
\node[smallv] (c15) at (-1.4,-.5) {};
\node[smallv] (c16) at (-1,-.5) {};
\node(c18) at (.6,-.5) {$\ldots$};
\node[smallv] (c20) at (2.2,-.5) {};
\node[smallv] (c21) at (2.6,-.5) {};
\node[smallv] (c22) at (3,-.5) {};
\node[eset,label={[label distance=-4pt]190:$V^E_1$}] (E1) at (-6,-2) {};
\node[smallv] (e11) at (-6.6,-2) {};
\node[smallv] (e12) at (-6.2,-2) {};
\node (e13) at (-5.8,-2) {$\ldots$};
\node[smallv] (e14) at (-5.4,-2) {};
\node[eset,label={[label distance=-4pt]190:$V^E_2$}] (E2) at (-3,-2) {};
\node[smallv,fill=lipicsyellow] (e21) at (-3.6,-2) {};
\node[smallv] (e22) at (-3.2,-2) {};
\node (e23) at (-2.8,-2) {$\ldots$};
\node[smallv] (e24) at (-2.4,-2) {};
\node[eset,label={[label distance=-4pt]190:$V^E_3$}] (E3) at (0,-2) {};
\node[smallv] (e31) at (-0.6,-2) {};
\node[smallv] (e32) at (-0.2,-2) {};
\node (e33) at (0.2,-2) {$\ldots$};
\node[smallv] (e34) at (0.6,-2) {};
\node (E4) at (3,-2) {$\ldots$};
\node[eset,label={[label distance=-4pt]190:$V^E_T$}] (ET) at (6,-2) {};
\node[smallv] (e41) at (5.8,-2) {};
\node[smallv] (e42) at (5.4,-2) {};
\node (e43) at (6.2,-2) {$\ldots$};
\node[smallv] (e44) at (6.6,-2) {};
\node[vset,minimum width=2cm, minimum height=2cm, line width=1.8pt] (V) at (-5,-4.75) {$V$};
\node[smallv,fill=ourred!80] (v1) at (-4.5,-4.25) {};
\node[smallv,fill=ourblue] (v2) at (-5,-4.1) {};
\node[smallv] (v3) at (-4.35,-4.75) {};
\node (v4) at (-5.4,-4.4) {\reflectbox{$\ddots$}};
\node (v5) at (-4.75,-5.05) {\reflectbox{$\ddots$}};
\node[sclique] (S00) at (4.5,-4) {$C^0_1$};
\node[sclique] (S01) at (5.5,-5.5) {$C^1_1$};
\node[sclique] (S10) at (2.5,-4) {$C^0_2$};
\node[sclique] (S11) at (3.5,-5.5) {$C^1_2$};
\node[sclique] (S20) at (.5,-4) {$C^0_3$};
\node[sclique] (S21) at (1.5,-5.5) {$C^1_3$};
\node (S30) at (-.75,-4) {$\ldots$};
\node (S31) at (0.25,-5.5) {$\ldots$};
\node[sclique] (ST0) at (-2,-4) {$C^0_{\log T}$};
\node[sclique] (ST1) at (-1,-5.5) {$C^1_{\log T}$};
\node[bigclique,minimum width=3cm, minimum height=1cm] (C2) at (0,-7.5) {$C$};

\path (e21) edge (c11);
\path (e21) edge (c12);
\path (e21) edge (c13);
\path (e22) edge (c14);
\path (e22) edge (c15);
\path (e22) edge (c16);
\path (e24) edge (c20);
\path (e24) edge (c21);
\path (e24) edge (c22);

\path (e21) edge (v1);
\path (e21) edge (v2);
\path (e22) edge (v2);
\path (e22) edge (v3);
\path (e24) edge (v1);
\path (e24) edge (v3);

\path[dashed] (-2,-.8) edge (-2,-.2);
\path[dashed] (-.8,-.8) edge (-.8,-.2);
\path[dashed] (2,-.8) edge (2,-.2);

\begin{pgfonlayer}{bg}
\path (C2) edge[thickgray] (V);
\draw[rounded corners=10mm, fill=ourgreen!30] (-3.83,-3) -- (5,-3) -- (7.33,-6.5) -- (-1.5,-6.5) -- cycle;
\draw[rounded corners=10mm, line width=1pt,draw=ourgreen] (-3.83,-3) -- (5,-3) -- (7.33,-6.5) -- (-1.5,-6.5) -- cycle;
\node (t) at (3.7,-6.8) {selection gadget};
\path (C2) edge[thickgray] (0,-6.5);
\draw[thickgray, rounded corners=30mm] (-1,-7.5) -- (-8,-5.5) -- (c11);


\path (E2) edge[in=75, out=-35,thickgray] (S01);
\path (E2) edge[in=135, out=-35,thickgray] (S10);
\path (E2) edge[in=140, out=-35,thickgray] (S20);
\path (E2) edge[in=90, out=-35,thickgray] (ST0);

%
\end{pgfonlayer}

\end{tikzpicture}
\end{center}
\caption{Illustration of the OR-cross composition from \CVC\ to \CKC\ with $k=5$.
The selection gadget is all the circles contained in the green box.
Every gray edge in this figure means that all vertices in one endpoint of this edge are connected to all vertices in the other endpoint.
Every vertex in $V_i^E$ connects to two endpoints of its corresponding edge in~$G_i$.
For example, the yellow vertex in $V_2^E$ is connected to the blue and the red vertex in $V$, which represents the two endpoints of the corresponding edge in $G_2$.
The big clique $C$ is separated into two parts. 
Every vertex in $V_i^E$ is connected to $k-2$ vertices in the upper part of $C$.
Since $k=5$, the yellow vertex in $V_2^E$ is connected to three vertices in $C$.
Vertices contained in thick outlined vertex sets form a vertex cover.
To keep the picture simple, edges that contain vertices from $V_i^E$ with $i\neq 2$ are not depicted.
}
\label{fig:nopkvc}
\end{figure}

\begin{proof}
We apply an OR-cross composition~\cite{bodlaender2014kernelization,cygan2015parameterized} from the \NP-hard problem \CVC~\cite{GJ79}.
In \CVC, we are given a 3-regular graph $G$ an integer $s$, and the task is to find a vertex subset of size at most $s$ which contains at least one endpoint of each edge of $G$.

We say an instance of \CVC is \emph{malformed} if the string does not represent a pair $(G,s)$, where $G$ is a 3-regular graph and $s$ is a non-negative integer. It is \emph{trivial}, if $s \ge |V(G)|$.
We define the equivalence relation $\mathcal{R}$ as follows: all malformed instances are equivalent, all trivial instances are equivalent and two well-formed non-trivial instances $(G,s)$ and $(G',s')$ are $\mathcal{R}$-equivalent if $|V(G)|=|V(G')|$ and $s=s'$. 
Observe that $\mathcal{R}$ is a polynomial equivalence relation.

Let the input consist of $T$ $\mathcal{R}$-equivalent instances of \CVC.
If the instances are malformed or trivial, we return a constant size no- or yes- instance of \CKC, respectively.
Let $(G_i,s)_{0 \leq i \leq T-1}$ be well-formed non-trivial $\mathcal{R}$-equivalent instances of \CVC.
Since all instances have the same size of the vertex set, we can assume they share the same vertex set~$V=\{v_1,v_2,\dots,v_n\}$ .
We assume $T$ to be a power of 2, as otherwise we can duplicate some instances.
Now we create an instance $(G,b,x,k)$ of \CKC\ for some arbitrary but fixed $k \geq 2$ as follows.
\begin{compactitem}
\item Set $b=s+2ks\log{T}$ and $x=N-N'$, where $N=n+\frac{3}{2}nT+4ks\log{T}+k+s+\frac{3}{2}n(k-2)$ is the number of all vertices in graph $G$ we will construct and $N'=s+2ks\log{T}+\frac{3}{2}n$. 
\item First for every vertex $v_i$ in $V$, create a vertex $v_i$ in $G$.
\item For every edge set $E(G_i)$, create a vertex set $V_i^E$ in $G$, in which every vertex $v_{p,q}$ represents an edge $v_pv_q$ in $E(G_i)$.
Then we have $T$ of these vertex sets and each set has $\frac{3}{2}n$ vertices. 
\item For every edge $v_pv_q$ in $E(G_i)$, add 2 edges $v_{p,q}v_p$ and $v_{p,q}v_q$ in $G$.
\item Now create the selection gadget in $G$. 
It contains $\log{T}$ pairs of cliques $C_i^d$ $(1 \leq i \leq \log{T}, d \in \{0,1\})$, and all of them have the same size of $2ks$.
For every vertex set $V_i^E$, let $i = {(d_{\log T-1}d_{\log T-2} \dots d_0)}_2$ be the binary representation of the index $i$, where $d_j \in \{0, 1\}$ for $0 \leq j \leq \log T-1$ and we add leading zeros so that the length of the representation is exactly $\log{T}$, we add edges between all vertices in $V_i^E$ and all vertices in $C_j^{d_j}$ $(0 \leq j \leq \log T-1)$.
\item Finally we create a clique $C$ with $|C|=k+b+\frac{3}{2}n(k-2)$, which contains two parts of vertices. 
The first part contains $k+b$ vertices and each of them connects to all vertices in~$V$ and all $C_j^d$ with $0 \leq j \leq \log T-1$ and $d \in \{0,1\}$.
In other words, all vertices in $V$ and $C_j^d$ connect to these $k+b$ vertices in $C$.
The second part of $\frac{3}{2}n(k-2)$ vertices are connected to vertices in $V_i^E$ in the following way.
For every vertex $v_{p,q}$ in $V_i^E$, add edges between $v_{p,q}$ and $k-2$ vertices in $C$.
We can make sure that all vertices in the same $V_i^E$ connect to different vertices in $C$.
In other words, every vertex in the second part of $C$ connects to exactly one vertex in every~$V_i^E$.
\end{compactitem}
Notice that the vertex cover number of $G$ is $n+4ks\log{T}+k+b+\frac{3}{2}n(k-2)$. The construction is illustrated in \autoref{fig:nopkvc}.
We now show that at least one instance $(G_i,s)$ is a yes-instance if and only if the instance $(G,b,x,k)$ of \CKC\ constructed above is a yes-instance.

$\Rightarrow:$ If instance $(G_i,s)$ is a yes-instance, which means there is a vertex subset $V^*$ of size~$s$ that covers all edges in $G_i$, then we delete the corresponding $s$ vertices in $G$ and all vertices in $C_j^{d_j}(0 \leq j \leq \log T-1)$, where $i = {(d_{\log T-1} \dots d_0)}_2$ is the binary representation of $i$.
So far, we deleted $s+2ks\log{T}$ vertices, and all vertices in $V_i^E$ will collapse, since they just have at most $k-1$ edges remaining, $k-2$ of which connect to vertices in $C$ and at most one to vertices in $V$.
Therefore, the number of remaining vertices is $x$ and instance $(G,b,x,k)$ is a yes-instance.

$\Leftarrow:$ If instance $(G,b,x,k)$ is a yes-instance, we need to show there is at least one instance which has a vertex cover of size $s$. 
Let $S$ be the deleted vertex subset of size $b$ and let $S'$ be the set of all collapsed vertices.
Since $N'=s+2ks\log{T}+\frac{3}{2}n$, we have $|S'| \geq \frac{3}{2}n$.
In the subgraph of $G$ induced by $V$, $C$ and all $C_i^d$'s all vertices in $V$, $C_i^d$ and $C$ have degree larger than $k+b$. Hence, by \autoref{obs:high_deg} they will not collapse,
all collapsed vertices come from $V_i^E(0 \leq i \leq T-1)$.

Then we show that all collapsed vertices can only come from one single~$V_i^E$ for some~$i$. 
Suppose two vertices $v$ and $v'$ from different sets of $V_i^E$ $(0 \leq i \leq T-1)$ collapse after deleting~$S$, then there is at least one pair of cliques $C_{j_0}^0$ and $C_{j_0}^1$ such that $v$ is connected to all vertices in~$C_{j_0}^0$ and~$v'$ is connected to all vertices in $C_{j_0}^1$.
To make $v_1$ collapse, at least~$2ks\log{T}-(k-1)$ vertices from the corresponding cliques in the selection gadget need to be deleted.
Then to make $v_2$ collapse, at least $2ks-(k-1)$ vertices from $C_i^1$ need to be deleted.
Therefore, at least $2ks\log{T}+2ks-2(k-1)$ vertices need to be deleted, which is strictly larger than $b$.
This means the collapsed vertices come from one single $V_i^E$.
Since $|S'| \geq \frac{3}{2}n$ and $|V_i^E|=\frac{3}{2}n$, we have $S'=V_i^E$ and $S \cap V_i^E = \emptyset$ for some $i$.

Then we consider the vertex set $S$. 
We know that after deleting $S$, all vertices in $V_i^E$ collapse.
Denote $V_I$ the vertex set of all vertices in $C_j^{d_j}$ $(0 \leq j \leq \log T-1)$, where $d_j$ is the binary numbers in the binary representation $i = {(d_{\log T-1} \dots d_0)}_2$.
Since every vertex in $V_I$ is connected to all vertices in $V_i^E$, hence to make $V_i^E$ collapse, it is always better to choose vertices from $V_I$ than any other vertex.
If $S$ does not contain all vertices from $V_I$, we can update $S$ by replacing any $|V_I \setminus S|$ vertices in $S$ with vertices in $V_I \setminus S$.
Then $V_I \subseteq S$.

Suppose there is a vertex $v_{p,q}$ in $V_i^E$ such that both $v_p$ and $v_q$ are not in $S \cap V$, then $S$ contains at least one vertex $v_c$ in $C$ connected to $v_{p,q}$, as otherwise $v_{p,q}$ has degree at least~$k$ and will not collapse.
We update $S$ by replacing $v_c$ with $v_p$.
This will not influence the size of $S$ and more importantly, this will not influence the collapsed set $S'=V_i^E$, since $v_c$ in~$C$ is connected to only one vertex $v_{p,q}$ in $V_i^E$, and $v_{p,q}$ will still collapse under the new $S$.
By updating $S$ in the same way for other vertices in $V_i^E$ not covered by vertices in $S \cap V$, we get a vertex set $S \cap V$ which covers all vertices in $V_i^E$ at least once.
And $|S \cap V| \leq s$, since~$V_I \subseteq S$ and$|V_I|=2ks\log{T}$.
This corresponds to a vertex cover of size $s$ in $G_i$.
\end{proof}
Lastly, we give a OR-cross composition~\cite{bodlaender2014kernelization,cygan2015parameterized} from \CKC\ onto itself. Note that the parameter combination of the following result is incomparable to \autoref{thm:nopkvc}.
\begin{prop}
\label{thm:npk_wrt_bandwidth}
\CKC\ does not admit a polynomial kernel when parameterized by the combination of $b$, $k$, and the bandwidth
of the input graph unless \NoKernelAssume.
\end{prop}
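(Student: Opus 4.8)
The plan is to prove the proposition with an OR-cross-composition from \CKC\ onto itself, which, combined with the \NP-hardness of \CKC, yields the claimed kernelization lower bound. For the polynomial equivalence relation~$\mathcal R$ I would put all malformed strings into one class, all \emph{trivial} instances (those with already-empty $k$-core, or with $x\ge|V(G)|$) into one class, and declare two well-formed non-trivial instances equivalent when they agree on $b$, on $k$, on $x$, and on the number of vertices of their $k$-core; this is polynomial-time decidable and has $\poly(m)$ classes on length-$\le m$ inputs. Given $T$ $\mathcal R$-equivalent well-formed non-trivial instances $(G_i,b,x,k)$ I would first replace each $G_i$ by its $k$-core — a short argument shows this does not change the answer — after which every $G_i$ has minimum degree $\ge k$ and equals its own $k$-core, and all $G_i$ have the same order $n$; also $x<n$ and $k\le n$.

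The composed graph would be the \emph{disjoint union} $G^{*}=\biguplus_i \hat A_i$. Disjointness is what makes the parameter small: a disjoint union is laid out block by block, so $\operatorname{bw}(G^{*})$ is just the largest block's bandwidth, which will be $\poly(n)$. Each block $\hat A_i$ is built from $G_i$ by attaching a \emph{detector} designed to collapse exactly when the deleted set solves $(G_i,b,x,k)$, plus a large \emph{payoff} independent set that collapses exactly when the detector does. Specifically $\hat A_i$ would consist of: (i) the copy of $G_i$; (ii) pairwise-disjoint \emph{protected cliques} of size $k+b+1$, whose vertices lie in the $(k+b)$-core of $G^{*}$ and therefore, by \cref{obs:high_deg}, are never collapsed by $\le b$ deletions — these serve as the ``always-present'' vertices inside all gadgets; (iii) for each $v\in V(G_i)$ a \emph{reader gadget} (a buffer adjacent to $v$, to a probe vertex, and to $k-2$ protected vertices, the probe in turn adjacent to that buffer and $k-1$ protected vertices), rigged so that the probe is in the $k$-core of $\hat A_i\setminus S$ iff $v$ is in the $k$-core of $G_i\setminus S$, while the new edges never change the $k$-core of the copy of $G_i$ (the buffer can be in the core only once $v$ is, so its edge to $v$ never helps $v$ bootstrap in); (iv) a bounded-degree monotone \emph{threshold-$(x+1)$ sub-circuit} assembled from AND-, OR- and fan-out gadgets of the type used in Mathieson's reduction, with the probes as inputs and a single output $o^{*}$ that is in the core iff more than $x$ probes are, i.e.\ iff $S$ fails to solve $(G_i,b,x,k)$; and (v) a payoff independent set $R_i$ of $\ell$ vertices, $\ell$ a large enough polynomial in $n$ (order $bn$ suffices), wired to $o^{*}$ through fan-out gadgets so that $R_i$ collapses once $o^{*}$ does, yet — because of the one-directional wiring, the protected cliques, and keeping more than $b$ redundant copies of each critical gadget vertex — $R_i$ can be collapsed by $\le b$ deletions \emph{only} via $o^{*}$. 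I would then take $k':=k$, $b':=b$, and $x':=(T-1)\,|\hat A|+\delta$, where $|\hat A|$ is the common block size and $\delta$ is an instance-independent upper bound (of order $x+\poly(n)$) on the $k$-core of a block in which a solution has been deleted.

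Correctness is then a bookkeeping exercise. If some $(G_i,b,x,k)$ is a yes-instance, delete one of its solutions inside the copy of $G_i$; the $k$-core of $G_i\setminus S$ then has $\le x$ vertices, so $o^{*}$, the whole detector, and $R_i$ collapse, and the $k'$-core of $G^{*}\setminus S$ has size $\le\delta+(T-1)|\hat A|=x'$. Conversely, if some $S$ with $|S|\le b$ pushes the $k'$-core of $G^{*}\setminus S$ down to $\le x'$, then every block $\hat A_j$ whose copy of $G_j$ is not \emph{completely} solved by $S$ retains all of $R_j$ in the core (even a partially attacked $G_j$, whose $k$-core has shrunk but still exceeds $x$, keeps $R_j$, by the robustness of the gadget), and since $\ell$ dwarfs $bn$, distributing the $b$ deletions over several blocks cannot recoup the deficit; hence at least one $R_i$ must collapse entirely, forcing $S\cap V(G_i)$ to be a solution of $(G_i,b,x,k)$. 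Finally $b'+k'\le 2n$ and $\operatorname{bw}(G^{*})=\poly(n)$, the construction is polynomial-time, so the output parameter $b'+k'+\operatorname{bw}(G^{*})$ is $\poly(n)=\poly(\max_i|x_i|+\log T)$, and the lower bound follows from the standard cross-composition framework.

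The main obstacle will be making the detector gadget behave exactly as advertised: it must (a) read, for each $v\in V(G_i)$, whether $v$ survives in the $k$-core of $G_i\setminus S$ without the reading edges perturbing that $k$-core, (b) combine these $n$ bits into one threshold-$(x+1)$ signal, (c) broadcast that signal to all $\ell$ payoff vertices without creating cyclic mutual support that would keep the payoff alive, and (d) be immune both to sub-budget attacks and to directly deleting gadget or payoff vertices. This is exactly the kind of bounded-degree circuit-simulation that is error-prone for \DVD — the very errors exhibited in \cref{sec:hardness} are of this flavour — so the real work is a careful verification of the gadget invariants; the parameter bookkeeping, and the fact that disjointness keeps the bandwidth small, are easy by comparison. (The case $x\ge k$ — where the sub-circuit genuinely has to count, not merely test emptiness — needs the most care; when $x\le k$ one may instead first pass to the equivalent instance $(G_i,b,0,k)$ and use a far simpler emptiness detector, $o^{*}$ being attached to the probes so as to survive iff at least one of them does.)
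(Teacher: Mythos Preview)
Your OR-cross-composition framework and the use of a disjoint union to keep the bandwidth polynomial in $n$ are both sound, but the reader gadget as described is broken, and the error is precisely the kind of circular reasoning you warn against in your final paragraph. You write ``the buffer can be in the core only once $v$ is, so its edge to $v$ never helps $v$ bootstrap in''; the premise is true but the conclusion does not follow, because the $k$-core is a fixed point rather than the result of a one-pass activation, and mutual dependence between $v$ and its buffer is entirely compatible with both sitting in the core together. Concretely, take $k=3$, $G_i=K_4$, $b=1$, $x=0$, and $S=\{v_1\}$. The $3$-core of $G_i\setminus S$ is empty, so $(G_i,1,0,3)$ is a yes-instance. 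But in $\hat A_i\setminus S$ each remaining $v_j$ has its two $K_3$-neighbours \emph{plus its buffer}, giving it degree~$3$; the buffer has $v_j$, its probe, and one protected vertex; the probe has the buffer and two protected vertices; so this whole ensemble has minimum degree~$3$ and lies in the $3$-core of $\hat A_i\setminus S$. All three probes report ``survives'', your threshold detector outputs ``more than $x=0$ vertices in the core'', and the payoff $R_i$ is never released---the forward direction of the composition fails. More generally, attaching one buffer to every $v\in V(G_i)$ turns the test ``$v$ has $\ge k$ surviving $G_i$-neighbours'' into ``$v$ has $\ge k-1$ surviving $G_i$-neighbours'', which is a different problem; I do not see a local repair, since any gadget adjacent to $v$ raises $v$'s degree and invites the same feedback.

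The paper's proof sidesteps detection altogether. It attaches to each $G_i$ a clique $C_i$ of size $n^2$ (and a second clique $C_i'$ complete to $C_i$, to keep $C_i$ itself in the $(k+b')$-core), and sets $b'=b+n^2$, $k'=k$, and $x'=(T-1)(n+2n^2)+n^2+x$. The extra $n^2$ budget is spent deleting one entire $C_i$, which simultaneously selects instance~$i$ and restores the copy of $G_i$ to its native degrees; the remaining budget~$b$ then solves $(G_i,b,x,k)$. Conversely, collapsing any vertex of $G_j$ forces $|S\cap C_j|\ge n^2-k+1$, so collapsing vertices in two distinct blocks already exceeds $b'$. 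No reading, no threshold circuit, and the bandwidth is $O(n^2)$. The parameter bound only requires $b'\le\poly(n)$, so there is no gain in insisting on $b'=b$; that insistence is precisely what forces you into the delicate, and as written incorrect, detector construction.
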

\begin{proof}
We apply an OR-cross composition~\cite{bodlaender2014kernelization,cygan2015parameterized} from \CKC\ to \CKC.

We say an instance of \CVC is \emph{malformed} if the string does not represent a quadruple $(G,b,x,k)$, where $G$ is a graph and $b$, $x$, and $k$ are non-negative integers. It is \emph{trivial}, if $b \ge |V(G)|$, $x \ge |V(G)|$, or $k \ge |V(G)|$.
We define the equivalence relation $\mathcal{R}$ as follows: all malformed instances are equivalent, all trivial instances are equivalent and two well-formed non-trivial instances $(G_1,b_1,x_1,k_1)$ and $(G_2,b_2,x_2,k_2)$ are $\mathcal{R}$-equivalent if $|V(G_1)|=|V(G_2)|$, $b_1=b_2$, $x_1=x_2$ and $k_1=k_2$. 
Observe that $\mathcal{R}$ is a polynomial equivalence relation.

Let the input consist of $T$ $\mathcal{R}$-equivalent instances of \CKC.
If the instances are malformed or trivial, we return a constant size no- or yes- instance of \CKC, respectively.
Let $(G_i,b_i,x_i,k_i)_{1 \leq i \leq T}$ be well-formed non-trivial $\mathcal{R}$-equivalent instances of \CKC.
Since all instances have the same $b_i,x_i,k_i$ and all $G_i$ with $1 \leq i \leq T$ have the same size of vertex set, we denote $b=b_i,x=x_i,k=k_i$ and $n=|V(G_i)|$.
If $n \le 3$, then we solve all the instances in $O(T)$ time and output a constant-size instance with the appropriate answer.

Now we create an instance $(G,b',x',k')$ of \CKC.
We start by making $G$ a disjoint union of all $G_i$.
For each $i \in \{1, \ldots, T\}$, we add to $G$ two clique $C_i$ and $C_i'$, each of size $n^2$, and add all edges between $G_i$ and $C_i$ and between $C_i$ and $C_i'$.
Note that $G$ contains $T(n+2n^2)$ vertices in total. 
Set $b'=b+n^2$, $k'=k$ and $x'=(n+2n^2)(T-1)+n^2+x$. 

Notice that the bandwidth of $G$ is upper bounded by $2n^2$.
We now show that at least one instance $(G_i,b,x,k)$ is a yes-instance if and only if the instance $(G,b',x',k')$ of \CKC\ constructed above is a yes-instance.

$\Rightarrow:$ If instance $(G_i,b,x,k)$ is a yes-instance, then there is a subset $S \subseteq V(G_i)$ such that the $k$-core of $G_i-S$ has size at most $x$.
If we delete all vertices from $S$ and the whole clique $C_i$ in $G$, then at most $x$ vertices in $G_i$ remain.
Therefore the $k$-core of $G-S-V(C_i)$ has size at most $x'$.

$\Leftarrow:$ If instance $(G,b',x',k')$ is a yes-instance, then let $S$ be the set of deleted vertices of size at most $b'$ and let $S'$ be the set of all collapsed vertices.
Since the degree of vertices in $C_i$ and $C_i'$ is at least $2n^2-1$ which is more than $2n \ge b+k$ for $n \ge 3$, these vertices will never collapse by \autoref{obs:high_deg}.
So $S'$ only contains vertices from $\bigcup_{i=1}^T V(G_i)$.
Furthermore, it is impossible for two vertices $v_i$ and $v_j$ from different sets $V(G_i)$ and $V(G_j)$ to collapse after deleting~$S$.
Indeed, suppose they do collapse, then $|S \cap C_i| \geq n^2-k$ and $|S \cap C_j| \geq n^2-k$, which means $|S| \geq 2n^2-2k \ge 2n^2-2n > n^2+n \ge n^2+b=b'$, where the middle inequality follows from $n \ge 3$.
Therefore $S'$ only contains vertices from a single graph, say $G_i$.

Since $G$ contains $T(n+2n^2)$ vertices in total, $x'=(n+2n^2)(T-1)+n^2+x$, and $b'=b+n^2$, we have $|S'| \geq n-b-x$.
To make vertices in $G_i$ collapse, it is always better to choose vertices from $C_i$ into $S$, as vertices from $C_i$ connect to all vertices in $G_i$.
Thus we can assume $C_i \subseteq S$.
Then $S \cap V(G_i) \le b$.
If $|V(G_i) \setminus (S \cup S')| > x$, then we can remove vertices from $S \setminus (C_i \cup G_i)$ and add vertices from $G_i \setminus (S \cup S')$ to $S$ till $S \cap V(G_i) = b$.
This will not influence the collapsed vertices in $S'$.
Then we get a vertex set $S_i=S \cap V(G_i)$, and the $k$-core of $G_i-S_i$ has size at most $x$.
\end{proof}

\section{Conclusion}
Our results highlight a dichotomy in the computational complexity of \CKC\ for $k\le 2$ and $k\ge 3$. Along the way, we correct some inaccuracies in the literature concerning the parameterized complexity of \CKC\ with $k=3$ and $x=0$ and give a simple single exponential linear time parameterized algorithm for \textsc{Feedback Vertex Set}. We further investigate the parameterized complexity with respect to several structural parameters of the input graph. As a highlight we show that \CKC\ does not admit polynomial kernels for rather large parameter combinations. We leave the complexity of \CKC\ when parameterized by the cliquewidth of the input graph open.

\paragraph*{Acknowledgments.}
This research was initiated at the annual research retreat of the Algorithmics and Computational Complexity (AKT) group of TU~Berlin, held in Darlingerode, Germany, from March~19 till March~23, 2018. The authors would like to thank Anne-Sophie Himmel for initial discussions leading to the results in this paper.
\bibliography{bib}


\end{document}